\definecolor{cite_blue}{rgb}{0.0, 0.1, 0.3}
    \newtcbox{\feedback}{nobeforeafter,colframe=black,colback=white,boxrule=0.5pt,arc=2pt,
      boxsep=0pt,left=2pt,right=2pt,top=2pt,bottom=2pt,tcbox raise base}
    \newtheorem{theorem}{Theorem}[section]
    \newtheorem{assumption}{Assumption}
    \newtheorem{lemma}{Lemma}[section]
    \newtheorem{corollary}{Corollary}[section]
    \theoremstyle{definition}
    \newtheorem{remark}{Remark}[section]
\newlength\ubwidth
\newenvironment{namedassumption}[1] {
    
    \innercustomass 
} {\endinnercustomass}
\definecolor{gold}{RGB}{162,132 ,72}
\definecolor{green}{RGB}{63, 157, 0}
\definecolor{pink}{RGB}{255,0,222}
\definecolor{purple}{RGB}{145,0,255}
\definecolor{steelblue}{RGB}{70,130,180}
\definecolor{navy}{RGB}{0,0,128}
\definecolor{red}{rgb}{1, 0, 0}
\definecolor{orange}{RGB}{240,124,8}
\definecolor{darkgreen}{rgb}{0.0, 0.2, 0.13}
\definecolor{darkblue}{rgb}{0.0, 0.0, 0.55}
\definecolor{cspurple}{RGB}{63, 44, 110}
\definecolor{cspink}{RGB}{194, 15, 90}
\definecolor{csblue}{HTML}{0170b9}
\definecolor{amazonorange}{RGB}{255, 153, 0}
\definecolor{amazonblack}{RGB}{9, 26, 34}
\definecolor{amazongrey}{RGB}{35, 47, 62}
\definecolor{emoryblue}{RGB}{1, 33, 105}
\definecolor{emoryyellow}{RGB}{242, 169, 0}
\definecolor{emorygold}{RGB}{181, 133, 0}
\definecolor{emorymgold}{RGB}{132, 117, 78}
\definecolor{emorylblue}{RGB}{0, 125, 186}
\definecolor{emorymblue}{RGB}{0, 51, 160}
\definecolor{yaleblue}{HTML}{00356b}
\definecolor{yalelightblue}{HTML}{63aaff}
 \definecolor{yalemblue}{HTML}{286dc0}
\definecolor{cslightpink1}{HTML}{f0c3d6}
\definecolor{lblue}{HTML}{f0f8ff}
\tikzset{
    -Latex,auto,node distance =1 cm and 1 cm,semithick,
    state/.style ={ellipse, draw, minimum width = 0.7 cm},
    point/.style = {circle, draw, inner sep=0.04cm,fill,node contents={}},
    bidirected/.style={Latex-Latex,dashed},
    el/.style = {inner sep=2pt, align=left, sloped}
}
\renewcommand{\P}{\mathop{}\!\mathbb{P}}
\newcommand{\E}{\mathop{}\!\mathbb{E}}
\newcommand{\Cov}{\mathop{}\!\textnormal{Cov}}
\newcommand{\aCov}{\mathop{}\!\textnormal{aCov}}
\newcommand{\Var}{\mathop{}\!\textnormal{Var}}
\DeclareMathOperator*{\argmin}{arg\,min}
\renewcommand{\j}{\mathbf{j}}
\newcommand*{\indep}{%
  \mathbin{%
    \mathpalette{\@indep}{}%
  }%
}
\newcommand*{\nindep}{%
  \mathbin{
    \mathpalette{\@indep}{/}%
  }%
}
\newcommand*{\@indep}[2]{%
  \sbox0{$#1\perp\m@th$}
  \sbox2{$#1=$}
  \sbox4{$#1\vcenter{}$}
  \rlap{\copy0}
  \dimen@=\dimexpr\ht2-\ht4-.2pt\relax
  \kern\dimen@
  \ifx\\#2\\%
  \else
    \hbox to \wd2{\hss$#1#2\m@th$\hss}%
    \kern-\wd2 %
  \fi
  \kern\dimen@
  \copy0 
}
\newcommand*{\addFileDependency}[1]{
  \typeout{(#1)}
  \@addtofilelist{#1}
  \IfFileExists{#1}{}{\typeout{No file #1.}}
}
\newcommand{\tprime}{t_{\text{pre}}}
\newcommand{\tdoubleprime}{t_{\text{pre}}'}
\numberwithin{equation}{section}
\begin{document}

\setlength{\abovedisplayskip}{4pt}
\setlength{\belowdisplayskip}{3pt}
\setlength{\abovedisplayshortskip}{4pt}
\setlength{\belowdisplayshortskip}{3pt}

\title{Efficient Difference-in-Differences and Event Study Estimators\footnote{Authors are listed in alphabetical order. We are grateful to Dmitry Arkhangelsky, Gregorio Caetano, Kevin Chen, Andrew Goodman-Bacon, Yingying Li, Jann Spiess, Liyang Sun, Sarah Vicol, Davide Viviano, Stefan Wager, Jeff Wooldridge, and the audience at several conferences (ESAM2024, CES2025, SETA2025,  STATA Texas Empirical Micro Conference) and workshops (NYFed, Stanford, UPenn, BU, Emory, GSM/Peking Univ, City Univ of Hong Kong, Rice, U Carlos III de Madrid) for their helpful comments.}} 
\author{Xiaohong Chen\thanks{Yale University. E-mail
		xiaohong.chen@yale.edu} \and
        Pedro H. C. Sant'Anna\thanks{Emory University.
		E-mail: pedro.santanna@emory.edu} \and
  Haitian Xie\thanks{Peking University. E-mail: xht@gsm.pku.edu.cn}
  }
\date{\today}

\maketitle
\thispagestyle{empty}
\begin{abstract}

This paper investigates efficient Difference-in-Differences (DiD) and Event Study (ES) estimation using short panel data sets within the heterogeneous treatment effect framework, free from parametric functional form assumptions and allowing for variation in treatment timing. We provide an equivalent characterization of the DiD potential outcome model using sequential conditional moment restrictions on observables, which shows that the DiD identification assumptions typically imply nonparametric overidentification restrictions. We derive the semiparametric efficient influence function (EIF) in closed form for DiD and ES causal parameters under commonly imposed parallel trends assumptions. The EIF is automatically Neyman orthogonal and yields the smallest variance among all asymptotically normal, regular estimators of the DiD and ES parameters. Leveraging the EIF, we propose simple-to-compute efficient estimators. Our results highlight how to optimally explore different pre-treatment periods and comparison groups to obtain the tightest (asymptotic) confidence intervals, offering practical tools for improving inference in modern DiD and ES applications even in small samples. Calibrated simulations and an empirical application demonstrate substantial precision gains of our efficient estimators in finite samples. 

\end{abstract}

\newpage

\setcounter{page}{1}

\section{Introduction}

Difference-in-Differences (DiD) and Event Study (ES) designs are among the most widely used empirical strategies in economics and related fields. For instance, recent data indicates that over 30\% of 2024 NBER applied microeconomics working papers mention DiD or ES---more than any other causal inference method,\footnote{The NBER data used in \citet{Goldsmith-Pinkham2024} is up to May 2024.} and their popularity is also rapidly expanding in empirical macroeconomics and finance \citep{Goldsmith-Pinkham2024}. Although recent methodological advances have improved the robustness of DiD estimators to treatment effect heterogeneity,\footnote{See \citet{Roth2023_JoE_Review}, \citet{deChaisemartin2022b_survey} and \citet{Baker_etal_DiD_practioner_guide_2025} for recent reviews of DiD advances and a DiD practitioner's guide. See also \citet{Baker2022} for an overview with application in finance.} several empirically relevant econometric questions remain underexplored.

First, modern DiD and ES estimators are designed to accommodate rich treatment effect 
heterogeneity using short panel data.
Unfortunately, most of them can have wide confidence intervals in applications with limited sample size.\footnote{See, e.g., \citet{Chiu2023_replicationDiD}, \citet{Weiss2024_DiD_power}, and \citet{Lal2025_TWFE} for a discussion.} Whether the flexibility of modern DiD necessarily entails a meaningful loss in power compared to more restrictive estimators, such as two-way fixed effects (TWFE), remains an open question. Second, most heterogeneity-robust DiD implementations in short panels implicitly weight all pre-treatment periods equally or discard them entirely. These assumptions are generally unsupported by theory, data or subject-specific knowledge, and often yield imprecise estimators. Third, although it has become a common empirical practice to report several DiD and ES estimators,\footnote{See, e.g., \citet{Braghieri2022_facebook}, \citet{Hansen_Wingender_2023_AERI}, \citet{Arold_2024_QJE}, and \citet{Mast2024_Restat}.} there is limited formal guidance on how to compare them, especially so when applied researchers are unwilling to impose strong, arbitrary restrictions on temporal dependence or treatment effect heterogeneity. Sometimes, even basic questions such as whether different DiD estimators target the same causal parameters or if they rely on similar identification assumptions can be difficult to answer.

In this paper, we address these challenges by developing a unified framework for semiparametrically efficient estimation of DiD and ES causal parameters under point-identification assumptions---namely, parallel trends and no anticipation. We allow for several commonly-used DiD designs, including designs with a single treatment date and staggered treatment adoption, when covariates may or may not be important for identification. 
In particular, we (a) characterize the DiD potential outcome model in terms of equivalent restrictions on the joint distribution of observables, (b) derive the semiparametric efficient influence functions (EIF) for DiD and ES parameters and the efficient (i.e., smallest) variance bounds, (c) provide closed-form root-n asymptotically normal and efficient estimators as well as consistent estimators of the variances, and (d) show that semiparametric efficiency requires non-uniform weighting of pre-treatment periods and untreated cohorts. In addition, we highlight that the gains in efficiency/power can be empirically relevant in finite samples, even in designs without variation in treatment timing and covariates. 

We start by providing an observationally equivalent characterization of the DiD potential outcome identification assumptions in terms of restrictions on the joint distribution of observables. This characterization links modern DiD designs to econometric models of sequential conditional moment restrictions with unknown functions of observables \citep{Ai_Chen_2012}.
It clarifies the informational content embedded in modern DiD identification assumptions, and enables our subsequent derivation of the semiparametric efficiency results for DiD models.
Importantly, we show that DiD models are typically nonparametrically overidentified in the sense of \citet{Chen_Santos_2018_ECMA}. To the best of our knowledge, this is the first result to establish such nonparametric overidentification in a causal inference setting without imposing parametric or semiparametric functional-form restrictions on nuisance components.


Next, we derive the semiparametric efficient influence functions for DiD and ES parameters in closed-forms, under various parallel trends assumptions commonly employed in the DiD literature, including settings with and without variation in treatment timing and cases where parallel trends hold only after conditioning on observed covariates. By definition, an EIF for a causal parameter has mean zero and its second moment is the semiparametric efficient variance bound, which is the smallest asymptotic variance across all possible root-$n$ consistent and asymptotically normal estimators under the DiD design identification conditions. Moreover, we provide closed-form estimators based on the EIFs to achieve the efficient variance bounds. To our knowledge, these are the first semiparametric efficiency results in DiD setups with multiple periods and without parametric functional-form assumptions. Importantly, these efficiency results only explore the DiD identification conditions and do not involve additional hard-to-justify restrictions on treatment effect heterogeneity (e.g., homoskedasticity) or the serial correlation of the outcomes.

Our efficient estimators for the DiD and ES are computed using sample EIFs, which are automatically Neyman orthogonal moments.\footnote{For nonparametrically overidentified models \citep{Chen_Santos_2018_ECMA}, not all Neyman orthogonal moment-based doubly robust causal estimators are semiparametric efficient. Only the ones corresponding to the EIFs are semiparametric efficient.} The efficient estimators aggregate over multiple comparison groups and pre-treatment periods using analytically derived optimal weights that are proportional to the (conditional) covariances of outcome changes.

According to our semiparametric efficient variance bounds, DiD and ES estimators that treat all pre-treatment periods and comparison groups as equally informative or only use a particular comparison group and the last pre-treatment period as baseline, are generally inefficient, even in setups with a single treatment date. To achieve semiparametric efficiency, it is important to take into account that different untreated groups and pre-treatment periods are not equally informative, and that the best way to aggregate these information to gain precision involves constructing weights that depend on (conditional) covariance terms related to the outcome changes from different pre-treatment periods $\tprime$ to a given post-treatment period $t_{\text{post}}$ and different comparison groups. We introduce visualization tools that clarify how different baseline periods and comparison groups contribute to the estimator's efficiency, illustrating the structure implied by the semiparametric theory.  Importantly, we stress that the geometry of these weights arises from our semiparametric efficiency results.\footnote{For large $n$ and large $T$ panel data with single treatment date design, \citet{Arkhangelsky2021_SDiD} synthetic DiD exploits information from pre-treatment periods via a researcher-specified weighting criterion. Their weight is not for semiparametric efficiency consideration, however. See Remark \ref{rem:sdid} for details.}

Our new semiparametric efficient variance bounds offer a rigorous benchmark for evaluating a broad class of DiD and ES estimators, including TWFE and those developed by \citet{deChaisemartin2020_AER}, \citet{Callaway_Santanna_2021}, \citet{Sun2021}, \citet{Borusyak2023}, \citet{Wooldridge2021a}, \citet{Gardner2021}, among others. Our proposed EIF-based estimators attain the efficient bounds, thereby dominating the existing estimators in terms of asymptotic efficiency. Simulation studies confirm the theoretical rankings.

 We illustrate the practical relevance of our framework through empirically calibrated simulations and a real-data application. Our simulations build on DiD designs from \citet{Arkhangelsky2021_SDiD} (for single-treatment) and \citet{Baker2022} (for staggered-adoption), calibrated to CPS and Compustat data, respectively. In both settings, our semiparametric efficient estimators deliver markedly lower root mean square error and narrower confidence intervals than alternative popular DiD estimators---often exceeding 40\% gains in precision, with no loss in bias performance. We further revisit \citet{Dobkin_etal_2018_AER}’s real-data analysis of hospitalization and out-of-pocket medical expenses, using the data compiled by \citet{Sun2021}. In this application, alternative estimators would often require sample sizes at least 30\% larger to achieve precision comparable to ours. These findings reinforce the central insight of the paper: aligning estimation strategies with the informational structure of the DiD model identification assumptions can yield substantial improvements in precision across a range of DiD settings.

The rest of the paper is as follows. Section \ref{sec:Setup} formally introduces the general DiD designs and the parameters of interest. Section \ref{sec:bound} derives the semiparametric EIFs and the efficient variance bounds. Section \ref{sec:estimation_inference} proposes our semiparametric efficient estimation. Sections \ref{sec:Simulations} and \ref{sec:application} present the simulation studies and real-data application, respectively. Section \ref{sec:conclusion} concludes with discussions of future extensions.
Appendix \ref{sec:hausman-tests} introduces Hausman-type as well as incremental Sargan tests for the overidentification restrictions. Appendix \ref{sec:Ext1} extends the semiparametric efficiency results to an instrumented DiD setting. Appendix \ref{sec:prof.ident} presents the proofs of the theoretical results.
 
\textbf{Related Literature:} This paper contributes to the popular DiD literature that accommodates treatment effect heterogeneity (see the references mentioned above). We provide the first semiparametric efficiency bounds for DiD and ES estimators in settings with multiple periods and varying forms of the parallel trends assumption, including covariate-conditional and staggered adoption designs. 
There are some recent work on efficient estimation under some extra conditions in DiD literature. Relative to \citet{Santanna2020}, who studies a just-identified two-period, two-group DiD model, we generalize to more empirically relevant overidentified settings with multiple periods and staggered adoption. In contrast to \citet{Borusyak2023}, \citet{Wooldridge2021a}, and \citet{Harmon_2023_Efficiency}, our efficiency results do not rely on auxiliary assumptions such as functional form, homoskedasticity, absence of serial correlation, or other time-series dependence restrictions. These assumptions are convenient but typically hard to justify on theoretical or empirical grounds, particularly in settings with heterogeneous treatment effects and dynamic responses.

Our semiparametric efficiency result builds on \citet{Ai_Chen_2012}, as we characterize the DiD potential outcome model in terms of sequential conditional moment restrictions with unknown functions of observables.\footnote{\citet{Chamberlain_1992_ECMA_efficiency} and \citet{Ai_Chen_2003} contain unknown functions of observables, but without sequential moments.} 
Without conditioning variables, the DiD potential outcome model can be equivalently transformed into Hansen's overidentified unconditional generalized moment restrictions \citep{Hansen1982_GMM}, and the semiparametric efficient variance bound of \citet{Chamberlain_1987_JOE} would also be applicable.
Even in the case without covariates, our closed-form EIF expressions lead to simpler estimation procedure and provide new insights into how an efficient DiD estimator should weight different pre-treatment and comparison groups---to the best of our knowledge, these insights are new to the literature.


\section{Framework, causal parameters, and estimands}\label{sec:Setup}
We start by describing our general setup. We consider a setting with $T$ periods indexed by $t \in \mathcal{T}= \{1,2,\dots,T\}$. At any given time $t>1$, units can start receiving a binary treatment, and different units can start receiving treatment at different points. We focus on setups where treatment is an absorbing state, so once a unit is treated, it remains treated until $t=T$. Let $D_{i,t}$ be an indicator for whether unit $i$ is treated by period $t$ and let $G_i = \min \{ t: D_{i,t} =1\}$ be a ``group'' variable (or cohort) that indicates the first period at which unit $i$ has received treatment. If unit $i$ does not receive treatment by $t=T$, we set $G_i = \infty$ and refer to these units as ``never treated''. Since treatment is an absorbing state, $D_{i,t} = 1$ for all $t \geq G_i$. Let $\mathcal{G}$ denote the support of $G$ and let $\mathcal{G}_{\text{trt}} = \mathcal{G}\setminus \{\infty\}$ be the support of $G$ among ``eventually-treated'' units. Without loss of generality, we assume that a ``never-treated'' group always exists. If all units are eventually treated, we drop all the data from when the last cohort is treated, so the last-treated cohort becomes the ``never-treated'' cohort, and $T$ here denotes the number of available periods in the subset of the data that we will use in our analysis. Finally, we also assume that a vector of pre-treatment covariates $X_{i}$, whose support is denoted by $\mathcal{X}\subseteq \mathbb{R}^d$ is available. Throughout the paper, we adopt a ``large-$n$, fixed-$T$'' short panel data regime, as is typical in most DiD setups.

In this paper, we adopt the Neyman-Rubin potential outcome framework, indexing each potential outcome by the entire treatment path. Let $\textbf{0}_s$ and $\textbf{1}_s$ denote $s$-dimensional vectors of zeros and ones, respectively. We denote the potential outcome of unit $i$ in period $t$ if they were first treated at time $g$ by $Y_{i,t}(\textbf{0}_{g-1}, \textbf{1}_{T-g+1})$, and denote by $Y_{i,t}(\textbf{0}_{T})$ their ``never-treated'' potential outcome. To simplify notation, we can explore that treatment is an absorbing state and index potential outcomes by treatment starting time: $Y_{i,t}(g) = Y_{i,t}(\textbf{0}_{g-1}, \textbf{1}_{T-g+1})$ and $Y_{i,t}(\infty) = Y_{i,t}(\textbf{0}_{T})$. In practice, though, we observe $Y_{i,t} = \sum_{g \in \mathcal{G}} 1\{G_i = g\} Y_{i,t}(g)$, where $1\{A\}$ is the indicator function that takes value one if $A$ is true, and zero otherwise. Henceforth, we write $G_g = 1\{G = g\}$.

We assume that we observe a random sample of $(Y_{t=1},\dots,Y_{t=T}, X', G)'$.
\begin{namedassumption}{S}[Random Sampling] \label{asm:sampling_panel} $\{(Y_{i,t=1},\dots,Y_{i,t=T}, X_i', G_i)'\}_{i=1}^n$ is a random sample from $(Y_{t=1},\dots,Y_{t=T}, X', G)'$.
\end{namedassumption}

We also impose overlap assumptions to avoid ``extrapolation'' and ensure that our identification arguments are nonparametrically valid.
\begin{namedassumption}{O}[Overlap] \label{asm:overlap} For each $g\in \mathcal{G}$, 
$\E[G_g|X] \in (0,1)$ almost surely (a.s.).
\end{namedassumption}

We maintain the following no-anticipation condition throughout the paper. 
\begin{namedassumption}{NA}[No-anticipation] \label{asm:no anticipation} For every $g \in \mathcal{G}_{\text{trt}}$, and every pre-treatment periods $t<g$, $\E[Y_{i,t}(g)|G=g,X] = \E[Y_{i,t}(\infty)|G=g,X]$ almost surely.
\end{namedassumption}
Assumption \ref{asm:no anticipation} states that, on average, units do not change their behavior before treatment starts, which essentially requires that the eventually treated units do not anticipate the treatment taking place. If treatment is announced in advance and units are expected to change their behavior due to it, this assumption suggests that researchers define the treatment date as the time of treatment announcements and not the time of treatment take-up. See \citet{Malani2015} for a discussion.

\subsection{Causal parameters of interest}

Many counterfactuals may be of interest in our context with potential variation in treatment timing. One particular family of causal parameters that have been popular in empirical work is the $ATT(g,t)$ defined by
\begin{equation}\label{eqn:ATT(g,t)}
ATT\left( g,t\right) :=\E\left[ Y_{t}\left( g\right) -Y_{t}\left(\infty\right) |G=g\right] ,~\text{for}~t\geq g.
\end{equation}
Note that $ATT\left( g,t\right)$ measures the Average Treatment Effect at time $t$ of starting treatment at time $g$ versus not starting it, among the units that indeed started treatment at time $g$. Let $CATT(g,t,X) := \E\left[ Y_t(g) - Y_t(\infty) | G=g, X\right]$ denote the conditional $ATT(g,t)$ given covariates $X$.

In setups with a unique treatment date, there is a single treated group, $G=g$ with $g \neq \infty$, so tracking $ATT\left( g,t\right)$ over time for this treated group provides a measure of how treatment effects evolve with elapsed treatment time. This is usually referred to as event studies. With multiple groups defined by their treatment timing, $ATT\left( g,t\right)$ still provides information about treatment effect dynamics for each treated cohort $g\in \mathcal{G}_{\text{trt}}$. However, in such cases, researchers often want to summarize the average causal effects using weighted averages of $ATT\left( g,t\right)$ among units that started treatment $e$ periods ago, with $e \ge 0$. A natural aggregation uses the size of each treated cohort as weights, leading to the event study parameter that we denote by $ES(e)$,
\begin{align}
ES(e) := \E\big[ATT\left( G, G+e\right) \big| G+e \in [2,T]\big] = \sum_{g\in\mathcal{G}_{\text{trt}}} \P(G=g|G+e \in [2,T]) ATT(g,g+e).\label{eqn:ES}
\end{align}

One may also want to further aggregate the event study coefficients to recover a scalar summary measure. Let $\mathcal{E}$ denote the support of post-treatment event time $E=t-G$, $t\geq G$, and let $N_E$ denote its cardinality. Then, 
\begin{align}
ES_{\text{avg}} &:= \dfrac{1}{N_E}\sum_{e\in \mathcal{E}} ES(e)\label{eqn:overall_ATT},
\end{align}
provides a simple average of all post-treatment event study coefficients. 

Under Assumptions \ref{asm:sampling_panel}, \ref{asm:overlap} and \ref{asm:no anticipation} only, one cannot identify $ATT(g,t)$ for the post-treatment periods $t\ge g$, as $\E[Y_t(\infty)|G=g]$ is not observed or identified. DiD procedure tackles this problem by restricting the evolution of average untreated potential outcomes across the treated and comparison groups, a parallel trends (PT) assumption. Depending on the treatment timing and covariates, different versions of PT assumptions have been imposed in the literature. We discuss these below.

\subsection{Parallel trends and DiD estimands}\label{sec:X}
 
Two natural types of (conditional) PT have been adopted in the literature: one that effectively restricts trends of $Y_t(\infty)$ only in post-treatment periods and uses never-treated units as the relevant comparison group, and one that restricts trends of $Y_t(\infty)$ in both pre and post-treatment periods, and allow researchers to use any untreated group as a comparison group. We formally state these two PT assumptions below by allowing them to hold only after conditioning on covariates $X$, i.e., by allowing for covariate-specific trends. If covariates are unavailable or do not play an important role, one can take $X=1~ a.s.$ as a special case to resort to an unconditional DiD analysis.

\begin{namedassumption}{PT-Post}[Parallel Trends in the post-treatment periods for all treated groups]\label{asm:pt-post}
For each $t\in\left\{2,\dots,T\right\} $ and $g \in \mathcal{G}_{\text{trt}}$ such that $t \geq g$,
\begin{equation*}
\mathbb{E}[Y_{t}(\infty)-Y_{t-1}(\infty)|G=g, X]=\mathbb{E}[Y_{t}(\infty)-Y_{t-1}(\infty)|G=\infty, X]~a.s.
\end{equation*}
\end{namedassumption}

\begin{namedassumption}{PT-All}[Parallel Trends for all groups and periods]\label{asm:pt-n-all}
For each $t\in\left\{2,\dots,T\right\}$ and $(g,g') \in \mathcal{G}_{\text{trt}}\times \mathcal{G}$,
\begin{equation*}
\mathbb{E}[Y_{t}(\infty)-Y_{t-1}(\infty)|G=g,X]=\mathbb{E}[Y_{t}(\infty)-Y_{t-1}(\infty)|G=g',X]~a.s.
\end{equation*}
\end{namedassumption}

Assumption \ref{asm:pt-post} only imposes PT in the post-treatment periods, {$t \geq g$}, and imposes that, conditional on covariates, the average untreated potential outcome evolution of any treated group $g\in \mathcal{G}_{\text{trt}}$ and the never-treated group is the same. This assumption has been effectively used by \citet{Callaway_Santanna_2021} and \citet{Sun2021} when constructing DiD estimators using never-treated units as the comparison group. In setups with a single treatment date without covariates, this is the assumption implicitly used in TWFE event study regressions \citep{Jacobson_etal_1993_AER_ES}
\begin{equation}
Y_{i,t} = \alpha_t + \eta_i + \sum_{e\neq -1} 1[E_{i,t} = e] \beta_{e} + \varepsilon_{i,t}, \label{eqn:dynamic-twfe}
\end{equation}
where $ Y_{i,t}$ is the outcome for unit $i$ period $t$, $\alpha_t$ and $\eta_i$ are time and unit fixed effects often described as capturing time-invariant or common time-varying shocks, $E_{i,t} = t - G_i $ is the time relative to treatment (event-time), $\varepsilon_{i,t}$ is an idiosyncratic error term. The summation runs over all possible values of $E_{i,t}$ among eventually-treated units except for event-time $-1$.

Under Assumption \ref{asm:pt-post} (and Assumptions \ref{asm:sampling_panel}, \ref{asm:overlap}, \ref{asm:no anticipation}), the $ATT(g,t)$'s are identified as
\begin{align} 
    ATT(g,t) &  = \mathbb{E}[Y_t - Y_{g-1} | G=g] - \mathbb{E}[ \mathbb{E}[Y_t - Y_{g-1} | G=\infty, X] | G=g ].\label{eqn:ATT-just-identification}
\end{align}
In this case, $t_{\text{pre}} = g-1$ is the only reliable baseline period because the PT condition does not necessarily hold for the pre-treatment periods. In addition, Assumption \ref{asm:pt-post} prevents one from using untreated groups other than the never-treated as the comparison group. In such cases, even if more data on pre-treatment outcomes are available, or the relative size of the never-treated units is small compared to other untreated groups, this information cannot be used to improve the precision of DiD and ES estimators.  But this raises the questions: When is PT plausible in post-treatment periods but not in pre-treatment periods? When does economic theory suggest we can only use a specific untreated group as a valid comparison group?  The answers to these questions are probably application-dependent. But if we want to leverage more data to potentially get more informative inferences, we must strengthen Assumption \ref{asm:pt-post}, leading us to Assumption \ref{asm:pt-n-all}.

Assumption \ref{asm:pt-n-all} states that, conditional on covariates, the average untreated potential outcome follows the same path in all treatment groups and periods. It is perhaps the most popular identification assumption used in the literature, as invoked by  \cite{deChaisemartin2020_AER}, \cite{Sun2021}, \citet{Gardner2021}, \cite{Marcus2021}, \citet{Borusyak2023}, and \citet{Harmon_2023_Efficiency} in the unconditional form, and \citet{Callaway_Santanna_2021}, \citet{Wooldridge2021a}, and \citet{Lee_Wooldridge_2023} in the conditional form. In setups with a single treatment date and no covariates, this assumption is arguably the one that justifies using a ``static'' TWFE regression specification 
\begin{equation}
Y_{i,t} = \alpha_t + \eta_i + \beta D_{i,t} + \varepsilon_{i,t}, \label{eqn:static-twfe}
\end{equation}
where $\beta = ES_{\text{avg}}$ is the parameter of interest. Unlike Assumption \ref{asm:pt-post}, Assumption \ref{asm:pt-n-all} imposes a parallel pre-trends condition across all time periods. 

As Assumption \ref{asm:pt-n-all} allows one to use different pre-treatments as baseline periods and leverage different sets of untreated units as a comparison group, the setup becomes much richer. The following lemma highlights how to combine comparison groups and pre-treatment periods to identify the $ATT(g,t)$'s, hinting that the DiD model becomes overidentified. Let $\mathcal{H}^{g,t} = \{(g',\tprime,\tdoubleprime) \in \mathcal{G} \times \mathcal{T} \times \mathcal{T}: g>\tdoubleprime, g'>\max\{\tprime, \tdoubleprime\} \}$.

\begin{lemma} \label{lem:id-attgt-over}Let Assumptions \ref{asm:sampling_panel}, \ref{asm:overlap}, \ref{asm:no anticipation}, and \ref{asm:pt-n-all} hold. Then, for every group $(g,g')\in \mathcal{G}_{\text{trt}} \times \mathcal{G}_{\text{trt}}$ and time periods $(t,\tprime,\tdoubleprime)\in \mathcal{T} \times \mathcal{T} \times \mathcal{T}$ such that $t\ge g$, $g>\tdoubleprime$, and $g'>\max\{\tprime,\tdoubleprime\}$,  with probability one,
\begin{small}
\vspace{-.3cm}
\begin{align}
    CATT(g,t,X) = \underbrace{\mathbb{E}[Y_t - Y_{\tdoubleprime} |G=g,X]}_{:= m_{g,t,\tdoubleprime}(X)} - \big(\underbrace{\mathbb{E}[Y_t - Y_{\tprime} |G=\infty,X]}_{:= m_{\infty,t,\tprime}(X)} + \underbrace{\mathbb{E}[Y_{\tprime} - Y_{\tdoubleprime} |G=g',X]}_{:= m_{g',\tprime,\tdoubleprime}(X)}\big),\label{eqn:CATT-id}
\end{align}
\end{small}
\vspace{-0.7cm}

\noindent and, as a consequence, 
\begin{align}
    ATT(g,t) =&~ 
    \E\big[Y_t - Y_{\tdoubleprime}\big|G=g \big] -\E\big[ \left(m_{\infty,t,\tprime}(X) + m_{g',\tprime,\tdoubleprime}(X) \right) \big|G=g\big].\label{eqn:ATT-overidentification-more}
\end{align}
More generally, for any covariate-specific weights $w^{\text{att(g,t)}}_{g',\tprime,\tdoubleprime}(X)$ summing to one,  we have that
\begin{align*}
    ATT(g,t) = \E\bigg[ \sum_{(g',\tprime, \tdoubleprime) \in \mathcal{H}^{g,t}} w^{\text{att(g,t)}}_{g',\tprime,\tdoubleprime}(X)\left[ m_{g,t,\tdoubleprime}(X) - \left(m_{\infty,t,\tprime}(X) + m_{g',\tprime,\tdoubleprime}(X) \right)\right] \bigg|G=g\bigg].
\end{align*}
\end{lemma}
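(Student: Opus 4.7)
The plan is to first establish the conditional identity \eqref{eqn:CATT-id}, then obtain \eqref{eqn:ATT-overidentification-more} by the law of iterated expectations, and finally extend to arbitrary covariate-specific weights by linearity.

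First, I would rewrite each conditional mean $m_{\cdot,\cdot,\cdot}(X)$ in terms of untreated potential outcomes. The index restrictions in $\mathcal{H}^{g,t}$, namely $g > \tdoubleprime$ and $g' > \max\{\tprime, \tdoubleprime\}$, ensure that the relevant dates are pre-treatment for the conditioning cohort, so Assumption \ref{asm:no anticipation} allows me to replace $Y_{\tdoubleprime}$ by $Y_{\tdoubleprime}(\infty)$ inside $\mathbb{E}[\,\cdot\,|G=g,X]$, and to replace both $Y_{\tprime}$ and $Y_{\tdoubleprime}$ by their $(\infty)$-counterparts inside $\mathbb{E}[\,\cdot\,|G=g',X]$. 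For the never-treated group, $m_{\infty,t,\tprime}(X) = \mathbb{E}[Y_t(\infty) - Y_{\tprime}(\infty)\,|\,G=\infty,X]$ directly, since $Y_s = Y_s(\infty)$ whenever $G=\infty$.

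Second, starting from $m_{g,t,\tdoubleprime}(X) = \mathbb{E}[Y_t(g) - Y_{\tdoubleprime}(\infty)\,|\,G=g,X]$, I would add and subtract $\mathbb{E}[Y_t(\infty)\,|\,G=g,X]$ to obtain
\begin{equation*}
m_{g,t,\tdoubleprime}(X) \;=\; CATT(g,t,X) \;+\; \mathbb{E}\big[Y_t(\infty) - Y_{\tdoubleprime}(\infty)\,\big|\,G=g,X\big],
\end{equation*}
and then split the counterfactual trend as $Y_t(\infty) - Y_{\tdoubleprime}(\infty) = \bigl(Y_t(\infty) - Y_{\tprime}(\infty)\bigr) + \bigl(Y_{\tprime}(\infty) - Y_{\tdoubleprime}(\infty)\bigr)$. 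Applying Assumption \ref{asm:pt-n-all} period by period (telescoping across intermediate dates) lets me swap the conditioning group from $G=g$ to $G=\infty$ in the first piece and to $G=g'$ in the second piece; the resulting conditional expectations coincide, respectively, with $m_{\infty,t,\tprime}(X)$ and $m_{g',\tprime,\tdoubleprime}(X)$, after another application of Assumption \ref{asm:no anticipation} to convert $(\infty)$-potential outcomes into observables within group $g'$. Rearranging yields \eqref{eqn:CATT-id}.

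Third, \eqref{eqn:ATT-overidentification-more} follows by taking $\mathbb{E}[\,\cdot\,|G=g]$ of both sides of \eqref{eqn:CATT-id}: the left side becomes $ATT(g,t)$ by the law of iterated expectations, and $\mathbb{E}[m_{g,t,\tdoubleprime}(X)\,|\,G=g]$ collapses to $\mathbb{E}[Y_t - Y_{\tdoubleprime}\,|\,G=g]$. Finally, because \eqref{eqn:CATT-id} holds almost surely for \emph{every} $(g',\tprime,\tdoubleprime) \in \mathcal{H}^{g,t}$ and $\mathcal{H}^{g,t}$ is finite, any convex combination with $X$-measurable weights $w^{\text{att(g,t)}}_{g',\tprime,\tdoubleprime}(X)$ summing to one still equals $CATT(g,t,X)$; taking $\mathbb{E}[\,\cdot\,|G=g]$ delivers the general weighted formula. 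The main obstacle is bookkeeping: tracking which index pairs are pre-treatment for which cohort so that Assumption \ref{asm:no anticipation} is invoked only where legitimate, and verifying that the one-step condition in Assumption \ref{asm:pt-n-all} extends to multi-period swaps via telescoping. The precise role of the set $\mathcal{H}^{g,t}$ is exactly to make each of these steps admissible.
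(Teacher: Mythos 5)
Your proof is correct and follows essentially the same route as the paper's: no-anticipation converts pre-treatment observed outcomes into untreated potential outcomes for the relevant cohorts, telescoped applications of Assumption PT-All swap the conditioning group for the counterfactual trend pieces, and the ATT and weighted-average statements follow by iterated expectations. The only difference is cosmetic — you decompose $m_{g,t,\tdoubleprime}(X)$ and build up to the identity, whereas the paper simplifies the right-hand side of \eqref{eqn:CATT-id} down to $CATT(g,t,X)$ — so no gap to report.
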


Lemma \ref{lem:id-attgt-over} leverages several restrictions on the indexes worth explaining. First, the restriction $t\ge g$ means we identify $ATT(g,t)$'s in group $g$ post-treatment periods. The restriction $g>\tdoubleprime$ suggests using any of the group's $g$ pre-treatment period as a baseline, while the restriction $g'>\max\{\tprime,\tdoubleprime\}$ means that we are using the eventually treated cohort $g'$ that is treated after periods $\tprime, \tdoubleprime$ as part of our effective comparison group. Interestingly, these restrictions allow $\tprime$ to be a post-treatment period for group $g$, as long as it is a pre-treatment period for cohort $g'$. Figure \ref{fig:illustration} illustrates why this is possible in our setup using a stylized example with four groups and ten time periods---essentially, when parallel trends hold in all pre-treatment periods, one can ``subtract back'' some ``excessive'' corrections using the never-treated units. 

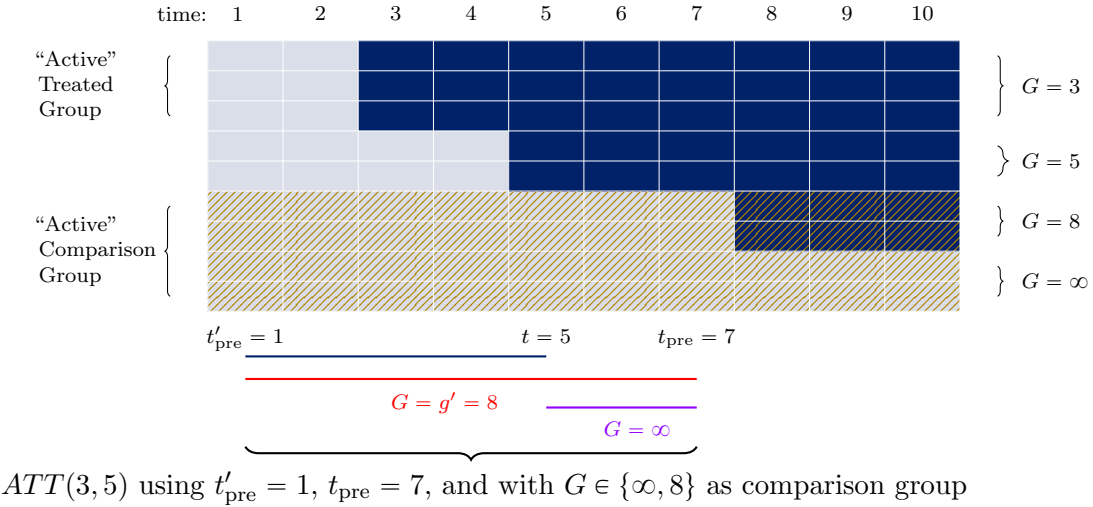
\begin{figure}[!htbp]
\caption{Illustration of how we can use $\tprime$ in post-treatment period for group $g$} 
\label{fig:illustration}
\centering
        \begin{tikzpicture}[xscale=1, yscale=0.4]
    \def\entities{9}
    \def\periods{10}
    
    \def\colorGray{gray!60}
    \def\colorBlack{black}
    
    \foreach \i in {1,...,\entities} {
        \foreach \j in {1,...,\periods} {
            \fill[emoryblue!15!white] (\j-1,\i) rectangle ++(1,-1);
        }
    }
    
    \foreach \i/\start in {
    1/3,
    2/3, 
    3/3,  
    4/5,
    5/5,
    6/8,
    7/8
} {
    \foreach \j in {\start,...,\periods} {  
        \fill[emoryblue] (\j-1, \entities-\i+1) rectangle ++(1,-1);
    }
}
    \begin{scope}
    \clip (0,0) rectangle (\periods, 4);  
    \fill[emoryblue!15!white, pattern=north east lines, pattern color=emorygold] (0, 0) rectangle (\periods, 4);  
    \end{scope}

    \foreach \j in {0,...,\periods} {
        \draw[very thin, -, white] (\j, 0) -- (\j, \entities);
    }
    
    \foreach \i in {0,...,\entities} {
        \draw[very thin, -, white] (0, \i) -- (\periods, \i);
    }
    
    \foreach \j in {2,...,\periods} {
        \node[anchor=north] at (\j-0.5, 10.5) {\scriptsize \j};
    }
    \foreach \j in {1} {
        \node[anchor=north] at (\j-1.1, 10.5) {\scriptsize time:~~ \j};
    }
    

    \node[anchor=north, text height=1.5ex] at (.5, 0) {\scriptsize $\tdoubleprime=1$};  
\node[anchor=north, text height=1.5ex] at (4.5, 0) {\scriptsize$t=5$};  
\node[anchor=north, text height=1.5ex] at (6.5, 0) {\scriptsize $\tprime=7$};

    \draw [decorate,decoration={brace,amplitude=0pt,mirror},-,emoryblue, thick] (.5, -1.5) -- (4.5, -1.5);

    \draw [decorate,decoration={brace,amplitude=0pt},-, red, thick] (6.5, -2.25) -- (.5, -2.25) node[midway,xshift=-10pt,below=0.8pt] {\scriptsize $G=g'=8$};

    \draw [decorate,decoration={brace,amplitude=0pt},-, purple, thick] (6.5, -3.2) -- (4.5, -3.2) node[midway,xshift=6pt,below=1pt] {\scriptsize $G=\infty$};
    
    \draw [decorate,decoration={brace,amplitude=5pt},-, thick] (6.5, -4.5) -- (.5, -4.5) node[midway,xshift=5pt,below=5pt] {\small
    $ATT(3,5)$ using $\tdoubleprime=1$, $\tprime=7$, and with $G\in \{\infty, 8\}$ as comparison group};

    \draw [decorate,decoration={brace,amplitude=2pt},-] (10.5, 8.5) -- (10.5, 6.5) node[midway,xshift=5pt,right=0pt] {\scriptsize$G=3$};

    \draw [decorate,decoration={brace,amplitude=4pt},-] (10.5, 5.5) -- (10.5, 4.5) node[midway,xshift=5pt,right=0pt] {\scriptsize$G=5$};

    \draw [decorate,decoration={brace,amplitude=2pt},-] (10.5, 3.5) -- (10.5, 2.5) node[midway,xshift=5pt,right=0pt] {\scriptsize$G=8$};

    \draw [decorate,decoration={brace,amplitude=2pt},-] (10.5, 1.5) -- (10.5, 0.5) node[midway,xshift=5pt,right=0pt] {\scriptsize$G=\infty$};

    \draw [decorate,decoration={brace,amplitude=2pt,mirror},-] (-.5, 3.5) -- (-.5, 0.5) node[midway,left=0pt] {\parbox{1.6cm}{\scriptsize ``Active'' \\ Comparison \\ Group}};

        \draw [decorate,decoration={brace,amplitude=2pt,mirror},-] (-.5, 8.5) -- (-.5, 6.5) node[midway,left=0pt] {\parbox{1.6cm}{\scriptsize ``Active'' \\ Treated \\ Group}};
    \end{tikzpicture}

\justifying
\noindent\scriptsize{\textit{Notes:} Illustrative example of a setup with 10 time periods and four different treatment groups, where one is interested in learning about $ATT(3,5)$. Dark blue means units are treated (post-treatment periods), while light blue denotes pre-treatment periods. The hashed area denotes the ``active'' comparison group.
The illustration highlights that one can use a combination of two not-yet-treated cohorts, $G=8$ and $G=\infty$, as a comparison group, as well as leverage $\tdoubleprime=1$ and $\tprime=7$ as ``baseline periods''. Using a post-treatment $\tprime=7$ as a baseline period is possible because the never-treated comparison group bridges this back to pre-treatment periods, since parallel trends hold in all periods for all groups. }
     
\end{figure}

Overall, Lemma \ref{lem:id-attgt-over} indicates that one can use any group $g$'s pre-treatment period as a baseline, combine never-treated units with any not-yet-treated cohort $g'>g$ to form comparison groups, and use pre-treatment data for eventually-treated cohort $g'$ when identifying an $ATT(g,t)$. In practice, there are infinitely many ways one can combine these to identify the $ATT(g,t)$'s. In practice, however, one very relevant question is how to combine these different estimands to asymptotically obtain the most precise DiD estimator for the $ATT(g,t)$'s, or to obtain the most precise event study parameters $ES(e)$ as defined in \eqref{eqn:ES}? We discuss these points in the next section.

\begin{remark}\label{rem:negative-weights}
    As Lemma \ref{lem:id-attgt-over} indicates, any covariate-specific weights $w^{\text{att(g,t)}}_{g',\tprime,\tdoubleprime}(X)$ summing to one can be used to identify a $ATT(g,t)$. That is, the result in Lemma \ref{lem:id-attgt-over} allows for non-convex weights. Different from the DiD literature that discusses limitations of simple two-way fixed effects specifications (\citealp{Goodmanbacon2021}, \citealp{Borusyak2023}, \citealp{deChaisemartin2020_AER}, \citealp{Sun2021}), non-convex weights are \emph{not} a concern in our context. Under the identifications in Lemma \ref{lem:id-attgt-over}, the DiD model is overidentified, and \eqref{eqn:CATT-id} is \emph{homogeneous} across baseline periods and comparison groups. Thus, issues related to an estimand not being weakly causal \citep{Blandhol_etal_2SLS_2022} due to negative weights are not binding in our context.
\end{remark}

\section{Semiparametric Efficiency Bound for DiD and ES}\label{sec:bound}

In this Section, we derive the semiparametric efficiency bound for the $ATT(g,t)$'s under Assumption \ref{asm:pt-post} and under Assumption \ref{asm:pt-n-all}. We also provide analogous results for the event study parameters $ES(e), e\ge0$. These semiparametric efficiency bounds characterize the lowest possible asymptotic variance attainable by any regular, asymptotically linear estimator given the identification assumptions. Based on the efficient influence function, we derive closed-form expressions for estimands that can be used to construct efficient estimators under minimal smoothness assumptions. These results provide a benchmark for evaluating the asymptotic efficiency of existing DiD and ES estimators in the literature.

To build intuition, we first focus on the canonical case of a single treatment date. This simpler setting provides a transparent view of the key ideas and allows us to isolate the informational content of each pre-treatment period. We then generalize the results to the more complex case of staggered treatment timing. When covariates are uninformative for identification, the results further simplify to the unconditional DiD setup by setting $X=1$. 

\subsection{DiD setups with a single treatment date}\label{sec:single-treatment-time}

Consider the setting with a single treatment date, resulting in two groups: treated ($G=g$) and never-treated ($G=\infty$). In this case, the event study parameters $ES(e)$ are equivalent to $ATT(g,t)$'s with $t=g+e$. We proceed under either Assumption \ref{asm:pt-post} or Assumption \ref{asm:pt-n-all}, noting that the former is a special case of the latter. With two groups, we have  $G_\infty = 1-G_g$ and $p_\infty(X) = 1-p_g(X)$, with $p_g(X) = \E[G_g|X]$ being the propensity score, i.e., the probability of belonging to the treated group given covariates $X$.

Under Assumptions \ref{asm:sampling_panel}, \ref{asm:overlap}, \ref{asm:no anticipation}, and \ref{asm:pt-n-all}, Lemma \ref{lem:id-attgt-over} implies that for any post-treatment period $t\ge g$, any pre-treatment period $\tprime \in \{1,\dots, g-1\}$, and any weight $w^{\text{att(g,t)}}_{\tprime}(\cdot)$ summing to one,
\begin{align*}
    ATT(g,t) &=\E\big[Y_t - Y_{\tprime}\big|G=g \big] -\E[ \E\big[Y_t - Y_{\tprime}\big|G=\infty, X] \big|G=g\big] \\
    & =   \E\bigg[ \sum_{\tprime = 1}^{g-1} w^{\text{att(g,t)}}_{\tprime}(X)\left[ m_{g,t,\tprime}(X) - m_{\infty,t,\tprime}(X) \right] \bigg|G=g\bigg].
\end{align*}
This expression shows how information from multiple pre-treatment periods can be used to identify our target parameters, but it does not yet indicate \emph{how} to weigh each period to gain precision. To this end, the next step is to characterize all the information content in our identification assumptions. In the next lemma, we establish an equivalent representation of our identification assumptions that impose restrictions on potential outcomes using sequential conditional moment restrictions on observable variables. 

\begin{lemma}[Moment-restrictions for overidentified DiD with single treatment time] \label{lm:1} \phantom{abcgsfafafa}\linebreak
Suppose there are only two treatment groups, such that $G\in \mathcal{G}=\{g,\infty\}$, with $g\ge 2$. The family of probability distributions of $(Y_{t=1}\cdots,Y_{t=T},X',G)$ satisfying Assumptions \ref{asm:sampling_panel}, \ref{asm:overlap}, \ref{asm:no anticipation}, and \ref{asm:pt-n-all} are observationally equivalent to the family of probability distributions of $(Y_{t=1}\cdots,Y_{t=T},X',G)$ satisfying Assumptions \ref{asm:sampling_panel}, \ref{asm:overlap}, and the following set of moment restrictions: for all $t \in \{g,\dots, T\}$, with probability one, 
    \begin{align*}
    \mathbb{E}[G_g(ATT(g,t) - CATT(g,t,X))] & =0, \\
    \mathbb{E} \left[ CATT(g,t,X) - \frac{G_g(Y_{t} - Y_{g-1})}{p_g(X)} + \frac{G_\infty(Y_{t} - Y_{g-1})}{p_\infty(X)} \Bigg|X \right] & = 0, \\
    \mathbb{E} \left[ \frac{G_g(Y_{\tprime} - Y_{1})}{p_g(X)} - \frac{G_\infty(Y_{\tprime} - Y_{1})}{p_\infty(X)} \Bigg|X \right] & = 0, \text{ for all } 2 \leq \tprime \leq g-1, \\
    \mathbb{E}[G_g - p_g(X) | X] & = 0.
\end{align*}
\end{lemma}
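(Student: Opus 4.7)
The claim is an observational equivalence between two families of distributions on $(Y_{t=1},\dots,Y_{t=T},X',G)$: one indexed by potential-outcome restrictions (NA + PT-All) and the other by a set of moment conditions on observables. I would prove it in two directions, treating the potential-outcome side as an existential statement about a counterfactual structure compatible with the observed law.

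\textbf{Direction 1 (NA + PT-All $\Rightarrow$ the four moment restrictions).} The propensity score equation is immediate from the definition $p_g(X)=\E[G_g|X]$. The first equation follows from the law of iterated expectations applied to the definition $ATT(g,t)=\E[CATT(g,t,X)|G=g]$, which gives $\E[G_g \cdot CATT(g,t,X)] = \E[G_g]\cdot ATT(g,t)$. For the second equation, invoke Lemma \ref{lem:id-attgt-over} with $\tdoubleprime=g-1$ and $g'=\infty$ (or, equivalently, apply NA to get $\E[Y_{g-1}|G=g,X]=\E[Y_{g-1}(\infty)|G=g,X]$ and PT-All from $g-1$ to $t$ to transport the untreated trend from $G=\infty$ to $G=g$) to obtain $CATT(g,t,X) = \E[Y_t-Y_{g-1}|G=g,X] - \E[Y_t-Y_{g-1}|G=\infty,X]$; then rewrite the two conditional expectations as $\E[G_g(Y_t-Y_{g-1})/p_g(X)\mid X]$ and $\E[G_\infty(Y_t-Y_{g-1})/p_\infty(X)\mid X]$ via Bayes' rule. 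For the third equation, NA gives $\E[Y_{\tprime}-Y_1 | G=g,X] = \E[Y_{\tprime}(\infty)-Y_1(\infty)|G=g,X]$ for $\tprime\le g-1$, and PT-All (summed from period $1$ to $\tprime$) equates this with $\E[Y_{\tprime}(\infty)-Y_1(\infty)|G=\infty,X] = \E[Y_{\tprime}-Y_1|G=\infty,X]$; converting to the propensity-score-weighted form yields the claim.

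\textbf{Direction 2 (moment restrictions $\Rightarrow$ existence of a compatible potential-outcome model).} Fix any distribution of $(Y_{t=1},\dots,Y_{t=T},X',G)$ satisfying S, O, and the four moment conditions. I would construct potential outcomes on an enlarged probability space as follows. For each unit, define $Y_t(\infty)=Y_t$ whenever $G=\infty$; define $Y_t(g)=Y_t$ and $Y_t(\infty):=Y_t$ for all $t<g$ when $G=g$ (this secures NA by construction and is consistent with observables because only $Y_t(g)$ is observed on $G=g$ for $t\ge g$). For $t\ge g$ on $\{G=g\}$, set
\begin{equation*}
Y_t(\infty) := Y_{g-1} + \bigl(\E[Y_t - Y_{g-1} | G=\infty, X] \bigr) + V_t,
\end{equation*}
where $V_t$ is any auxiliary variable with $\E[V_t|G=g,X]=0$ (e.g., $V_t\equiv 0$). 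By construction this leaves the marginal law of the observables unchanged, and one verifies PT-All in post-treatment periods for $\{G=g\}$ directly. Parallel trends between any two pre-treatment periods (needed for PT-All to hold for all $t$) is inherited from the third moment restriction applied to successive pairs, since $\E[Y_{\tprime}-Y_1|G=g,X]=\E[Y_{\tprime}-Y_1|G=\infty,X]$ for $2\le \tprime\le g-1$ and NA identifies these with the untreated-potential-outcome trends. Finally, the second moment restriction pins down $CATT(g,t,X)$ as the same object on both sides and the first yields its unconditional counterpart, so the constructed potential-outcome law reproduces the target distribution of observables.

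\textbf{Expected difficulty.} Direction 1 is essentially bookkeeping using Bayes' rule plus NA and PT-All. The main care is in Direction 2: one must make the existence argument precise enough to avoid circularity—i.e., the constructed potential outcomes need to satisfy PT-All across \emph{all} pairs of periods (including those straddling treatment onset on $\{G=g\}$) while remaining consistent with the observed joint distribution. The critical step is recognizing that the third moment restriction is exactly the content PT-All adds in pre-treatment periods (beyond what the $g-1$-baseline identification in (b) already encodes), so the extra pre-period equations (c) for $\tprime=2,\dots,g-1$ are both necessary and sufficient to complete the construction. Once that correspondence is made explicit, the remaining verification is mechanical.
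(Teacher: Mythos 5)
Your proposal is correct, and it follows the same high-level structure as the paper (the forward direction is routine bookkeeping; the substance is a constructive converse), but your construction of the compatible potential-outcome model differs from the paper's in an instructive way. The paper works distributionally: conditional on $G=g$ it sets $Y_1(\infty)\overset{d}{=}Y_1$, lets the whole vector of first differences of $Y(\infty)$ be an independent copy of the observed differences among the never-treated, and then PT-All holds \emph{by construction} while no-anticipation is the property that must be \emph{verified} from the pre-period moment restrictions. You do the opposite: you define $Y_t(\infty)=Y_t$ pointwise for $t<g$ on $\{G=g\}$ (so NA is automatic) and $Y_t(\infty)=Y_{g-1}+\mathbb{E}[Y_t-Y_{g-1}\mid G=\infty,X]+V_t$ for $t\ge g$ (so post-period PT is automatic), and the pre-period moment restrictions are what you must check to get PT-All in periods $t\le g-1$. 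Both allocations are valid because NA and PT-All only restrict conditional means, so your leaner mean-based construction suffices, while the paper's coupling delivers a full joint law for all potential-outcome vectors in one stroke (and extends more transparently to the staggered case of Lemma \ref{lm:2}, which is what the paper actually proves). Two small tidying points: you should also (trivially) define $Y_t(g)$ on $\{G=\infty\}$, e.g.\ $Y_t(g):=Y_t$, since the model requires all potential outcomes for every unit even though neither NA nor PT-All constrains them there; and in your Direction 1 citation of Lemma \ref{lem:id-attgt-over} you want $g'=\infty$ with $\tprime=\tdoubleprime=g-1$ (or simply the direct NA-plus-PT argument you give in parentheses) to land on the $g-1$ baseline form of the second moment restriction.
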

Lemma \ref{lm:1} lays the groundwork for deriving the efficient influence function and the semiparametric efficiency bound. Beyond that, it succinctly presents all the identifying power of our identification assumptions and highlights that one can construct many different DiD estimators based on it: it is just a matter of choosing an estimation method suitable for moment restrictions. Examples include (two-step) generalized methods of moments (GMM; see, e.g., \citealp{Hansen1982_GMM}, \citealp{Ackerberg2014_two-step-gmm-Restud}), generalized empirical likelihood  \citep{Newey_Smith_2004_GEL}, and minimum distance estimators \citep{Ai_Chen_2003, Ai_Chen_2007, Ai_Chen_2012}. 

A natural question is whether the choice of the estimation procedure matters in terms of asymptotic efficiency. As discussed in the previous section, the DiD model characterized in Lemma \ref{lm:1} is overidentified in the sense of \citet{Chen_Santos_2018_ECMA}. As such, not every nonparametric estimation procedure is asymptotically semiparametrically efficient, implying that it is relevant to characterize the minimum asymptotic variance for any regular, asymptotic linear (RAL) DiD estimator, and to consider estimators that achieve this bound. 

Although different efficiency tools are available in the literature, many of them are not suitable for our DiD model in Lemma \ref{lm:1}. For instance, the DiD model in Lemma \ref{lm:1} is based on sequential conditional moment restrictions with unknown nonparametric functions, making the efficiency results in \citet{Chamberlain_1992_ECMA_efficiency} and \citet{Ai_Chen_2003} not directly applicable. In addition, note that the conditional average treatment effects among the treated units in period $t$ in our DiD model, $CATT(g,t,X)$, is given by 
\begin{align*}
    CATT(g,t,X) = \E\left[Y_{t} - Y_{\tprime}|G=g, X\right] - \E\left[Y_{t} - Y_{\tprime}|G=\infty, X \right]~~\text{for any $1 \leq \tprime \leq g-1$}.
\end{align*}
This suggests that our first-step nuisance function is overidentified, making the efficiency results of \citet{Ackerberg2014_two-step-gmm-Restud} based on two-step GMM unsuitable for our context. On the other hand, we can build on \citet{Ai_Chen_2012}, as their models are well-suited to our setup. Nonetheless, applying their results still requires substantial additional work: we need to first solve a calculus variations problem and then unravel several nested matrix inversions before arriving at the concise expression provided in Theorems \ref{thm:efficiency-fixed-g} and \ref{thm:efficiency}.

In what follows, we present our semiparametric efficiency result under Assumption \ref{asm:pt-n-all}, and then present as a special case the semiparametric efficiency under Assumption \ref{asm:pt-post}. Let $\pi_g =\mathbb{E}[G_g]$ denote the population proportion of treated units, and denote the group-specific conditional expectations of the evolution of outcomes from period $\tprime$ to period $t$ by
\begin{align} 
    m_{g,t,\tprime}(X)  :=\mathbb{E}[Y_t - Y_{\tprime} | G=g,X].\label{eqn:m_functions}
\end{align}
Let $\mathbf{1}$ denote a column vector of ones with the appropriate length varying according to the context. Let $W = (Y_{t=1}, \dots, Y_{t=T}, X', G_g)$ denote the available random variables, and   
\begin{align}
    \widetilde{Y}_{g,t,\tprime} &= \dfrac{1}{\pi_g} \left(G_g - \dfrac{p_g(X)}{p_\infty(X)}G_\infty \right)(Y_t - Y_{\tprime} - m_{\infty,t,\tprime}(X)).\label{eqn:Y_tilde_single}
\end{align}
For $1 \leq \tprime \leq g-1$, let
\begin{align} \label{eqn:IF-g}
     \mathbb{IF}_{\tprime}^{att(g,t)} & = \widetilde{Y}_{g,t,\tprime} - \frac{G_g}{\pi_g}ATT(g,t), 
\end{align}
and denote the $g-1$ column vector that stacks of these $g-1$ influence functions by
    $\mathbb{IF}^{att(g,t)} =(\mathbb{IF}_{1}^{att(g,t)}, \mathbb{IF}_{2}^{att(g,t)},\cdots,\mathbb{IF}_{g-1}^{att(g,t)})'$.
Finally, denote the $(g-1)\times (g-1)$ conditional covariance matrix of $\mathbb{IF}^{att(g,t)}$ as $V_{gt}(X) =\Cov(\mathbb{IF}^{att(g,t)}|X)$, and let $V_{gt}^*(X)$ be a matrix of the same dimension as $V_{gt}(X)$ with the $(j,k)$-th element being
\begin{align} 
    \frac{1}{p_g(X)} \Cov(Y_{t} - Y_{j},Y_{t} - Y_{k} |G=g,X) + \frac{1}{1 - p_g(X)} \Cov (Y_{t} - Y_{j} ,Y_{t} - Y_{k}  |G=\infty,X ).\label{eqn:V-star}
\end{align}

\begin{theorem} \label{thm:efficiency-fixed-g}
   Suppose there are only two treatment groups, such that $G\in \mathcal{G}=\{g,\infty\}$, with $g\ge 2$. Under Assumptions \ref{asm:sampling_panel}, \ref{asm:overlap}, \ref{asm:no anticipation} and \ref{asm:pt-n-all}, the efficient influence function of a $ATT(g,t)$, $t\ge g$ is given by
    \begin{align*}
        \mathbb{EIF}^{att(g,t)} & = \frac{\mathbf{1}'V_{gt}(X)^{-1}}{\mathbf{1}'V_{gt}(X)^{-1} \mathbf{1} } \mathbb{IF}^{att(g,t)} = \frac{\mathbf{1}'V_{gt}^*(X)^{-1}}{\mathbf{1}'V_{gt}^*(X)^{-1} \mathbf{1} } \mathbb{IF}^{att(g,t)}.
    \end{align*}
    Assuming that the second moment of the efficient influence function is finite, the semiparametric efficiency bound for any regular, asymptotically linear estimator for the $ATT(g,t)$ is given by
    \begin{align*}
       V_{\text{eff}}= & \frac{1}{\pi_g^2 } \left(\mathbb{E}[G_g(CATT(g,t,X) - ATT(g,t))^2] 
        + \mathbb{E}\left[\frac{p_g(X)^2 |V_{gt}^*(X)|}{\sum_{j,j'=1}^{g-1} (-1)^{j+j'}|V_{gt,jj'}^*(X)|} \right] \right),
    \end{align*}
    where $V^*_{gt,jj'}$ denotes the minor matrix obtained by removing the $j$th row and $j'$th column of $V^*_{gt}$.\footnote{In the formula of the efficiency bound, we follow the convention in matrix theory to define the determinant of an empty matrix to be one.}
\end{theorem}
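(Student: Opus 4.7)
My strategy is to apply the semiparametric efficiency theory for sequential conditional moment restriction models with unknown functions of observables \citep{Ai_Chen_2012} to the moment characterization in Lemma \ref{lm:1}. That lemma expresses the model as a system containing: the unconditional restriction pinning down $ATT(g,t)$, the conditional restriction defining $CATT(g,t,X)$ through the inverse propensity transformation of $(Y_t-Y_{g-1})$, $g-2$ ``overidentifying'' restrictions indexed by $\tprime\in\{2,\ldots,g-1\}$ that also pin down $m_{\infty,t,\tprime}(X)$, and the restriction defining $p_g(X)$. Because the target parameter is scalar while several moment-based representations are available, efficiency reduces to finding the optimal combination of these representations.

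First, I would construct the candidate influence functions $\mathbb{IF}_{\tprime}^{att(g,t)}$ for each admissible $\tprime$ as defined in~\eqref{eqn:IF-g}; each is a valid RAL influence function for $ATT(g,t)$ because $\widetilde{Y}_{g,t,\tprime}$ is the Neyman-orthogonal doubly robust score associated with baseline period $\tprime$, and mean-zero follows directly from Lemma \ref{lm:1}. Any $X$-measurable weighted average $\sum_{\tprime=1}^{g-1}w_{\tprime}(X)\mathbb{IF}_{\tprime}^{att(g,t)}$ with $\sum_\tprime w_\tprime(X)=1$ is also a RAL influence function. The minimum-variance choice conditional on $X$ is a standard GLS problem: minimizing $w(X)'V_{gt}(X)w(X)$ subject to $\mathbf{1}'w(X)=1$ yields $w^{*}(X)=V_{gt}(X)^{-1}\mathbf{1}/(\mathbf{1}'V_{gt}(X)^{-1}\mathbf{1})$, giving the first displayed expression for $\mathbb{EIF}^{att(g,t)}$. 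To confirm this candidate is the EIF---and not merely the best within this class---I would verify (i) it lies in the closure of the parametric-submodel score space of the model in Lemma \ref{lm:1}, and (ii) its representation is orthogonal to the nuisance tangent directions generated by perturbations of $m_{\infty,t,\tprime}(\cdot)$ and $p_g(\cdot)$. Property (ii) follows because each $\widetilde{Y}_{g,t,\tprime}$ is already doubly robust, and inverse-covariance reweighting in $\tprime$ preserves this orthogonality component-by-component.

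Second, I would establish $V_{gt}(X)=V_{gt}^{*}(X)$. Splitting $\mathbb{IF}_{\tprime}^{att(g,t)}$ on the event $\{G=g\}$ versus $\{G=\infty\}$, on the treated branch the stochastic piece becomes $\pi_g^{-1}[(Y_t-Y_{\tprime})-m_{g,t,\tprime}(X)]$ and on the never-treated branch it becomes $-\pi_g^{-1}(p_g(X)/p_\infty(X))[(Y_t-Y_{\tprime})-m_{\infty,t,\tprime}(X)]$, with an additive $X$-measurable constant. Because the two branches are disjoint conditional on $X$, the conditional covariance decomposes cleanly into within-group pieces; multiplying out with $\E[G_g\mid X]=p_g(X)$ and $\E[G_\infty\mid X]=1-p_g(X)$ produces the expression in~\eqref{eqn:V-star} up to the common factor $p_g(X)^{2}/\pi_g^{2}$, which is absorbed when forming the optimal weights.

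Finally, I would assemble the efficiency bound via the law of total variance applied to $\mathbb{EIF}^{att(g,t)}$. The conditional mean given $X$ equals $(G_g/\pi_g)[CATT(g,t,X)-ATT(g,t)]$, so taking expectations yields the first term $\pi_g^{-2}\mathbb{E}[G_g(CATT(g,t,X)-ATT(g,t))^{2}]$. The conditional variance collapses to the minimized quadratic $(\mathbf{1}'V_{gt}(X)^{-1}\mathbf{1})^{-1}$; combining with $V_{gt}=V_{gt}^{*}$ and expanding the inverse via the adjugate identity $A^{-1}=\operatorname{adj}(A)/|A|$ gives the ratio $|V_{gt}^{*}(X)|/\sum_{j,j'}(-1)^{j+j'}|V_{gt,jj'}^{*}(X)|$, with the $p_g(X)^{2}$ factor arising from the $\pi_g^{-2}p_g(X)^{2}$ normalization in the covariance computation. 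Summing the two contributions yields $V_{\text{eff}}$.

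The main obstacle will be step one's verification that the GLS-weighted combination coincides with the EIF obtained from the Ai--Chen variational problem. The system in Lemma \ref{lm:1} is sequential (each $m_{\infty,t,\tprime}$ appears in a separate conditional moment and in the $CATT$-defining moment), and Chen--Santos overidentified, so the EIF is formally characterized as a projection involving Lagrange multipliers from a calculus-of-variations problem with nested conditional expectations. The work lies in showing that, after unraveling those nested matrix inversions block-by-block, the multipliers collapse to the single conditional-inverse-covariance weighting displayed in the theorem; once this reduction is in hand, the remaining algebra is routine.
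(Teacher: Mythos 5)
Your skeleton (candidate influence functions indexed by baseline period, GLS weighting, $V$ versus $V^*$ equivalence, law of total variance for the bound) matches the structure of the result, but the central step is asserted rather than proved. Minimizing $w(X)'V_{gt}(X)w(X)$ over weights summing to one only identifies the best influence function \emph{within the class you wrote down}; to conclude it is the efficient influence function you must show it lies in the model's tangent space, and your proposed substitute check---orthogonality to nuisance tangent directions via double robustness---does not do this, because \emph{every} influence function (gradient) of $ATT(g,t)$ is orthogonal to nuisance scores, so property (ii) cannot distinguish the efficient one. You acknowledge that the multipliers from the Ai--Chen variational problem must be shown to ``collapse'' to the inverse-conditional-covariance weights, but that collapse is precisely the content of the paper's proof: solving the calculus-of-variations problem for $r^*$, establishing the block-diagonality of $(L'\Sigma_2^{-1}L)^{-1}$, the Schur-complement identity $V_1-V_2'\Pi^{-1}V_2=V^*_{gt}(X)$, and verifying that the leftover terms in the efficient score vanish. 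Deferring this as ``routine'' leaves the theorem unproved. In addition, the model in Lemma \ref{lm:1} contains moments for \emph{all} post-treatment periods and their $CATT(g,t'',\cdot)$'s; the paper's Part~2 shows by a block-inversion induction that these extra restrictions do not alter the EIF of a single $ATT(g,t)$, a reduction your proposal silently assumes.

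Two bookkeeping errors would also derail steps two and three as written. First, $V_{gt}(X)$ is \emph{not} equal to $V^*_{gt}(X)$ up to the factor $p_g(X)^2/\pi_g^2$: because the conditional means of $Y_t-Y_{\tprime}-m_{\infty,t,\tprime}(X)$ differ across the two groups, one has
\begin{align*}
V_{gt}(X)=\frac{p_g(X)^2}{\pi_g^2}V^*_{gt}(X)+\frac{p_g(X)(1-p_g(X))}{\pi_g^2}\left(CATT(g,t,X)-ATT(g,t)\right)^2\mathbf{1}\mathbf{1}',
\end{align*}
and the equality of the two weight vectors holds only because the discrepancy is proportional to $\mathbf{1}\mathbf{1}'$, which requires the Sherman--Morrison argument, not proportionality. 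Second, if you drop that rank-one term and apply the law of total variance with conditional mean $\pi_g^{-1}p_g(X)(CATT(g,t,X)-ATT(g,t))$, the first piece of the bound comes out as $\pi_g^{-2}\E[p_g(X)^2(CATT(g,t,X)-ATT(g,t))^2]$ rather than the theorem's $\pi_g^{-2}\E[G_g(CATT(g,t,X)-ATT(g,t))^2]=\pi_g^{-2}\E[p_g(X)(CATT(g,t,X)-ATT(g,t))^2]$; the missing $p_g(1-p_g)$ contribution is exactly the rank-one term you discarded. Both slips are fixable, but as proposed the variance accounting is internally inconsistent, and the main identification of your GLS candidate with the EIF remains an open gap.
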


In the proof of Theorem \ref{thm:efficiency-fixed-g}, we show that including one post-treatment at a time and focusing on one $ATT(g,t)$ at a time leads to the same efficiency bound as including all post-treatments and estimating all $ATT(g,t)$'s. The main advantage of working with one $ATT(g,t)$ at a time is that it avoids dealing with a much larger variance-covariance matrix, which would preclude us from getting more intuition about efficiently aggregating each pre-treatment period. When it comes to estimation, this simpler characterization will allow us to avoid dealing with a much larger number of conditional moment restrictions that would lead to more computationally challenging procedures. Moreover, as a side result, this characterization allows us to present the semiparametric efficiency bound for the just-identified DiD setup when one imposes Assumption \ref{asm:pt-post} instead of Assumption \ref{asm:pt-n-all} in Corollary \ref{cor:2-period} as a corollary of Theorem \ref{thm:efficiency-fixed-g}.

\begin{corollary} \label{cor:2-period}
Suppose there are only two treatment groups, such that $G\in \mathcal{G}=\{g,\infty\}$, with $g\ge 2$. Under Assumptions \ref{asm:sampling_panel}, \ref{asm:overlap}, \ref{asm:no anticipation} and \ref{asm:pt-post}, the efficient influence function of a $ATT(g,t)$, $t\ge g$, is equal to $\mathbb{IF}_{g-1}^{att(g,t)}$, and the semiparametric efficiency bound reduces to $V_{\text{eff,j-id}} =  \mathbb{E}\left[\left( \widetilde{Y}_{g,t,g-1} - \frac{G_g}{\pi_g}ATT(g,t)\right)^2\right]$.
\end{corollary}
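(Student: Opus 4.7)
\textbf{Proof sketch for Corollary \ref{cor:2-period}.} The plan is to derive the corollary by specializing the argument behind Theorem \ref{thm:efficiency-fixed-g} to the case in which Assumption \ref{asm:pt-post} replaces Assumption \ref{asm:pt-n-all}. The key observation is that \ref{asm:pt-post} only restricts the joint evolution of the treated cohort $g$ and the never-treated cohort $\infty$ in post-treatment periods $t\ge g$, so the pre-trend conditional moment restrictions $\E[G_g(Y_{\tprime}-Y_1)/p_g(X) - G_\infty(Y_{\tprime}-Y_1)/p_\infty(X) \mid X]=0$ for $2\le \tprime \le g-1$ that appear in Lemma \ref{lm:1} are no longer implied by the identifying assumptions.

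The first step is to re-derive the analog of Lemma \ref{lm:1} under Assumption \ref{asm:pt-post}. The only surviving restrictions are (i) $\E[G_g(ATT(g,t)-CATT(g,t,X))]=0$, (ii) the single-baseline identification $\E[CATT(g,t,X) - G_g(Y_t-Y_{g-1})/p_g(X) + G_\infty(Y_t-Y_{g-1})/p_\infty(X)\mid X]=0$ at baseline $\tprime=g-1$, and (iii) the propensity-score moment $\E[G_g-p_g(X)\mid X]=0$. Hence, for each $t\ge g$, the DiD model under \ref{asm:pt-post} is just-identified: only a single DiD conditional moment is available per post-treatment period, corresponding to the baseline $\tprime=g-1$.

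The second step is to feed this reduced system into the same variational construction used in the proof of Theorem \ref{thm:efficiency-fixed-g}, but now with the stacked vector $\mathbb{IF}^{att(g,t)}$ collapsing to the scalar $\mathbb{IF}_{g-1}^{att(g,t)}$. The conditional covariance $V_{gt}(X)$ becomes a positive scalar, and the optimal aggregation weight $\mathbf{1}'V_{gt}(X)^{-1}/(\mathbf{1}'V_{gt}(X)^{-1}\mathbf{1})$ equals one. Therefore $\mathbb{EIF}^{att(g,t)}=\mathbb{IF}_{g-1}^{att(g,t)}$, and squaring and taking expectations delivers $V_{\text{eff,j-id}} = \E[(\widetilde{Y}_{g,t,g-1} - G_g ATT(g,t)/\pi_g)^2]$.

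The main potential obstacle is verifying that Assumption \ref{asm:pt-post} carries no hidden overidentifying restriction beyond the three moments above, or equivalently, that the nuisance tangent space under \ref{asm:pt-post} is rich enough that the projection of $\mathbb{IF}_{g-1}^{att(g,t)}$ onto the model tangent space equals itself. Since \ref{asm:pt-post} imposes no cross-group restriction on pre-treatment outcomes and no restriction tying post-treatment evolutions across baselines other than $g-1$, the family of distributions satisfying \ref{asm:sampling_panel}, \ref{asm:overlap}, \ref{asm:no anticipation}, and \ref{asm:pt-post} places no further constraints on the joint distribution of observables. Consequently, no efficiency gain over $\mathbb{IF}_{g-1}^{att(g,t)}$ is possible, and the stated variance bound is sharp, which aligns with the just-identified two-period, two-group result of \citet{Santanna2020}.
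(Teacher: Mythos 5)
Your proposal is correct and follows essentially the same route as the paper: the paper's own proof simply notes that under Assumption \ref{asm:pt-post} the model for $ATT(g,t)$ is just-identified, so the stacked vector of influence functions collapses to the single $\mathbb{IF}_{g-1}^{att(g,t)}$, the weight $\mathbf{1}'V_{gt}(X)^{-1}/(\mathbf{1}'V_{gt}(X)^{-1}\mathbf{1})$ equals one, and the bound is the second moment of that influence function. Your extra step verifying that \ref{asm:pt-post} generates no hidden overidentifying restrictions is exactly the implicit justification behind the paper's one-line argument, so the two proofs coincide in substance.
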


Theorem \ref{thm:efficiency-fixed-g} and Corollary \ref{cor:2-period} have several interesting implications.  First, suppose we only use the pre-treatment period $t_{\text{pre}}=g-1$ as a baseline, so PT holds only for post-treatment periods (Assumption \ref{asm:pt-post}). Corollary \ref{cor:2-period} shows that the efficient influence function is $\mathbb{IF}_{g,g-1}^{g,t}$ under Assumption \ref{asm:pt-post}. This is effectively the same as dropping data from all periods but time $t$ (post-treatment) and time $g-1$ (pre-treatment), transforming the multi-period DiD model into a two-period DiD setup. Thus, under Assumption \ref{asm:pt-post}, Corollary \ref{cor:2-period} can be understood as generalizing the semiparametric efficiency bound result of \cite{Santanna2020} from the much simpler 2-group-2-period DiD setup to the multi-period DiD setup with a single treatment date. Importantly, Corollary \ref{cor:2-period} highlights that pre-treatment data beyond the last pre-treatment period cannot be used to improve asymptotic efficiency, which can be hard to motivate empirically.

When one can leverage multiple pre-treatment periods (Assumption \ref{asm:pt-n-all}), Theorem \ref{thm:efficiency-fixed-g} highlights that the efficient influence function, in this case, is an optimally weighted average of all influence functions based on different baseline periods, where the optimal set of weights is computed by minimizing the conditional variance given $X$. Notably, the optimal way to aggregate pre-treatment periods for estimating $ATT(g,t)$ will generally differ from when estimating $ATT(g,\tprime)$, with $\tprime\ne t$, with weights generally being non-uniform across pre-treatment periods and covariate strata. Of course, there are special cases where weighing all pre-treatment information equally is optimal. This is the case when outcome changes are (conditionally) uncorrelated across periods for both the treated and comparison group and have constant (conditional) variances across pre-treatment periods for both treatment groups (\citealp{Wooldridge2021a}, \citealp{Borusyak2023}). Outside this scenario, it is hard to see a setup where equal weights for all pre-treatment periods are asymptotically optimal. Our semiparametric efficiency results do not impose any of these strong, hard-to-empirically motivate assumptions. As highlighted in Lemma \ref{lm:1}, we do not impose additional restrictions beyond the identification assumptions, therefore being able to capture much richer notions of heterogeneity.

The efficient weights in Theorem \ref{thm:efficiency-fixed-g} also vary with covariates. This is intuitive, as the optimal way to combine pre-treatment periods may vary across covariate strata. For instance, it can be that for a particular partition of the covariate space, more recent pre-treatment periods are ``more informative'', while for another partition, it can be that all pre-treatment periods are equally informative. The flexibility of the weights varying with $X$ is meant to capture exactly this. It is also worth mentioning that even when the covariance terms in \eqref{eqn:V-star} are trivial functions of $X$, a type of homoscedasticity assumption within each treatment group, $V_{gt}^*(X)$ and, therefore, the optimal weights would still generally depend on $X$ via the propensity score $p_g(X)$.\footnote{It may be the case that the ratio of conditional covariance and the propensity score in \eqref{eqn:V-star} is constant, even when all these terms are not trivial functions of the covariates. Nonetheless, this seems like a knife-edge case requiring a very specific covariance structure, which is unrealistic for most applications.} Thus, Theorem \ref{thm:efficiency-fixed-g} stresses the importance of choosing covariate-specific weights to achieve semiparametric efficiency. 

The efficient influence function in Theorem \ref{thm:efficiency-fixed-g} also informs how one can construct DiD estimators that achieve the semiparametric efficiency bound by effectively weighting different pre-treatment periods. More specifically, we can leverage the fact that the efficient influence function has mean zero to get the following DiD estimand:
\begin{align}\label{eqn:ATT_gt_opt_estimand}
ATT(g,t) = \E\left[\frac{\mathbf{1}'V_{gt}^*(X)^{-1}}{\mathbf{1}'V_{gt}^*(X)^{-1} \mathbf{1} }\widetilde{Y}_{g,t} \right],
\end{align}
where $\widetilde{Y}_{g,t} = (\widetilde{Y}_{g,t,1}, \dots, \widetilde{Y}_{g,t,g-1})'$ is a $(g-1) \times 1$ column vector of transformed outcomes.
Thus, the DiD estimand $ATT(g,t)$ uses a combination of pre-treatment periods as ``effective'' baselines, where the weights for each of these pre-treatment periods are given by $\mathbf{1}'V_{gt}^*(X)^{-1}\big/\mathbf{1}V_{gt}^*(X)^{-1} \mathbf{1}$. This expression highlights that, indeed, ``more informative'' baseline periods get weighted more than ``less informative'' periods, where the relevant notion of ``informativeness'' is given by the inverse of $V_{gt}^*(X)$ with elements defined in \eqref{eqn:V-star}. 

\begin{remark}\label{rem:sdid}
    To our knowledge, the synthetic DiD procedure of \citet{Arkhangelsky2021_SDiD} is the only other panel-data method that weights pre-treatment periods in a non-uniform manner. However, many important differences between our procedure and theirs are worth stressing. First, and in contrast to \citet{Arkhangelsky2021_SDiD}, our non-uniform weights are solved in closed-form for the semiparametric efficiency purpose. Second, our weights can vary with covariates, while covariates play no major role in their synthetic DiD procedures. On the other hand, our results depend on the plausibility of parallel trends, while synthetic DiD procedures can accommodate some deviations of parallel trends. Finally, while we focus on setups with a fixed number of periods, \citet{Arkhangelsky2021_SDiD} inferential procedures are tailored for applications with a diverging number of pre-treatment periods. 
\end{remark}


\begin{remark}\label{rem:summary}
    Researchers often want to summarize the average treatment effects across all periods into a scalar parameter,
    \begin{align*}
ES_{\text{avg}} &= \dfrac{1}{N_E}\sum_{e\in \mathcal{E}} ES(e).
\end{align*}
It follows from Theorem \ref{thm:efficiency-fixed-g} and the Delta method that one can form efficient estimators for $ES_{\text{avg}}$ in this two-group DiD setup using
\begin{align}
    ES_{\text{avg}} = \dfrac{1}{T-g+1}\sum_{t=g}^T \E\left[\frac{\mathbf{1}'V_{gt}(X)^{-1}}{\mathbf{1}'V_{gt}(X)^{-1} \mathbf{1} } \widetilde{Y}_{g,t}\right]\label{eqn:EI_ES},
\end{align}
with $ATT(g,t)$'s as defined in \eqref{eqn:ATT_gt_opt_estimand}. As evident from \eqref{eqn:EI_ES}, the efficient way to combine pre-treatment information also varies with covariates, and it is unlikely to use a uniform weighting scheme in many applications with any serial correlation. 
\end{remark}

\begin{remark}\label{rem:sargan}
    From Lemma \ref{lm:1}, Theorem \ref{thm:efficiency-fixed-g}, and Corollary \ref{cor:2-period}, it is clear that our DiD model is over-identified. A natural consequence of these results is that we can directly assess the validity of the identification assumption via a Hausman-type test. One can also sequentially assess whether using data from periods further away from the treatment date is warranted via incremental Sargan tests. We discuss these testing procedures in Appendix \ref{sec:hausman-tests}.
    
    \end{remark}

\subsection{DiD setups with staggered treatment designs}\label{sec:staggered}
We now extend our semiparametric efficiency bound results for ATT and ES estimators from the previous section to the staggered setups and provide guidelines on the most precise RAL DiD and ES estimators. We start studying the overidentified staggered DiD model for $ATT(g,t)$, and later discuss the just-identified one as a special case. As we discuss below, the efficiency of $ES(e)$, $e\ge0$, follows from the efficiency of $ATT(g,t)$'s and the Delta method. 

As staggered treatment times imply the existence of multiple treatment groups, we now have that $G_\infty = 1 - \sum_{g \in \mathcal{G}_{\text{trt}}} G_g $ and $p_\infty(X) = 1-\sum_{g \in \mathcal{G}_{\text{trt}}} p_g(X)$, with $p_g(X) = \E[G_g|X]$ being a generalized propensity score, i.e., the probability of belonging to the group that is treated for the first time in period $g$ given covariates $X$. 

The following lemma characterizes the information content of our staggered DiD identification assumptions based on potential outcomes using sequential conditional moment restrictions on observable variables.

\begin{lemma}[Moment-restrictions for overidentified staggered DiD]\label{lm:2}
\phantom{abcgsfafafaagagagagahahah}\linebreak
The family of probability distributions of $(Y_{t=1}\cdots,Y_{t=T},X',G)$ satisfying Assumptions \ref{asm:sampling_panel}, \ref{asm:overlap}, \ref{asm:no anticipation}, and \ref{asm:pt-n-all} are observational equivalent to the family of probability distributions of $(Y_{t=1}\cdots,Y_{t=T},X',G)$ satisfying Assumptions \ref{asm:sampling_panel}, \ref{asm:overlap} and the following set of moment restrictions: for all $g,g' \in \mathcal{G}_{\text{trt}}\times \mathcal{G}_{\text{trt}}$ and post-treatment periods $t\in \{g,\dots, T\}$, with probability one,  
\begin{align}
        \mathbb{E}[\pi_g - G_g] & = 0, \nonumber \\
        \mathbb{E}[G_g (ATT(g,t) - CATT(g,t,X))] & = 0, \nonumber\\
        \mathbb{E}\left[ CATT(g,t,X) - \frac{G_g(Y_{t} - Y_{g-1})}{p_g(X)} + \frac{G_{\infty}(Y_{t} - Y_{g-1})}{p_{\infty}(X)} \Big| X \right] & = 0, \nonumber \\
        \mathbb{E}\left[\frac{G_{g'}(Y_{\tprime} - Y_1)}{p_{g'}(X)} - \frac{G_\infty(Y_{\tprime} - Y_1)}{p_\infty(X)}\Big| X \right] & = 0, \text{for all } 2 \leq \tprime \leq g'-1, \label{eqn:over-id-moments}\\
        \mathbb{E}[G_g - p_g(X)|X]&=0  \nonumber.
\end{align}
\end{lemma}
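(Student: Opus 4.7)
The plan is to prove both directions of the observational equivalence, mirroring the structure of the proof of Lemma~\ref{lm:1} but allowing for an arbitrary number of treated cohorts in $\mathcal{G}_{\text{trt}}$. The forward direction translates the potential-outcome restrictions into moment conditions on observables, while the reverse direction constructs a distribution over potential outcomes compatible with both the observed data and the moment restrictions, which then satisfies Assumptions~\ref{asm:no anticipation} and~\ref{asm:pt-n-all} by design.

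For the forward direction, assuming Assumptions~\ref{asm:sampling_panel}, \ref{asm:overlap}, \ref{asm:no anticipation}, and~\ref{asm:pt-n-all}, I would verify each of the five moment conditions in turn. The first is definitional since $\pi_g := \mathbb{E}[G_g]$, and the last is simply the definition of the generalized propensity score. The second follows from iterated expectations combined with $CATT(g,t,X) = \mathbb{E}[Y_t(g) - Y_t(\infty)|G=g,X]$ and the definition of $ATT(g,t)$ as the average of $CATT(g,t,X)$ over treated units. The third comes from specializing the identification formula for $CATT(g,t,X)$ in Lemma~\ref{lem:id-attgt-over} to the never-treated comparison group with baseline $\tdoubleprime = g-1$, which yields $CATT(g,t,X) = \mathbb{E}[Y_t - Y_{g-1}|G=g,X] - \mathbb{E}[Y_t - Y_{g-1}|G=\infty,X]$; rewriting each conditional expectation via inverse-propensity weighting using Assumption~\ref{asm:overlap} delivers the stated conditional moment. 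The fourth follows because Assumptions~\ref{asm:no anticipation} and~\ref{asm:pt-n-all} jointly imply $\mathbb{E}[Y_{\tprime} - Y_1|G=g',X] = \mathbb{E}[Y_{\tprime} - Y_1|G=\infty,X]$ for every $\tprime \le g'-1$ and every $g' \in \mathcal{G}_{\text{trt}}$, re-expressed in inverse-propensity form.

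For the reverse direction, I would construct a joint law for the potential outcomes $\{Y_t(g')\}_{g' \in \mathcal{G}, t \in \mathcal{T}}$ compatible with the observed $(Y_1,\dots,Y_T,X,G)$ and satisfying Assumptions~\ref{asm:no anticipation} and~\ref{asm:pt-n-all}. Consistency forces $Y_t(G) = Y_t$. For a unit in cohort $g'$ and a pre-treatment index $t < g'$, I would set $Y_t(g') = Y_t(\infty) = Y_t$, which makes Assumption~\ref{asm:no anticipation} hold by construction. For the unobserved $Y_t(\infty)$ among cohort-$g'$ units in their post-treatment periods, I would anchor the conditional mean via $\mathbb{E}[Y_t(\infty)|G=g',X] := \mathbb{E}[Y_{g'-1}|G=g',X] + \mathbb{E}[Y_t - Y_{g'-1}|G=\infty,X]$, picking any joint law compatible with this. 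Moment~(iii) ensures this choice is consistent with the $CATT$ implicit in the moment restrictions, while moment~(iv) guarantees that the constructed trend of $Y_t(\infty)$ agrees with the never-treated trend on the pre-treatment periods of every cohort.

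The principal difficulty relative to the two-group setting of Lemma~\ref{lm:1} is verifying parallel trends \emph{across} pairs of eventually-treated cohorts rather than only between a single treated group and the never-treated. Moment~(iv) is stated only between a cohort $g'$ and the never-treated on the pre-treatment window of $g'$, so one cannot read cohort-to-cohort pre-trends off the moments directly. The argument I would use is transitivity: once each eventually-treated cohort is shown to share trends with the never-treated on its pre-treatment window, any two treated cohorts $g, g'$ automatically share trends on any period where both are pre-treated, and the post-treatment construction of $Y_t(\infty)$ detailed above then propagates this alignment to every $t$, delivering the full Assumption~\ref{asm:pt-n-all}. Keeping this construction compatible with Assumption~\ref{asm:overlap} and with the conditional moments displayed in the lemma is the main technical step.
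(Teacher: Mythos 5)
Your proposal is correct, and at the strategic level it matches the paper: the forward direction is a routine verification, and the substance lies in the converse, where one constructs a joint law of potential outcomes compatible with the observed data and satisfying Assumptions \ref{asm:no anticipation} and \ref{asm:pt-n-all}. Where you differ is in the construction itself, and the two are essentially mirror images. The paper works conditionally on $X$ and, for each cohort $g$, builds the unobserved paths distributionally: $Y_1(g')\mid\{G=g\}$ is set equal in distribution to the observed $Y_1$ in that cohort, the increments $\Delta Y(g')\mid\{G=g\}$ are an independent copy of the observed never-treated increments in \emph{all} periods, and the two pieces are made independent. Parallel trends then holds by construction, and the pre-trend moments \eqref{eqn:over-id-moments} are used to verify no-anticipation. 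You instead set $Y_t(\infty)=Y_t$ pointwise on each cohort's pre-treatment window (so no-anticipation holds by construction) and anchor the post-treatment conditional means of $Y_t(\infty)$ to the never-treated trend from baseline $g'-1$; the moments \eqref{eqn:over-id-moments} are then what deliver parallel trends on the pre-treatment window, with your transitivity-through-$G=\infty$ step giving cohort-to-cohort equality exactly as in the paper. Both allocations of where the moment conditions bite are valid, and your remark that only conditional means need to be specified post-treatment is fine. Two small clarifications would tighten your write-up: (i) among the displayed restrictions, only \eqref{eqn:over-id-moments} actually constrains the distribution of observables---the remaining lines define $\pi_g$, $ATT(g,t)$, $CATT(g,t,\cdot)$, and $p_g$---so your appeal to moment (iii) to ``ensure consistency with the implicit $CATT$'' is not needed for observational equivalence; and (ii) the cross-cohort alignment runs through the never-treated group in \emph{every} period (pre via \eqref{eqn:over-id-moments}, post via your anchoring), not only on periods where both cohorts are pre-treated, though your conclusion is unaffected.
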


The first two (unconditional) sets of moment conditions in Lemma  \ref{lm:2} define each cohort's relative size, $\pi_g$, and the $ATT(g,t)$ as a functional of the conditional $ATT(g,t)$. The third and fourth sets of (conditional) moment restrictions define the conditional $ATT(g,t)$ given covariates, $CATT(g,t,X)$, and the set of overidentification restrictions implied by our Assumptions \ref{asm:sampling_panel}, \ref{asm:overlap}, \ref{asm:no anticipation}, and \ref{asm:pt-n-all}. By combining the third and fourth conditional moment restriction, it is clear that any pre-treatment period $\tprime<g$ and multiple comparison groups can be used to identify the $CATT(g,t,X)$, and therefore, the $ATT(g,t)$. This is what essentially rationalizes Lemma \ref{lem:id-attgt-over}.
The last set of conditional moment restrictions defines the generalized propensity scores, $p_g(X)$. Finally, also note that, from Lemma \ref{lm:2}, we can construct the event study parameters $ES(e), e\ge0$, as
\begin{align}
    ES(e) = \sum_{g \in \mathcal{G}_{\text{trt},e}} \frac{\pi_g}{\sum_{g \in \mathcal{G}_{\text{trt},e}} \pi_g} ATT(g,g+e), \label{eqn:ES_staggered}
\end{align}
where, for $e\ge 0$, $\mathcal{G}_{\text{trt},e} =\{g \in \mathcal{G}_{\text{trt}}: g + e \leq T\}$ denotes the set of eventually-treated groups that have data for $e$-periods after treatment started. 

Similarly to Lemma \ref{lm:1}, Lemma \ref{lm:2} presents all the identifying power of our identification assumptions in the staggered DiD setup and highlights that one can construct many different DiD estimators based on it. The relevant question is how to fully explore the empirical content of the moment conditions in Lemma \ref{lm:2} to form  DiD and ES estimators with appealing semiparametric efficiency guarantees. Here, in contrast to the setup in Lemma \ref{lm:1}, we can use different pre-treatment periods and different comparison groups when constructing DiD and ES estimators. Hence, the setup is more interesting, though it also requires additional notation to characterize how the ``most precise'' DiD and ES estimators should look. 

Let the ``generated outcome'' for a given $ATT(g,t)$ be defined as
\begin{align} \widetilde{Y}^{\text{att(g,t)}}_{g',\tprime}  =& ~\dfrac{G_g}{\pi_g} ( Y_t - Y_1 -  m_{\infty,t,\tprime}(X) -  m_{g',\tprime,1}(X)) \nonumber\\
    & - \frac{p_g(X)}{p_\infty(X)}~\dfrac{G_\infty}{\pi_g}(Y_t - Y_{\tprime} - m_{\infty,t,\tprime}(X)) - \frac{p_g(X)}{p_{g'}(X)}\dfrac{G_{g'}}{\pi_g}(Y_{\tprime} - Y_{1} - m_{g',\tprime,1}(X)) ,\label{eqn:EIF_estimand_for_each}
\end{align} 
and, for $g' \in \mathcal{G}_{\text{trt}}$ and $1 \leq \tprime \leq g'-1$, let 
\begin{align} \label{eqn:IF-general}
    \mathbb{IF}_{g',\tprime}^{att(g,t)}  = ~ \widetilde{Y}^{\text{att(g,t)}}_{g',\tprime}  + \frac{G_g}{\pi_g} ATT(g,t).
\end{align}
Note that $\widetilde{Y}^{\text{att(g,t)}}_{g',\tprime}$ uses data from the earliest pre-treatment and from $\tprime$. It also leverages observations from the ``never-treated'' group $G_\infty$ and from $g'$ as the comparison group. When $g'=g$, though, it only uses data from $G_\infty$ as a comparison group, and, therefore, \eqref{eqn:EIF_estimand_for_each} and \eqref{eqn:IF-general} reduce to \eqref{eqn:Y_tilde_single} and  \eqref{eqn:IF-g}, respectively. 

Next, we collect all noncollinear $\mathbb{IF}_{g',\tprime}^{att(g,t)}$ to characterize the ``relevant'' set of influence functions to be used in the construction of the efficient influence function. For $g=g'$, let $\mathbb{IF}_{g}^{att(g,t)} =(\mathbb{IF}_{g,1}^{att(g,t)}, \mathbb{IF}_{g,2}^{att(g,t)},\cdots,\mathbb{IF}_{g,g-1}^{att(g,t)})'$, while for each $g' \ne g$, we do not include the very first time period and let $\mathbb{IF}_{g'}^{att(g,t)} =(\mathbb{IF}_{g',2}^{att(g,t)}, \mathbb{IF}_{g',3}^{att(g,t)},\cdots,\mathbb{IF}_{g',g'-1}^{att(g,t)})'$. 
Denote the vector stacking all these vectors together by $\mathbb{IF}^{att(g,t)}_{\text{stg}}$ 
\begin{align} \label{eqn:IF-vector}
    \mathbb{IF}^{att(g,t)}_{\text{stg}} =(\mathbb{IF}_{g'}^{att(g,t),'},g' \in \mathcal{G}_{\text{trt}})',
\end{align}
and let its conditional covariance matrix be defined as $\Omega_{gt}(X) =\Cov (\mathbb{IF}^{att(g,t)}_{\text{stg}}|X)$. 
Let $\Omega_{gt}^{*}(X)$ be a matrix of the same dimension as $\Omega_{gt}$ with $(j,k)$-th element given by
\begin{align} \label{eqn:Omega-star}
    & \frac{1}{p_g(X)} \Cov (Y_{t} - Y_{1} ,Y_{t} - Y_{1} |G=g, X) + \frac{1}{p_\infty(X)} \Cov (Y_{t} - Y_{t_j'} ,Y_{t} - Y_{t_k'} |G=\infty, X) \nonumber \\
    - & \frac{\mathbf{1}\{g=g_j'\}}{p_g(X)} \Cov  (Y_{t} - Y_{1} ,Y_{t_j'} - Y_{1} |G=g,X)  -  \frac{\mathbf{1}\{g=g_k'\}}{p_g(X)} \Cov  (Y_{t} - Y_{1} ,Y_{t_k'} - Y_{1} |G=g,X) \nonumber \\
    + &  \frac{\mathbf{1}\{g_j=g_k'\}}{p_{g'_j}(X)} \Cov  (Y_{t_j'} - Y_{1}, Y_{t_k'} - Y_{1} |G=g_j',X),
\end{align}
where $(g'_s, t'_s)$ is the value that $g'$ and $\tprime$ takes in the $s$-th entry of $\mathbb{IF}^{att(g,t)}_{\text{stg}}$.
Finally, let $q_{g,e} = \P(G=g|G +e \in [2,T])$.

The next Theorem establishes the semiparametric efficiency bound for $ATT(g,t)$ and $ES(e)$ parameters under \ref{asm:pt-n-all} and staggered treatment adoption. We present the semiparametric efficiency bound for the just-identified DiD setup when one imposes Assumption \ref{asm:pt-post} instead of Assumption \ref{asm:pt-n-all} in Corollary \ref{cor:staggered_just_id}.
\begin{theorem} \label{thm:efficiency}
    Suppose that treatment adoption is staggered over time. Under Assumptions \ref{asm:sampling_panel}, \ref{asm:overlap}, \ref{asm:no anticipation} and \ref{asm:pt-n-all}, the efficient influence function of $\pi_g$ and $ATT(g,t)$, $t\ge g$, are given by
    \begin{align*}
        \mathbb{EIF}^{\pi_g} & = G_g - \pi_g, \\
        \mathbb{EIF}^{att(g,t)}_{\text{stg}} & = \frac{\mathbf{1}'\Omega_{gt}(X)^{-1}}{\mathbf{1}'\Omega_{gt}(X)^{-1} \mathbf{1} } \mathbb{IF}^{att(g,t)}_{\text{stg}} = \frac{\mathbf{1}'\Omega_{gt}^{*}(X)^{-1}}{\mathbf{1}'\Omega_{gt}^{*}(X)^{-1} \mathbf{1} } \mathbb{IF}^{att(g,t)}_{\text{stg}}.
    \end{align*}
    Consequently, the efficient influence function of a $ES(e)$, $e\ge0$, is given by
    \begin{align*}
     \mathbb{EIF}^{es(e)}_{\text{stg}} = \sum_{g \in \mathcal{G}_{\text{trt},e}} \left(q_{g,e} \mathbb{EIF}^{att(g,g+e)}_{\text{stg}} + \frac{ATT(g,g+e)}{\sum_{g' \in \mathcal{G}_{\text{trt},e}} \pi_{g'} } \left( (G_{g} - \pi_g ) - q_{g,e} \sum_{s \in \mathcal{G}_{\text{trt},e}}(G_s - \pi_s) \right) \right). 
    \end{align*}
    The semiparametric efficiency bounds are obtained as the second moments of the efficient influence functions, assuming that these second moments are finite.
\end{theorem}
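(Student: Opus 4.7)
The plan is to apply the semiparametric efficiency framework of \citet{Ai_Chen_2012} for sequential conditional moment restrictions with unknown functions, using Lemma \ref{lm:2} as the starting point. The proof naturally decomposes into three parts corresponding to the three efficient influence functions.

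The EIF for $\pi_g$ is immediate: since $\pi_g=\E[G_g]$ is an expectation of an observable, standard arguments yield $\mathbb{EIF}^{\pi_g} = G_g - \pi_g$. For $ATT(g,t)$, I would parameterize regular parametric submodels indexed by a scalar perturbation, compute the scores, and characterize the nuisance tangent space generated by the unknown functions $CATT(g,t,X)$, $m_{\infty,t,\tprime}(X)$, $m_{g',\tprime,1}(X)$, and $p_g(X)$. The EIF is then obtained by projecting any ``basic'' influence function onto the orthogonal complement of the nuisance tangent space and solving a calculus of variations problem for the optimal instruments. Lemma \ref{lem:id-attgt-over} guarantees that each $\mathbb{IF}_{g',\tprime}^{att(g,t)}$ in \eqref{eqn:IF-general} has mean zero under the model, so each serves as a valid candidate influence function for $ATT(g,t)$.

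The key step is showing that the efficient combination of these candidate influence functions—after orthogonalization against the nuisance scores—reduces to a GLS-type weighting: the optimal conditional combination uses weights proportional to $\mathbf{1}'\Omega_{gt}(X)^{-1}$, normalized so that weights sum to one almost surely. This is analogous to the single-treatment-date argument in Theorem \ref{thm:efficiency-fixed-g}, except that now we also aggregate across comparison cohorts $g'\in\mathcal{G}_{\text{trt}}$, excluding the very first time period from all $g'\neq g$ to eliminate colinearities arising from the shared baseline $Y_1$. I would establish the equivalence $\mathbf{1}'\Omega_{gt}(X)^{-1}/(\mathbf{1}'\Omega_{gt}(X)^{-1}\mathbf{1})=\mathbf{1}'\Omega_{gt}^*(X)^{-1}/(\mathbf{1}'\Omega_{gt}^*(X)^{-1}\mathbf{1})$ by directly computing $\Cov(\mathbb{IF}_{g',\tprime}^{att(g,t)},\mathbb{IF}_{g''',\tprime'}^{att(g,t)}\mid X)$ using the disjoint support of $G_g, G_\infty, G_{g'}$ and the mean-zero property of the nuisance-correction terms. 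The cross-products of indicators vanish across disjoint cohorts, leaving only the six covariance contributions collected in \eqref{eqn:Omega-star}, and the resulting difference $\Omega_{gt}(X)-\Omega_{gt}^*(X)$ lies in the column space orthogonal to $\mathbf{1}$ (i.e., it is a rank-one perturbation of the form $ab'+ba'$ with $b\perp \mathbf{1}$), so the normalized inverses coincide.

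For $ES(e)$, I would apply the Delta method to the identification \eqref{eqn:ES_staggered}, differentiating through the ratio defining $q_{g,e}=\pi_g/\sum_{g'\in\mathcal{G}_{\text{trt},e}}\pi_{g'}$. Chain-rule differentiation against the EIFs for $\pi_g$ (from the first part) and $ATT(g,g+e)$ (from the second part), together with the fact that EIFs are preserved under smooth functional transformations, yields exactly the stated expression. The main obstacle is the second part: correctly setting up and solving the projection problem for the staggered case, in particular verifying that (i) the orthogonalization of the nuisance score associated with $p_g(X)$ and with each $m_{\infty,t,\tprime}(X)$, $m_{g',\tprime,1}(X)$ leaves the influence function invariant modulo terms that vanish in the optimal combination, and (ii) the GLS aggregation and the claimed equivalence between $\Omega_{gt}$ and $\Omega_{gt}^*$ hold. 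Once these structural computations are in place, finiteness of second moments gives the semiparametric efficiency bounds as $\E[(\mathbb{EIF})^2]$.
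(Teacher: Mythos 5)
Your high-level route---orthogonalizing the moment system of Lemma \ref{lm:2} within the \citet{Ai_Chen_2012} framework, obtaining the EIF of $ATT(g,t)$ as a GLS-type combination of the stacked influence functions, and getting $ES(e)$ by the delta method---is the same as the paper's, which (after a rotation of the conditional moments) reduces the staggered case to the single-date computation of Theorem \ref{thm:efficiency-fixed-g}. However, one of your concrete steps is wrong as stated. The equivalence of the $\Omega_{gt}$- and $\Omega^{*}_{gt}$-based weights does \emph{not} come from a difference ``orthogonal to $\mathbf{1}$''. Because every entry of $\mathbb{IF}^{att(g,t)}_{\text{stg}}$ has the \emph{same} conditional mean given $X$ (proportional to $p_g(X)$ times the centered $CATT(g,t,X)$), and the cross terms between that common component and the mean-zero residuals vanish, the actual relation is $\Omega_{gt}(X) = \frac{p_g(X)^2}{\pi_g^2}\,\Omega^{*}_{gt}(X) + c(X)\,\mathbf{1}\mathbf{1}'$ for a nonnegative scalar function $c(X)$, i.e.\ a rank-one perturbation \emph{along} $\mathbf{1}$ plus a scalar rescaling; the equality of normalized weights then follows from the Sherman--Morrison formula, exactly as in the proof of Theorem \ref{thm:efficiency-fixed-g}. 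A perturbation of the form $ab'+ba'$ with $b\perp\mathbf{1}$, as you posit, would not in general leave $\mathbf{1}'\Omega^{-1}\big/\mathbf{1}'\Omega^{-1}\mathbf{1}$ unchanged, so your stated mechanism would not deliver the claimed identity even granting your (incorrect) description of the structure.

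Two further gaps. First, the EIF for $\pi_g$ is not ``immediate because $\pi_g$ is an expectation of an observable'': the model is overidentified, and in a restricted model the EIF of such a functional can in principle be sharpened. It equals $G_g-\pi_g$ here either because the parallel-trends restrictions constrain only the conditional outcome distribution given $(G,X)$ and leave the $(G,X)$ components variation-free, or---as the paper does---by carrying out the Ai--Chen computation and showing the correction term $B(X)^{-1}L(X)'\Sigma_2(X)^{-1}\rho_2$ collapses so that the restrictions contribute nothing; you need one of these arguments, not just ``standard arguments''. Second, the heart of the theorem---that after orthogonalization the optimal combination is exactly $\mathbf{1}'\Omega_{gt}(X)^{-1}\big/\mathbf{1}'\Omega_{gt}(X)^{-1}\mathbf{1}$---is deferred in your proposal to ``analogous to Theorem \ref{thm:efficiency-fixed-g}''. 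To make that analogy operative you must first rotate the staggered conditional moments (adding the $CATT$-defining moment to each pre-trend moment for $g'\neq g$) so the system has the same $L(X)$/$\Sigma_2(X)$ block structure as the single-date case, and justify dropping the redundant unconditional moments and the $CATT$ moments for other post-treatment periods without changing the bound (the paper does this by an induction argument). As written, these steps are missing rather than merely compressed; the $ES(e)$ delta-method step and the exclusion of the period-one entries for $g'\neq g$ to avoid collinearity are fine.
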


\medskip
\begin{corollary} \label{cor:staggered_just_id}
Suppose that treatment adoption is staggered over time. Under Assumptions \ref{asm:sampling_panel}, \ref{asm:overlap}, \ref{asm:no anticipation} and \ref{asm:pt-post}, the model for $ATT(g,t)$, $t\ge g$, is nonparametrically just-identified, and the efficient influence function of a $ATT(g,t)$, $t\ge g$, is equal to $\mathbb{IF}_{g,g-1}^{att(g,t)}$. In such cases, the efficient influence function of a $ES(e)$, $e\ge0$, is given by
\begin{align*}
         \mathbb{EIF}^{es(e)}_{\text{j-id,stg}} = \sum_{g \in \mathcal{G}_{\text{trt},e}} \left(q_{g,e} \mathbb{IF}_{g,g-1}^{att(g,g+e)} + \frac{ATT(g,g+e)}{\sum_{g' \in \mathcal{G}_{\text{trt},e}} \pi_{g'} } \left( (G_{g} - \pi_g ) - q_{g,e} \sum_{s \in \mathcal{G}_{\text{trt},e}}(G_s - \pi_s) \right) \right). 
    \end{align*}
\end{corollary}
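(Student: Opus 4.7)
\medskip

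\noindent\textbf{Proof plan for Corollary \ref{cor:staggered_just_id}.} The plan is to reduce the corollary to the machinery already developed for Theorem \ref{thm:efficiency} by appropriately restricting the set of admissible moment restrictions. Since Assumption \ref{asm:pt-post} is strictly weaker than Assumption \ref{asm:pt-n-all}, the task is to identify which moment conditions from Lemma \ref{lm:2} survive, then specialize the EIF derivation.

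First, I would derive the analog of Lemma \ref{lm:2} under Assumption \ref{asm:pt-post}. Parallel trends are now imposed only at $t \geq g$ and only against the never-treated comparison group, so the conditional moment restriction in \eqref{eqn:over-id-moments}, which encodes parallel trends in pre-treatment periods across eventually-treated cohorts, cannot be obtained from the potential-outcome model. The surviving restrictions are $\mathbb{E}[\pi_g - G_g] = 0$, the representation $\mathbb{E}[G_g(ATT(g,t) - CATT(g,t,X))] = 0$, the DiD identification equation for $CATT(g,t,X)$ using only $\tprime = g-1$ with the never-treated group, and the generalized propensity score restriction $\mathbb{E}[G_g - p_g(X)\mid X] = 0$. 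I would verify observational equivalence between this moment system and the PT-Post potential outcome model by adapting the construction used in the proof of Lemma \ref{lm:2}. This confirms that $ATT(g,t)$ is nonparametrically just-identified.

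Second, I would derive the EIF of $ATT(g,t)$ by the same variational argument used in Theorem \ref{thm:efficiency}. There, the EIF arose by projecting onto the linear span of candidate influence functions $\{\mathbb{IF}_{g',\tprime}^{att(g,t)}\colon (g',\tprime)\in \mathcal{H}^{g,t}\}$ and solving the optimal weighting problem encoded by $\Omega_{gt}(X)$. Under PT-Post, the only admissible pair is $(g',\tprime) = (g, g-1)$, so the vector $\mathbb{IF}^{att(g,t)}_{\text{stg}}$ collapses to the scalar $\mathbb{IF}_{g,g-1}^{att(g,t)}$ and the weighting step degenerates to $\mathbf{1}'\Omega_{gt}(X)^{-1}/(\mathbf{1}'\Omega_{gt}(X)^{-1}\mathbf{1}) = 1$. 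Since $\mathbb{IF}_{g,g-1}^{att(g,t)}$ is mean zero and lies in the tangent space of the reduced model, it must equal the EIF.

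Third, the EIF of $ES(e)$ follows from the chain rule for influence functions applied to \eqref{eqn:ES_staggered}. Writing $ES(e) = \sum_{g \in \mathcal{G}_{\text{trt},e}} q_{g,e}\, ATT(g, g+e)$ with $q_{g,e} = \pi_g / \sum_{g' \in \mathcal{G}_{\text{trt},e}} \pi_{g'}$, and using $\mathbb{EIF}^{\pi_g} = G_g - \pi_g$, the ratio structure of $q_{g,e}$ contributes the term $\big(ATT(g,g+e) / \sum_{g'} \pi_{g'}\big)\big((G_g - \pi_g) - q_{g,e}\sum_{s}(G_s - \pi_s)\big)$, while each $ATT$ contributes $q_{g,e}\, \mathbb{IF}_{g,g-1}^{att(g,g+e)}$; collecting yields the stated formula. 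This calculation mirrors the $ES(e)$ derivation in Theorem \ref{thm:efficiency} with $\mathbb{EIF}^{att(g,g+e)}_{\text{stg}}$ replaced by $\mathbb{IF}_{g,g-1}^{att(g,g+e)}$.

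The main obstacle is the first step: carefully verifying that, under PT-Post alone, \eqref{eqn:over-id-moments} is not implied by the potential outcome model, which in turn requires exhibiting distributions that satisfy PT-Post but violate PT-All, ensuring no further overidentifying restriction is inadvertently generated by the other surviving moments. Once this just-identification claim is secured, the EIF formulas follow essentially as specializations of Theorem \ref{thm:efficiency}.
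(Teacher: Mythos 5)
Your proposal is correct and follows essentially the same route as the paper, which treats this corollary as an immediate specialization of Theorem \ref{thm:efficiency}: under Assumption \ref{asm:pt-post} only the $(g',\tprime)=(g,g-1)$ moment survives, so the weighting collapses to one and the single influence function $\mathbb{IF}_{g,g-1}^{att(g,t)}$ is the EIF (exactly the logic of the paper's proof of Corollary \ref{cor:2-period}), with the $ES(e)$ formula then following from the delta method applied to the ratio in \eqref{eqn:ES_staggered}. Your added care in verifying just-identification by exhibiting distributions satisfying PT-Post but not PT-All is a slight elaboration of what the paper leaves implicit, but it does not change the argument.
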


Theorem \ref{thm:efficiency} and Corollary \ref{cor:staggered_just_id} have some practical implications.  First, when one is only willing to assume parallel trends for post-treatment periods, as in Assumption \ref{asm:pt-post}, Corollary \ref{cor:staggered_just_id} shows that this is essentially the same problem as in the single treatment setup and that $ATT(g,t)$ is just-identified: one can only use $g-1$ period as baseline, and the never-treated group as comparison group.

When one can leverage multiple pre-treatment periods as in Assumption \ref{asm:pt-n-all}, Theorem \ref{thm:efficiency} highlights that the efficient influence function is an optimally weighted average of all influence functions based on different baseline periods and different comparison groups. Theorem \ref{thm:efficiency} also shows that the optimal way to pool information across periods and comparison groups is likely to be non-uniform,  highlighting that not every comparison group and pre-treatment period has the same ``relevance'' when it comes to inference. Importantly, these weights are allowed to vary with the data-generating process and are derived as a consequence of our semiparametric efficiency considerations. As no currently available staggered DiD and ES estimator shares these weighting schemes, they do not generally achieve the semiparametric efficiency bound, suggesting that we can potentially conduct more informative inferences.

Theorem \ref{thm:efficiency} also provides a ``blueprint'' on how one can construct DiD and ES estimators that enjoy attractive semiparametric efficiency asymptotic guarantees. Similarly to the single-treatment date, we can explore the efficient influence function to get the following DiD $ATT(g,t)$ estimands:

\begin{align}\label{eqn:ATT_gt_opt_estimand_staggered}
ATT(g,t) = ATT_{\text{stg}}(g,t) := \E\left[\dfrac{\mathbf{1}'\Omega_{gt}^*(X)^{-1}}{\mathbf{1}'\Omega_{gt}^*(X)^{-1} \mathbf{1} }\widetilde{Y}^{\text{att(g,t)}} \right],
\end{align}
where $\widetilde{Y}^{\text{att(g,t)}}$ is a column vector of all (non-collinear) generated outcomes $\widetilde{Y}^{\text{att(g,t)}}_{g',\tprime}$ that leveraged different $(g',\tprime)$ pairs. The efficiency weights are given by  $\mathbf{1}'\Omega_{gt}^*(X)^{-1}\big/\mathbf{1}\Omega_{gt}^*(X)^{-1} \mathbf{1}$. We recommend plotting the expected value of these weights so that one can have a better understanding of how each pre-treatment period and comparison group is leveraged for efficiency considerations. We do this in our simulations and empirical application. 

Based on \eqref{eqn:ATT_gt_opt_estimand_staggered}, one can straightforwardly form estimands for event study parameters $ES(e), e \ge 0$, by plugging \eqref{eqn:ATT_gt_opt_estimand_staggered} into \eqref{eqn:ES_staggered},
\begin{align}\label{eqn:ES_staggered_efficient}
ES(e) = ES_{\text{stg}}(e) := 
\sum_{g \in \mathcal{G}_{\text{trt},e}} \frac{\pi_g}{\sum_{g \in \mathcal{G}_{\text{trt},e}} \pi_g} ATT_{\text{stg}}(g,g+e)
\end{align}
An analogous procedure follows for $ES_{\text{avg}}$.

\begin{remark}[Alternative Parallel Trends]
    In this section, we present semiparametric efficiency results for two types of parallel trends: Assumption \ref{asm:pt-post} that uses never-treated units as the comparison group and does not restrict pre-treatment trends, and Assumption \ref{asm:pt-n-all} that imposes parallel trends in all periods and across all groups. We view these two PT assumptions as the ends of a spectrum, and we recognize that other types of parallel trend assumptions are also possible. It is straightforward to adjust our results in Theorem \ref{thm:efficiency} to cover those intermediate cases, too. For instance, the PT assumption involving not-yet-treated units in \citet{Callaway_Santanna_2021} restricts pre-treatment trends only once the first group is treated at period $g_{\text{min}}$. In that particular case, data from period $t=1$ until $t=g_{\text{min}}-2$ are not informative for any $ATT(g,t)$, so we could drop them entirely from the analysis and apply Theorem \ref{thm:efficiency} using the rest of the retained data. \citet{Gormley_Matsa_2011_stacking}, \citet{Deshpande_Li_2019_AEJ}, \citet{Fadlon2021_AEJApplied}, \citet{Cengiz2019}, and \citet{Baker2022} discuss other parallel trends over a set of pre- and post-treatment periods that allow one to different untreated cohorts as a comparison group. Effectively, all one needs to do to cover these intermediate cases is to adjust the $\mathbb{IF}^{att(g,t)}_{\text{stg}}$ in \eqref{eqn:IF-vector} only to include moment equations that are justified by the given version of the parallel trends assumption. 
\end{remark}

\section{Semiparametric efficient estimation and inference}\label{sec:estimation_inference}
In this section, we discuss how one can estimate and make inferences about the $ATT(g,t)$'s and $ES(e)$'s by leveraging the EIF-based estimands in \eqref{eqn:ATT_gt_opt_estimand_staggered} and \eqref{eqn:ES_staggered_efficient}. The estimator and inference procedures for DiD setups without covariates are straightforward, as they follow from standard plug-in arguments and can be seen as a special case of our proposal. The results from DiD setups with a single treatment time follow as special cases, too.

We first discuss estimation for $ATT(g,t)$'s. The EIF-based estimand $ATT_{\text{stg}}(g,t)$ in \eqref{eqn:ATT_gt_opt_estimand_staggered} naturally suggests a two-step estimation procedure for $ATT(g,t)$'s, where one first estimate the nuisance parameters, $m(X) \coloneqq((m_{\infty,t,\tdoubleprime}(X), \tdoubleprime < t),(m_{g',\tdoubleprime,1}, g' \in \mathcal{G}_{\text{trt}}, \tdoubleprime < g'))$ and $p_{\text{ratio}}(X) \coloneqq(p_g(X)\big/p_{g'}(X), (g,g') \in \mathcal{G}\times \mathcal{G})$, the conditional covariance matrix $\Omega_{gt}^*(X)$ (or $\Omega_{gt}(X)$), and then use their fitted values to form a plug-in estimators fo $ATT(g,t)$ based on \eqref{eqn:ATT_gt_opt_estimand_staggered}. 

When one uses parametric working models for the nuisance parameters, it is easy to show that the resulting estimator is doubly robust in the sense that it remains consistent for the $ATT(g,t)$ as long as the working models for either (but not necessarily both) $p_{\text{ratio}}(X)$ or $m(X)$  are correctly specified, regardless of the weighting scheme adopted; see, e.g., \citet[Theorem 1]{Santanna2020} and \citet[Theorem 2]{Callaway_Santanna_2021}.\footnote{In our case, as our estimator averages across different working models for the propensity score and regression adjustment for untreated units, it is possible to refine the notion of double robustness in the sense that our $ATT(g,t)$ estimator remains consistent if \emph{weighted averages} of functionals of $p_{\text{ratio}}(X)$ or of $m(X)$ are consistently estimated. As we advocate for an efficiency-oriented weighting scheme, we would not require it to hold for all possible weighted averages (which would essentially require that $m(X)$ or $p_{\text{ratio}}(X)$ to be correctly specified).} Inference, in this case, follows from delta-method arguments.

In practice, however, parametric models may be too restrictive, and their misspecification can still lead to asymptotic biases. Thus, in what follows, we describe a flexible nonparametric estimation procedure that can leverage modern and traditional estimators for the nuisance functions.

First, recall that $m(X)$ involves conditional expectations of changes in outcomes over time among some untreated units, given covariates, i.e., it involves terms like $ m_{g',\tdoubleprime,\tprime}(X) = \mathbb{E}[Y_{\tdoubleprime} - Y_{\tprime} |G=g',X]$ with $ g'> \max\{\tprime,\tdoubleprime\}$. This is nothing more than a nonparametric regression problem, and we can use a variety of estimators for it, including sieve-based \citep{Chen2007_Handbook}, kernel-based \citep{Cattaneo_etal_npreg_JASA_2018}, or recent machine learning methods such as random forests, lasso, ridge, deep neural nets, boosted trees, and the ensemble of these methods \citep{Chernozhukov2018_DML}. Denote these estimators as $\widehat{m}(X)$.

Analogously, $p_{\text{ratio}}(X)$ involves terms like ratios of $\mathbb{E}[G_g|X]$, and each of these conditional expectations can also be estimated using these same estimation procedures.\footnote{It is often desirable to impose that $p_g(X)$ is bounded between zero and one and choose estimators that respect this constraint, such as multinomial series logit/probit and local multinomial logit/probit estimators. See, e.g., \citet{Chen2007_Handbook} for a discussion of likelihood-based sieve estimators, and \citet{Staniswalis_kernel_logit_1989} for kernel-based ones.} Note that modeling each propensity score separately and then constructing their ratios can potentially lead to instabilities associated with estimated propensity scores close to zero. As the propensity scores enter into $ATT_{\text{stg}}(g,t)$ \eqref{eqn:ATT_gt_opt_estimand_staggered} in a ratio format, it suffices to directly estimate $p_g(X)/p_{g'}(X)$, which is not bounded between zero and one, can be more flexibly modeled, and can lead to more stable procedures. 

Towards this end, note that $p_g(X)/p_{g'}(X)$ can be found as the unique solution to the minimization problem:
\begin{align} \label{eqn:loss-function-propensity-ratio}
    \frac{p_g(X)}{p_{g'}(X)} = \argmin_{r_{g,g'}} \mathbb{E}\left[ \left(r_{g,g'}(X) - \frac{p_g(X)}{p_{g'}(X)}\right)^2 p_{g'}(X) \right] = \argmin_{r_{g,g'}} \mathbb{E}\left[ r_{g,g'}(X)^2 G_{g'} - 2 r_{g,g'}(X) G_g \right].
\end{align}
This equivalence allows the use of \(\mathbb{E}\left[ r_{g,g'}(X)^2 G_{g'} - 2 r_{g,g'}(X) G_g \right]\) as a (convex) loss function to estimate $r_{g,g'}(X):=p_g(X)/p_{g'}(X)$, and different estimation procedures can be used. An easy-to-implement estimator for $r_{g,g'}(X)$ is the sieve-based estimator $\widehat{r}_{g,g'}(X) = {\left(\psi^K(X)'\widehat{\beta}_K\right)}$, where $\psi^K(x)$ is a $K$-dimensional  vector of flexible transformations of the $X$ such as (tensor products of) cubic B-splines,
\begin{align} \label{eqn:series_ratio_pscore}
    \widehat{\beta}_K = \argmin_{\beta_K} \mathbb{E}_n\left[G_{g'} \left(\psi^K(X)'{\beta}_K\right)^2  - 2G_g \left(\psi^K(X)'{\beta}_K\right) \right],
\end{align}
and we avoid indexing $\widehat{\beta}_K$ and ${\beta}_K$ by $(g,g')$ to simplify the notation burden. 
The ($(g,g')$-specific) sieve index \( K \) can be selected using some information criteria as follows:  
\[
    \widehat{K} = \argmin_{K} 2\mathbb{E}_n\left[G_{g'} \left(\psi^K(X)'\widehat{\beta}_K\right)^2  - 2G_g \left(\psi^K(X)'\widehat{\beta}_K\right) \right] + \frac{C_n K}{n},
\]  
where the information criterion corresponds to the Akaike information criterion (AIC) when \( C_n = 2 \), or the Bayesian information criterion (BIC) when \( C_n = \log(n) \). In the appendix, we show the consistency of the estimator based on $\widehat{K}$ following \citet{Chen_Liao_2014_JoE}. Similarly to \eqref{eqn:loss-function-propensity-ratio}, we can estimate $s_{g'} :=1/p_{g'}(X)$ by solving the empirical analog of the minimization problem
\begin{align*} 
    s_{g'}(X) = \argmin_{a_g} \mathbb{E}\left[ a_g(X)^2 G_{g} - 2 a_g(X) \right].
\end{align*}
For instance, one can use sieve-estimators analogous to \eqref{eqn:series_ratio_pscore} for this task.

To estimate the conditional covariance terms of $\Omega_{gt}^*$ as defined in \eqref{eqn:Omega-star}, we propose using a Nadaraya-Watson-type estimator based on kernel smoothing. Let $Ker$ be a kernel function on the covariates space and $h>0$ a bandwidth. Denote $K_h(\cdot) =Ker(\cdot / h)/h$. For $x = X_i$, we use  $ \widehat{\Cov} (Y_t - Y_{\tprime},Y_t - Y_{\tdoubleprime}|G=g',X=x)$ as an estimator for each the covariance terms $\Cov (Y_t - Y_{\tprime},Y_t - Y_{\tdoubleprime}|G=g',X=x)$, where  $ \widehat{\Cov} (Y_t - Y_{\tprime},Y_t - Y_{\tdoubleprime}|G=g',X=x)$  is define as
\smallskip
\begin{align*}
     \dfrac{\sum_{i' : G_{i'} = g' } K_h(X_{i'} - x) (Y_{i',t} - Y_{i',\tprime} - \widehat{m}_{g',t,\tprime}(X_{i'}))(Y_{i',t} - Y_{i',\tdoubleprime} - \widehat{m}_{g',t,\tdoubleprime}(X_{i'}))}{ \sum_{i' : G_{i'} = g' } K_h(X_{i'} - x)}.
\end{align*}
Based on these estimators for the propensity score ratios and the conditional covariance terms, we estimate $\Omega^*_{gt}$ by $\widehat{\Omega}^*_{gt}$ with each $(j,k)$-th element given by the plug-in estimators of \eqref{eqn:Omega-star}.

Based on these estimators for the nuisance functions and weights, we can estimate $ATT(g,t)$ by\footnote{It is straightforward to consider variants of our estimator using sample-splitting arguments, too. }
\begin{align}
\widehat{ATT}_{\text{stg}}(g,t)= \E_n\left[ \dfrac{\mathbf{1}'\widehat{\Omega}_{gt}^{*}(X)^{-1}}{\mathbf{1}'\widehat{\Omega}_{gt}^{*}(X)^{-1} \mathbf{1} }\widehat{\widetilde{Y}}^{\text{att(g,t)}}_{\text{stg}}\right], \label{eqn:ATT-hat}
\end{align}
where 
$\widehat{\widetilde{Y}}^{\text{att(g,t)}}_{\text{stg}}$ the estimated analog of ${\widetilde{Y}}^{\text{att(g,t)}}_{\text{stg}}$, with each $ \widehat{\widetilde{Y}}^{\text{att(g,t)}}_{g',\tprime}$ given by
\begin{align}
\widehat{\widetilde{Y}}^{\text{att(g,t)}}_{g',\tprime} =&  \dfrac{G_g}{\widehat{\pi}_g} ( Y_t - Y_1 -  \widehat{m}_{\infty,t,\tprime}(X) -  \widehat{m}_{g',\tprime,1}(X))\nonumber\\
    & - \widehat{r}_{g,\infty}(X)\dfrac{G_\infty}{\widehat{\pi}_g}(Y_t - Y_{\tprime} - \widehat{m}_{\infty,t,\tprime}(X)) - \widehat{r}_{g,g'}(X)\dfrac{G_{g'}}{\widehat{\pi}_g}(Y_{\tprime} - Y_{1} - \widehat{m}_{g',\tprime,1}(X)) \Big),\label{eqn:generic_att_gt_stg_estimated}
\end{align}
and $\widehat{\pi}_g = \mathbb{E}_n[G_{g}]$. 

The estimator for the event study parameter is
\begin{align}
    \widehat{ES}(e) =\sum_{g \in \mathcal{G}_{\text{trt}}} \frac{\widehat{\pi}_g}{\sum_{g' \in \mathcal{G}_{\text{trt},e}} \widehat{\pi}_{g'}} \widehat{ATT}_{\text{stg}}(g,g+e).\label{eqn:Event_study_efficient}
\end{align}

The following theorem establishes the large sample properties of our proposed DiD and ES estimators, and highlights that they achieve the semiparametric efficiency bound.

\begin{theorem} \label{thm:asy_properties_estimator}
Let Assumptions \ref{asm:overlap}, \ref{asm:no anticipation} and \ref{asm:pt-n-all} and the regularity conditions listed in Assumption \ref{asm:regularity} in the Appendix hold. Then, our proposed nonparametric estimator for $\widehat{ATT}_{\text{stg}}(g,t)$  is consistent, asymptotically normal, and achieves the semiparametric efficiency bound, i.e., as $n \rightarrow \infty$,
    \begin{align*}
        \sqrt{n}(\widehat{ATT}_{\text{stg}}(g,t) - ATT(g,t)) & = \frac{1}{\sqrt{n}} \sum_{i=1}^n \dfrac{\mathbf{1}'\widehat{\Omega}_{gt}^{*}(X_i)^{-1}}{\mathbf{1}'\widehat{\Omega}_{gt}^{*}(X_i)^{-1} \mathbf{1} } \mathbb{IF}^{att(g,t)}_{\text{stg}}(W_i) + o_p(1) \\
        & \overset{d}{\rightarrow} N(0, V_{\text{eff}}),
    \end{align*}
    with $ V_{\text{eff}}= \Var \left(\frac{\mathbf{1}'\widehat{\Omega}_{gt}^{*}(X)^{-1}}{\mathbf{1}'\widehat{\Omega}_{gt}^{*}(X)^{-1} \mathbf{1} } \mathbb{IF}^{att(g,t)}_{\text{stg}}(W))\right)$ being the semiparametric efficiency bound in Theorem \ref{thm:efficiency}. Consequently, $\widehat{ES}(e)$ is also asymptotically normal and semiparametrically efficient.
    
\end{theorem}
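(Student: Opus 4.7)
The plan is to establish that $\widehat{ATT}_{\text{stg}}(g,t)$ is asymptotically linear with influence function equal to $\mathbb{EIF}^{att(g,t)}_{\text{stg}}$ of Theorem \ref{thm:efficiency}, then invoke the Lindeberg–Lévy CLT, and finally obtain the $\widehat{ES}(e)$ result via the Delta method. The key mechanism is the Neyman orthogonality of the EIF-based moment, which is automatic because the moment is constructed from the EIF itself: it makes the first-order impact of nuisance estimation errors vanish, so the plug-in estimator inherits the behavior of the infeasible oracle version using the true nuisances. Concretely, I would write
\begin{align*}
\sqrt{n}\bigl(\widehat{ATT}_{\text{stg}}(g,t) - ATT(g,t)\bigr) = \frac{1}{\sqrt{n}}\sum_{i=1}^n \frac{\mathbf{1}'\Omega_{gt}^{*}(X_i)^{-1}}{\mathbf{1}'\Omega_{gt}^{*}(X_i)^{-1}\mathbf{1}}\mathbb{IF}^{att(g,t)}_{\text{stg}}(W_i) + R_n,
\end{align*}
where $R_n$ collects the contributions from the plug-in nuisances $(\widehat{m},\widehat{r},\widehat{s},\widehat{\pi}_g,\widehat{\Omega}^*_{gt})$; the leading term is a sum of i.i.d.\ mean-zero variables with variance $V_{\text{eff}}$, so once $R_n = o_p(1)$ the CLT delivers the claimed limit and efficiency.

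Reducing $R_n$ to $o_p(1)$ is the main task, and it rests on two orthogonality facts. First, for the weight vector, Lemma \ref{lm:2} guarantees that \emph{any} $X$-measurable weight summing to one integrates $\widetilde{Y}^{\text{att(g,t)}}$ to $ATT(g,t)$; hence the population moment is first-order insensitive to perturbations of the weight, and only consistency (not a fast rate) of $\widehat{\Omega}^*_{gt}(X)$ is needed to neutralize its contribution. Second, for $(m,r,s,\pi_g)$, orthogonality is automatic because $\mathbb{IF}^{att(g,t)}_{\text{stg}}$ is the EIF; equivalently, the moment inherits the doubly robust structure of \citet[Theorem 1]{Santanna2020} and \citet[Theorem 2]{Callaway_Santanna_2021} generalized to multiple baselines and comparison groups. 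Under rate conditions of the form $\|\widehat{m} - m\|_2 \cdot \|\widehat{r} - r\|_2 = o_p(n^{-1/2})$ (or individual $o_p(n^{-1/4})$ rates), possibly combined with cross-fitting to avoid Donsker requirements, a standard double-machine-learning linearization then yields $R_n = o_p(1)$.

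The hard part will be the uniform-in-$X$ control of $\widehat{\Omega}^*_{gt}(X)$ and the invertibility of its plug-in inverse, so that the ratio-form weight $\mathbf{1}'\widehat{\Omega}_{gt}^{*}(X)^{-1}/\mathbf{1}'\widehat{\Omega}_{gt}^{*}(X)^{-1}\mathbf{1}$ is well-behaved. This requires (i) uniform consistency of the Nadaraya–Watson estimators of the conditional covariance entries on the support of $X$ under Assumption \ref{asm:overlap}, (ii) consistency at appropriate rates for the sieve-based $\widehat{r}_{g,g'}$ with the data-driven $\widehat{K}$ via the arguments of \citet{Chen_Liao_2014_JoE}, and (iii) verifying that the smallest eigenvalue of $\widehat{\Omega}^*_{gt}(X)$ stays bounded away from zero on a set of probability tending to one, which follows from the positive definiteness of $\Omega^*_{gt}(X)$ implied by overlap combined with uniform consistency. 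Assumption \ref{asm:regularity} in the appendix is calibrated to deliver all of these ingredients.

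Finally, for $\widehat{ES}(e)$ I would apply the Delta method to the joint asymptotic distribution of $\{(\widehat{ATT}_{\text{stg}}(g,g+e),\widehat{\pi}_g)\}_{g \in \mathcal{G}_{\text{trt},e}}$, whose joint influence function stacks $\mathbb{EIF}^{att(g,g+e)}_{\text{stg}}$ and $\mathbb{EIF}^{\pi_g} = G_g - \pi_g$. Differentiating the weighted aggregate in \eqref{eqn:ES_staggered_efficient} with respect to the $(\pi_g,ATT)$ coordinates reproduces $\mathbb{EIF}^{es(e)}_{\text{stg}}$ of Theorem \ref{thm:efficiency}, so $\widehat{ES}(e)$ is $\sqrt{n}$-asymptotically normal and semiparametrically efficient.
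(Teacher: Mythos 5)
Your proposal follows essentially the same route as the paper's proof: compare the plug-in estimator to the oracle version, use the sum-to-one structure of the weights so that only (uniform) consistency of $\widehat{\Omega}^*_{gt}(X)$ is required, control the nuisance contributions through the Neyman-orthogonal/doubly-robust structure with the product rate condition $\|\widehat{m}-m\|_{L_2}\|\widehat{r}-r\|_{L_2}=o_p(n^{-1/2})$, and finish with the CLT and the Delta method for $\widehat{ES}(e)$. The only cosmetic difference is that the paper controls the linear-in-error and weight-perturbation terms via the Donsker condition and stochastic equicontinuity (no cross-fitting), exactly as encoded in Assumption \ref{asm:regularity}, which is consistent with what you describe.
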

\bigskip

Theorem \ref{thm:asy_properties_estimator} provides the pointwise asymptotic normality results for each $(g,t)$ pair, with $t\ge g$. In practice, researchers are often interested in making inferences about multiple groups and periods to better understand treatment effect dynamics and heterogeneity. In such cases, to avoid multiple-testing problems, it is recommended to construct simultaneous confidence bands. We omit the details of this procedure as it is standard.\footnote{See, e.g., Theorems 2 and 3, Algorithm 1, and Corollaries 1 and 2 in \citet{Callaway_Santanna_2021}.} To compute standard errors, one can leverage a multiplier bootstrap procedure or take the square root of the average of the estimated EIF squared divided by the sample size. 

\begin{remark}
    Although the results in Theorem \ref{thm:asy_properties_estimator} focus on the large sample properties of our efficient DiD estimators, $\widehat{ATT}_{\text{stg}}(g,t)$, in its proof, we also establish consistency and asymptotically normality for related estimators for the $ATT(g,t)$'s that replaces the estimators for the efficient weights ${\mathbf{1}'\widehat{\Omega}_{gt}^{*}(X)^{-1}}\big/{\mathbf{1}'\widehat{\Omega}_{gt}^{*}(X)^{-1} \mathbf{1} } $ with any consistent estimator for any weights $w(x)$ that sum up to one for each covariate value $x$. Interestingly, we do not need to require rate conditions for these weights, allowing one to use many different weights as discussed in Lemma \ref{lem:id-attgt-over}. 
\end{remark}
\begin{remark}
   As our proposed estimators are based on efficient influence functions, they satisfy the Neyman orthogonality condition by construction. This implies that it is relatively straightforward to use modern machine learning estimators for the nuisance parameter without incurring a loss of asymptotic semiparametric efficiency in large samples.\footnote{Many machine learning procedures involve Neyman orthogonal moments with sample-splitting, which can lead to a loss of precision in DiD and ES estimation in small samples. This phenomenon is not unique to DiD; see, e.g., \citet{Chen_Chen_Tamer_2023_JoE} for a discussion.}  
\end{remark}

\subsection{Efficient estimation when covariates are not present}

When Assumptions \ref{asm:no anticipation} and \ref{asm:pt-n-all} hold unconditionally, the efficiency results can be simplified by dropping $X$ from the conditioning set. This leads to an even simpler construction of the efficient estimator. In particular, (\ref{eqn:ATT_gt_opt_estimand_staggered}) reduces to
\begin{align*}
ATT(g,t) = \dfrac{\mathbf{1}'(\Omega_{gt}^*)^{-1}}{\mathbf{1}'(\Omega_{gt}^*)^{-1} \mathbf{1} }\widetilde{Y}^{\text{att(g,t)}},
\end{align*}
where each entry $\widetilde{Y}^{\text{att(g,t)}}_{g',\tprime}$ of $\widetilde{Y}^{\text{att(g,t)}}$ is given by
\begin{align*}
    \widetilde{Y}^{\text{att(g,t)}}_{g',\tprime} = \mathbb{E}[Y_t - Y_1 | G=g] - (\mathbb{E}[Y_t - Y_{\tprime} | G=\infty] + \mathbb{E}[Y_{\tprime} - Y_{1} | G=g']).
\end{align*}
The covariance matrix $\Omega_{gt}^{*}$ has $(j,k)$-th element given by
\begin{align*} 
    & \frac{1}{\pi_g} \Cov (Y_{t} - Y_{1} ,Y_{t} - Y_{1} |G=g) + \frac{1}{\pi_\infty} \Cov (Y_{t} - Y_{t_j'} ,Y_{t} - Y_{t_k'} |G=\infty) \nonumber \\
    - & \frac{\mathbf{1}\{g=g_j'\}}{\pi_g} \Cov  (Y_{t} - Y_{1} ,Y_{t_j'} - Y_{1} |G=g)  -  \frac{\mathbf{1}\{g=g_k'\}}{\pi_g} \Cov  (Y_{t} - Y_{1} ,Y_{t_k'} - Y_{1} |G=g) \nonumber \\
    + &  \frac{\mathbf{1}\{g_j=g_k'\}}{\pi_{g'_j}} \Cov  (Y_{t_j'} - Y_{1}, Y_{t_k'} - Y_{1} |G=g_j'),
\end{align*}
where the indices $(g'_s, t'_s)$ are as defined in (\ref{eqn:Omega-star}). 

In estimation, one can replace the expectations and covariances for each group with the corresponding within-group sample means and sample covariances. This procedure does not involve choosing tuning parameters or estimating conditional expectations.

When covariates are absent, the efficient GMM estimator also attains the semiparametric efficiency bound. Our DiD and ES estimators match that efficiency yet are substantially simpler to compute: they are available in closed form, whereas the optimal GMM approach entails solving a high-dimensional optimization problem that is often not practical. In addition, the construction of our efficient estimator provides new insights about how to weigh different pre-treatment and comparison groups to achieve efficiency---it is only a matter of analyzing the efficient weights ${\mathbf{1}'(\Omega_{gt}^*)^{-1}}\big/{\mathbf{1}'(\Omega_{gt}^*)^{-1} \mathbf{1} }$.

\section{Monte Carlo simulations}\label{sec:Simulations}
So far, we have motivated and established that our efficient DiD and ES estimators have attractive efficiency properties that allow researchers to get more precise inference procedures. In this section, we aim to see how these properties play out in realistic empirical settings. We build on \citet{Arkhangelsky2021_SDiD} and \citet{Baker2022} and consider two sets of simulation studies calibrated to datasets people have used for DiD. The first set of simulations builds on \citet{Arkhangelsky2021_SDiD}. It considers a single treatment date and leverages Current Population Survey (CPS) data for constructing a variety of outcomes and treatment assignments for a panel of US states. The second set of simulations builds on \citet{Baker2022}. It explores Compustat data to calibrate outcomes and treatment assignments in staggered treatment setups with treatment timing varying across states. To simplify exposure and match the simulation designs in \citet{Arkhangelsky2021_SDiD} and \citet{Baker2022}, covariates play no important role in these Monte Carlo setups. 

In our first set of simulations, we consider $ES_\text{avg}$ as in \eqref{eqn:overall_ATT} as the parameter of interest and compare our efficient DiD plug-in estimator based on \eqref{eqn:EI_ES} (EDiD), traditional OLS-based TWFE estimator for $\beta$ based on \eqref{eqn:static-twfe} (TWFE), average of post-treatment event-study TWFE estimators $\beta_e$ based on \eqref{eqn:dynamic-twfe} (DTWFE), and the \citet{Arkhangelsky2021_SDiD} Synthetic DiD estimator (SDiD). We note that the DTWFE estimator would be the efficient estimator if parallel trends were to hold only in post-treatment periods.\footnote{In this non-staggered setup, the OLS estimate of $\beta$ based on \eqref{eqn:static-twfe} coincides with the post-treatment average of the event-studies coefficients of the estimators proposed by \citet{Wooldridge2021a}, \citet{Gardner2021}, and \citet{Borusyak2023}. Similarly, the post-treatment average of the OLS estimates of $\beta_e$ in \eqref{eqn:dynamic-twfe} coincides with \citet{Callaway_Santanna_2021}, \citet{Sun2021}, and \citet{deChaisemartin2020_AER} estimators.} 

In our second set of simulations, we also consider $ES_\text{avg}$ as in \eqref{eqn:overall_ATT} as the parameter of interest and compare our efficient (plug-in) estimator $\widehat{ES}_{\text{avg}}$ based on \eqref{eqn:Event_study_efficient} (EDiD), the average of post-treatment event-study estimates based on (i) \citet{Callaway_Santanna_2021} and \citet{Sun2021} DiD estimators using the ``never-treated'' as a comparison group (CS-SA), (ii) \citet{Callaway_Santanna_2021} and \citet{deChaisemartin2020_AER} DiD estimators using the ``not-yet-treated'' as a comparison group (CS-dCDH), and (iii) \citet{Wooldridge2021a}, \citet{Gardner2021}, and \citet{Borusyak2023} ``imputation'' estimators (BJS-G-W). We do not include staggered synthetic DiD estimators in the comparison as we are not aware of a paper describing and establishing the statistical properties of synthetic DiD estimators for $ES_\text{avg}$. In this exercise, we note that the CS-SA estimator would be the efficient one if parallel trends were to hold only for post-treatment periods and the never-treated group was the only valid comparison group.

We compare all these estimators regarding bias, root-mean-squared-error (RMSE) relative to our efficient DiD estimator, $95\%$ empirical coverage, and $95\%$ confidence interval length relative to our efficient DiD. We consider analytical and bootstrapped standard errors for coverage and length of confidence intervals but use the Gaussian critical values. For each data generating process (DGP), we consider 1,000 Monte Carlo experiments, and for each experiment, we use 300 bootstrap draws.\footnote{For Synthetic DiD, \citet{Arkhangelsky2021_SDiD} have not proposed a plug-in analytical standard error. As such, we do not report plug-in analytical standard errors from it.} 

\subsection{Simulations based on CPS with single treatment date}

Our first set of simulations builds on \citet{Arkhangelsky2021_SDiD} and explores CPS data to construct an empirically motivated data-generating process. We differ from \citet{Arkhangelsky2021_SDiD} in some aspects: (i) we consider heterogeneous treatment effects across units; (ii) consistent with our DiD regime with ``large $n$'' and fixed $T$, we consider short panels with $T=7$---four pre-treatment and three post-treatment periods; (iii) we do not limit the maximum number of treated units in a given simulation; and (iv) all our outcomes are measured in log to avoid violating support restrictions in the simulation. Apart from these differences, the construction of our data-generating process follows from \citet{Arkhangelsky2021_SDiD}, which we describe below for completeness.

As in \citet{Arkhangelsky2021_SDiD}, we start designing our simulations using data on wages for women with positive wages in the March outgoing rotation groups in the CPS from 1979 to 2018. We then take logs and average the observations by state-year cells, leaving us with aggregate data from 50 states in 40 years. This will serve as the baseline dataset for our simulation designs. 

Next, we consider a baseline specification for our untreated potential outcomes, $Y_{i,t}(\infty)$. We consider the same interactive fixed-effects specification as \citet{Arkhangelsky2021_SDiD},
\begin{align}
    Y_{i,t} (\infty)  = \gamma_i \upsilon_t +  \varepsilon_{i,t},  \label{eqn:IFE}
\end{align}
with $\gamma_i$ and $\upsilon_t$ being 4-dimensional vectors of latent unit and time factors, and $\varepsilon_{i,t}$ being a mean-zero Gaussian error term that follows an $AR(2)$ process. This specification does not impose a two-way fixed effects structure on $Y_{i,t} (\infty)$ and allows for correlation over times within each state; accounting for serial correlation is very important when conducting valid inference using DiD and other panel data methods, see, e.g., \citet{Wooldridge2003} and \citet{Bertrand2004}.  

To construct a realistic set of simulations, we use the CPS aggregated data to estimate the $\gamma_i$'s, $\upsilon_t$'s, and the variance-covariance matrix of the error terms. In matrix notation, the interactive fixed-effects model \eqref{eqn:IFE} is given by $ Y(\infty) = L + E$, where $L = \Gamma \Upsilon',$ and we estimate $L$ as 
\begin{align*}
    L := \argmin_{L:rank(L)=4} \sum_{i,t}(Y_{i,t}^* - L_{i,t})^2,
\end{align*}
where $Y_{i,t}^*$ is the log wage in state $i$ in year $t$ in the CPS data. To help with interpretation, we decompose $L$ as unit-and-time fixed effects, $F$, and an interactive term $M$, with
\begin{align*}
    F_{i,t}  &= \alpha_t + \eta_i = \frac{1}{T} \sum_{l=1}^T L_{i,l} + \frac{1}{N} \sum_{j=1}^N L_{j,t} - \frac{1}{NT} \sum_{i,t}^{N\times T} L_{i,t} \\
    M_{i,t} &= L_{i,t} - F_{i,t}.
\end{align*}
We estimate the variance-covariance matrix of the residuals $\varepsilon_{i,t}$ by fitting an AR(2) model to the residuals of $Y_{i,t}^* - L_{i,t}$, and then compute the implied variance-covariance matrix of the error term for state $i$, $\Sigma$. Like \citet{Arkhangelsky2021_SDiD}, we assume that the error terms are independent across units and $\Sigma$ is constant across states.

Next, we describe the treatment assignment process. In this set of simulations, units can start treatment in $t=2009$, so we only have two groups: $G_i=2009$ (treated units) and $G_i = \infty$ (untreated units). To simulate whether a state is treated, we follow \citet{Arkhangelsky2021_SDiD} and consider the treatment status $D_{i,t} = 1\{t\ge 2009\} 1\{G_i = 2009\}$ with  
\begin{align*}
1\{G_i = 2009\} &\sim  \text{Bernoulli } (\pi_i), \\
   \pi_i = P(G_i = 2009 | \eta_i, M_i) &= \dfrac{\exp(\phi_\eta \eta_i + \phi_M M_i)}{1 + \exp(\phi_\eta \eta_i + \phi_M M_i)}.
\end{align*}
We choose $\phi_\eta$ and $\phi_M$ as the coefficient estimates from a logistic regression of an observed binary characteristic of the state $i$ on $\eta_i$ and $M_i$.\footnote{These ``unit factors'' are the first four left singular vectors from single value decomposition of the outcome $Y$, as described in \citet{Arkhangelsky2021_SDiD}.} We consider three different characteristics relating treatment groups to minimum wage laws (part of our baseline specification), abortion rights, and gun control laws. We also consider a completely random treatment assignment. Unlike \citet{Arkhangelsky2021_SDiD}, though, we do not restrict the maximum number of treated units; only the minimum number of treated units to be two.\footnote{If we have less than two treated states in a simulation, we randomly select ten out of the 50 states to be treated completely at random. This is similar to \citet{Arkhangelsky2021_SDiD}, though their minimum number of treated states was one.}

Based on these parameters, we simulate untreated and treated potential outcomes $Y_{i,t}(\infty)$ and $Y_{i,t}(2009)$ as 
\begin{align}
    Y_{i,t}(\infty) &= \alpha_t + \eta_i + M_{i,t} + \varepsilon_{i,t}\\
    Y_{i,t}(2009) &=  Y_{i,t}(\infty) + \tau_i 1\{t\ge 2009\}
\end{align}
where $\varepsilon_{i} = (\varepsilon_{i,t=1},\dots, \varepsilon_{i,t=T})'$ have a multivariate Gaussian distribution with mean zero and variance $\Sigma$, and $\tau_i$ have a Gaussian distribution with mean zero and variance one. Note that we have a heterogeneous treatment effects model as the treatment effect varies across states. This differs from \citet{Arkhangelsky2021_SDiD}, who sets $\tau_i = 0$ for all states. However, we stress that $ATT(2009,t) = 0$ for all periods. The observed outcomes for unit $i$ in time $t$ are $Y_{i,t} = 1\{G_i=2009\} Y_{i,t}(2009) + 1\{G_i=\infty\} Y_{i,t}(\infty)$. To respect our DiD setup with ``large $n$ and fixed $T$'', we restrict $T=7$ and keep data from $t=2005$ until $t=2011$. 

Note that we do not impose that the parallel trends assumption holds exactly across periods. If this assumption is violated, we should see a bias in estimates that rely on this assumption, namely our efficient DiD estimator and the DTWFE in \eqref{eqn:dynamic-twfe}. If parallel trends in post-treatment periods is violated, we should also see a bias in the TWFE estimates based on \eqref{eqn:static-twfe}.

Like \citet{Arkhangelsky2021_SDiD}, we consider different choices of treatment assignment and outcome variables. In addition, we also consider settings where we drop different components of the data generating process for the outcome, such as assuming that the error terms are serially independent (``No Corr''), that there is no interactive component $M$ (``No M''), no additive two-way fixed effects (``No F''), or there is $L$ (``Only noise'').\footnote{We omit results for DGPs in which there is no error term (``No noise''). We do it because the RMSE and the length of 95\% confidence interval for our efficient estimators are close to zero, making it hard to report relative performance measures.} For the alternative outcome variables, we consider $log$ of hours worked and $log$ of unemployment rate; \citet{Arkhangelsky2021_SDiD} consider these variables in levels, but that leads to some negative values during the simulation for these variables, violating their natural support restriction.

Finally, we consider two setups. We simulate data from the 50 states and seven periods for the first one. In this case, we expect that inference procedures based on asymptotic approximations may be challenging given the limited (effective) sample size of 50. In the second setup, we increase the number of cross-sectional units by drawing with replacement 200 states (each with its own error term, treatment assignment, and treatment effect). Here, we expect inference procedures based on asymptotic approximations to be better than in the setup with $n=50$, though we acknowledge that $n=200$ is still reasonably small.
\begin{table}[!h]
\centering
\caption{\label{tab:sim_placebo_synth}Simulation results for CPS data: Relative RMSE and Bias}
\centering
\resizebox{\ifdim\width>\linewidth\linewidth\else\width\fi}{!}{
\begin{threeparttable}
\begin{tabular}[t]{lcrrrrrrrrr}
\toprule
\multicolumn{2}{c}{ } & \multicolumn{4}{c}{Relative RMSE} & \phantom{abc} & \multicolumn{4}{c}{Bias ($\times 10$)} \\
\cmidrule(l{3pt}r{3pt}){3-6} \cmidrule(l{3pt}r{3pt}){8-11}
\multicolumn{1}{c}{} & \multicolumn{1}{c}{Sample size} & \multicolumn{1}{c}{EDiD} & \multicolumn{1}{c}{TWFE} & \multicolumn{1}{c}{DTWFE} & \multicolumn{1}{c}{SDiD} & \multicolumn{1}{c}{} & \multicolumn{1}{c}{EDiD} & \multicolumn{1}{c}{TWFE} & \multicolumn{1}{c}{DTWFE} & \multicolumn{1}{c}{SDiD}\\
\midrule
1. Baseline & 50 & 1 & 3.57 & 12.46 & 1.53 &  & 0.01 & 0.60 & 2.58 & 0.00\\
 & 200 & 1 & 2.32 & 3.37 & 1.95 &  & 0.00 & -0.01 & 0.00 & 0.00\\
\noalign{\vskip 1mm} \it{Outcome Model}\\
2. No corr & 50 & 1 & 3.52 & 12.33 & 1.45 &  & 0.02 & 0.59 & 2.46 & 0.00\\
 & 200 & 1 & 2.27 & 3.33 & 1.95 &  & -0.01 & 0.00 & -0.01 & 0.00\\
\noalign{\vskip -3mm}\\
3. No M & 50 & 1 & 3.67 & 13.04 & 1.49 &  & -0.02 & 0.61 & 2.45 & 0.03\\
 & 200 & 1 & 2.17 & 3.00 & 1.68 &  & -0.01 & 0.02 & 0.00 & 0.01\\
\noalign{\vskip -3mm}\\
4. No F & 50 & 1 & 1.47 & 1.88 & 1.42 &  & 0.00 & -0.02 & -0.04 & -0.02\\
 & 200 & 1 & 1.64 & 2.18 & 1.63 &  & 0.00 & 0.00 & -0.01 & 0.00\\
\noalign{\vskip -3mm}\\
5. Only noise & 50 & 1 & 1.26 & 1.67 & 1.25 &  & 0.02 & -0.02 & -0.01 & -0.01\\
 & 200 & 1 & 1.47 & 1.87 & 1.48 &  & 0.00 & -0.01 & -0.01 & -0.01\\
\noalign{\vskip -3mm}\\
\noalign{\vskip 1mm} \it{Treatment Assignment}\\
6. Gun law & 50 & 1 & 7.27 & 18.23 & 4.67 &  & 0.00 & -0.08 & -0.23 & -0.11\\
 & 200 & 1 & 8.70 & 12.94 & 6.05 &  & 0.00 & 0.03 & 0.02 & 0.02\\
7. Abortion & 50 & 1 & 6.99 & 17.19 & 4.77 &  & -0.01 & 0.55 & 1.75 & 0.33\\
 & 200 & 1 & 8.04 & 12.64 & 5.23 &  & 0.00 & -0.01 & -0.05 & 0.00\\
\noalign{\vskip -3mm}\\
8. Random & 50 & 1 & 4.13 & 15.30 & 1.53 &  & -0.01 & -0.07 & -0.22 & 0.01\\
 & 200 & 1 & 2.56 & 3.74 & 2.17 &  & 0.00 & 0.01 & 0.00 & 0.01\\
\noalign{\vskip 1mm} \it{Outcome Variable}\\
9. Ln Hours & 50 & 1 & 1.01 & 1.92 & 0.95 &  & -0.34 & 0.12 & 1.53 & 0.02\\
 & 200 & 1 & 1.24 & 1.95 & 1.21 &  & 0.01 & 0.05 & 0.02 & 0.07\\
10. Ln U-rate & 50 & 1 & 0.82 & 1.44 & 0.82 &  & 0.73 & -0.24 & -0.36 & -0.26\\
 & 200 & 1 & 1.03 & 1.53 & 1.01 &  & 0.03 & 0.00 & 0.01 & 0.00\\
\bottomrule
\end{tabular}
\begin{tablenotes}[para]
\item \vspace{-5ex} \singlespacing \footnotesize{Notes: Simulation results for CPS data based on 1,000 Monte Carlo experiments.
                               The DGP is summarized in Section \ref{sec:Simulations}. The baseline case uses state minimum wage laws to simulate treatment
                               assignment and generates outcomes using the DGPs described in Section \ref{sec:Simulations}.  In subsequent settings, we omit parts of
                               the DGP (rows 2–5), consider different distributions for the treatment variable (rows 6-8),
                               and different distributions for the outcome variable (rows 9-10). The dataset has 7 times periods and either
                               $n=50$ or $n=200$. Relative RMSE is reported using our efficient DiD estimator as benchmark.
                             All results for bias are multiplied by 10 for readability.}
\end{tablenotes}
\end{threeparttable}}
\end{table}

Tables \ref{tab:sim_placebo_synth} present the bias and relative RMSE of the four different estimators for $ES_\text{avg}$ when $n=50$ and $n=200$. At a high level, we have found that all estimators are (nearly) unbiased when $n=50$ and that the bias becomes even smaller when $n=200$. This suggests that, in these DGPs, the assumption that parallel trends hold in all periods is a good approximation. In terms of RMSE, however, there is a significant variation across estimators, with our efficient DiD estimator performing the best in nearly all DGPs, with Synthetic DiD coming as second, and the DTWFE estimator based on \eqref{eqn:dynamic-twfe} being the least precise. The result for DTWFE is expected, though, as this estimator does not leverage data from any pre-treatment period other than $t=2008$; all the other estimators leverage three additional pre-treatment periods. It is also important to stress the large RMSE gains that our efficient DiD estimator enjoys compared to the synthetic DiD estimator. For instance, when $n=50$, the RMSE of the synthetic DiD estimator is nearly 60\% larger than that of our efficient DiD estimator in the baseline model. When treatment assignment is based on gun laws or abortion rights, their RMSE is approximately 5 times larger than ours when $n=50$, and this difference grows further when $n=200$. On the other hand, we notice that when the outcome of interest is the log unemployment rate, the synthetic DiD and the TWFE estimators have smaller RMSE than our efficient estimator. However, this difference reduces when $n=200$. 

\begin{table}[!h]
\centering
\caption{\label{tab:sim_placebo_synth_coverage}Simulation results for CPS data: Coverage}
\centering
\resizebox{\ifdim\width>\linewidth\linewidth\else\width\fi}{!}{
\begin{threeparttable}
\begin{tabular}[t]{lcrrrrrrrrr}
\toprule
\multicolumn{2}{c}{ } & \multicolumn{4}{c}{Bootstrap} & \phantom{abc} & \multicolumn{4}{c}{Analytical} \\
\cmidrule(l{3pt}r{3pt}){3-6} \cmidrule(l{3pt}r{3pt}){8-11}
\multicolumn{1}{c}{} & \multicolumn{1}{c}{Sample size} & \multicolumn{1}{c}{EDiD} & \multicolumn{1}{c}{TWFE} & \multicolumn{1}{c}{DTWFE} & \multicolumn{1}{c}{SDiD} & \multicolumn{1}{c}{} & \multicolumn{1}{c}{EDiD} & \multicolumn{1}{c}{TWFE} & \multicolumn{1}{c}{DTWFE} & \multicolumn{1}{c}{SDiD}\\
\midrule
1. Baseline & 50 & 0.94 & 0.92 & 0.93 & 0.93 &  & 0.80 & 0.92 & 0.92 & -\\
 & 200 & 0.96 & 0.97 & 0.97 & 0.97 &  & 0.94 & 0.98 & 0.97 & -\\
\noalign{\vskip 1mm} \it{Outcome Model}\\
2. No corr & 50 & 0.92 & 0.94 & 0.94 & 0.93 &  & 0.80 & 0.93 & 0.92 & -\\
 & 200 & 0.93 & 0.97 & 0.97 & 0.96 &  & 0.93 & 0.97 & 0.97 & -\\
\noalign{\vskip -3mm}\\
3. No M & 50 & 0.94 & 0.93 & 0.93 & 0.94 &  & 0.82 & 0.92 & 0.92 & -\\
 & 200 & 0.93 & 0.97 & 0.97 & 0.96 &  & 0.91 & 0.96 & 0.97 & -\\
\noalign{\vskip -3mm}\\
4. No F & 50 & 0.94 & 0.94 & 0.94 & 0.95 &  & 0.82 & 0.94 & 0.92 & -\\
 & 200 & 0.95 & 0.95 & 0.96 & 0.96 &  & 0.94 & 0.95 & 0.96 & -\\
\noalign{\vskip -3mm}\\
5. Only noise & 50 & 0.93 & 0.93 & 0.92 & 0.95 &  & 0.79 & 0.92 & 0.90 & -\\
 & 200 & 0.94 & 0.95 & 0.95 & 0.95 &  & 0.93 & 0.94 & 0.95 & -\\
\noalign{\vskip -3mm}\\
\noalign{\vskip 1mm} \it{Treatment Assignment}\\
6. Gun law & 50 & 0.95 & 0.94 & 0.94 & 0.97 &  & 0.94 & 0.94 & 0.94 & -\\
 & 200 & 0.94 & 0.96 & 0.96 & 0.97 &  & 0.94 & 0.96 & 0.96 & -\\
7. Abortion & 50 & 0.94 & 0.92 & 0.91 & 0.93 &  & 0.92 & 0.93 & 0.91 & -\\
 & 200 & 0.95 & 0.98 & 0.98 & 0.97 &  & 0.95 & 0.97 & 0.98 & -\\
\noalign{\vskip -3mm}\\
8. Random & 50 & 0.94 & 0.94 & 0.94 & 0.98 &  & 0.93 & 0.94 & 0.94 & -\\
 & 200 & 0.95 & 0.93 & 0.93 & 0.94 &  & 0.95 & 0.94 & 0.93 & -\\
\noalign{\vskip 1mm} \it{Outcome Variable}\\
9. Ln Hours & 50 & 0.93 & 0.94 & 0.93 & 0.95 &  & 0.77 & 0.94 & 0.92 & -\\
 & 200 & 0.94 & 0.95 & 0.96 & 0.96 &  & 0.93 & 0.95 & 0.95 & -\\
10. Ln U-rate & 50 & 0.93 & 0.91 & 0.92 & 0.92 &  & 0.79 & 0.90 & 0.91 & -\\
 & 200 & 0.95 & 0.94 & 0.95 & 0.95 &  & 0.92 & 0.94 & 0.94 & -\\
\bottomrule
\end{tabular}
\begin{tablenotes}[para]
\item \vspace{-5ex} \singlespacing \footnotesize{Notes: Simulation results for CPS data based on 1,000 Monte Carlo experiments.
                               The DGP is summarized in Section \ref{sec:Simulations}. The dataset has 7 times periods and either
                               $n=50$ or $n=200$. ``Bootstrap'' results use 300 nonparametric clustered bootstrap replications to compute bootstrapped clustered standard errors for all methods. 
                              ``Analytical'' results use asymptotic approximations for computing standard errors for all methods but synthetic DiD, as analytical standard errors are not available. 
                               All the other details are the same as in Table \ref{tab:sim_placebo_synth}.}
\end{tablenotes}
\end{threeparttable}}
\end{table}

Table \ref{tab:sim_placebo_synth_coverage} present the empirical coverage of $95\%$ confidence intervals for $ES_\text{avg}$ when $n=50$ and when $n=200$. We consider both nonparametric clustered-bootstrapped standard errors and analytical standard errors. Our simulation results show that using bootstrapped standard errors to construct confidence intervals yields very good coverage properties for all estimators across all considered DGPs, even in the challenging setup with $n=50$. On the other hand, our results also reveal that inference based on analytical standard errors with our efficient DiD estimator is anticonservative when $n=50$, though the results improve when $n=200$. Thus, in setups with very limited sample sizes, we recommend using cluster bootstrap standard errors.

Finally, Table \ref{tab:sim_placebo_synth_length} presents the length of $95\%$ confidence intervals for all the estimators, using bootstrap and analytical standard errors. Since it is only reasonable to compare the length of confidence intervals across methods that control size, we focus our attention on the bootstrap results. Here, we note that our efficient DiD estimator tends to have substantially shorter confidence intervals than other available DiD estimators. For instance, in the baseline DGP, the bootstrapped confidence interval for the synthetic DiD estimator is more than $40\%$ larger than our efficient DiD estimator when $n=50$, and approximately two times larger than our efficient DiD estimator when $n=200$; these gains are even larger when compared to TWFE and DTWFE estimators. When treatment assignment is based on gun laws across states, the length of bootstrapped confidence intervals of all other considered estimators is more than four times larger than those based on our efficient estimator. We reach a qualitatively similar conclusion when using abortion rights as treatment. On the other hand, when the outcome of interest is the log unemployment rate, we note that bootstrap inference using synthetic DiD or traditional TWFE estimators is more precise than our efficient DiD estimator. However, this difference shrinks when $n=200$. 

\begin{table}[!h]
\centering
\caption{\label{tab:sim_placebo_synth_length}Simulation results for CPS data: Relative length of confidence interval}
\centering
\resizebox{\ifdim\width>\linewidth\linewidth\else\width\fi}{!}{
\begin{threeparttable}
\begin{tabular}[t]{lcrrrrrrrrr}
\toprule
\multicolumn{2}{c}{ } & \multicolumn{4}{c}{Bootstrap} & \phantom{abc} & \multicolumn{4}{c}{Analytical} \\
\cmidrule(l{3pt}r{3pt}){3-6} \cmidrule(l{3pt}r{3pt}){8-11}
\multicolumn{1}{c}{} & \multicolumn{1}{c}{Sample size} & \multicolumn{1}{c}{EDiD} & \multicolumn{1}{c}{TWFE} & \multicolumn{1}{c}{DTWFE} & \multicolumn{1}{c}{SDiD} & \multicolumn{1}{c}{} & \multicolumn{1}{c}{EDiD} & \multicolumn{1}{c}{TWFE} & \multicolumn{1}{c}{DTWFE} & \multicolumn{1}{c}{SDiD}\\
\midrule
1. Baseline & 50 & 1.00 & 3.63 & 12.58 & 1.43 &  & 1.00 & 5.44 & 18.40 & -\\
 & 200 & 1.00 & 2.46 & 3.60 & 2.08 &  & 1.00 & 2.65 & 3.86 & -\\
\noalign{\vskip 1mm} \it{Outcome Model}\\
2. No corr & 50 & 1.00 & 3.67 & 12.78 & 1.36 &  & 1.00 & 5.46 & 18.57 & -\\
 & 200 & 1.00 & 2.47 & 3.58 & 2.11 &  & 1.00 & 2.67 & 3.85 & -\\
\noalign{\vskip -3mm}\\
3. No M & 50 & 1.00 & 3.56 & 12.90 & 1.32 &  & 1.00 & 5.32 & 18.83 & -\\
 & 200 & 1.00 & 2.33 & 3.39 & 1.84 &  & 1.00 & 2.52 & 3.64 & -\\
\noalign{\vskip -3mm}\\
4. No F & 50 & 1.00 & 1.42 & 1.85 & 1.44 &  & 1.00 & 2.14 & 2.72 & -\\
 & 200 & 1.00 & 1.58 & 2.18 & 1.57 &  & 1.00 & 1.71 & 2.35 & -\\
\noalign{\vskip -3mm}\\
5. Only noise & 50 & 1.00 & 1.15 & 1.55 & 1.23 &  & 1.00 & 1.74 & 2.28 & -\\
 & 200 & 1.00 & 1.31 & 1.74 & 1.34 &  & 1.00 & 1.41 & 1.87 & -\\
\noalign{\vskip -3mm}\\
\noalign{\vskip 1mm} \it{Treatment Assignment}\\
6. Gun law & 50 & 1.00 & 6.41 & 16.94 & 4.33 &  & 1.00 & 7.00 & 18.12 & -\\
 & 200 & 1.00 & 8.86 & 13.18 & 5.92 &  & 1.00 & 8.99 & 13.33 & -\\
7. Abortion & 50 & 1.00 & 6.40 & 16.44 & 3.84 &  & 1.00 & 7.14 & 18.00 & -\\
 & 200 & 1.00 & 8.53 & 13.30 & 5.44 &  & 1.00 & 8.76 & 13.53 & -\\
\noalign{\vskip -3mm}\\
8. Random & 50 & 1.00 & 3.93 & 14.71 & 2.13 &  & 1.00 & 4.31 & 15.71 & -\\
 & 200 & 1.00 & 2.36 & 3.49 & 2.03 &  & 1.00 & 2.40 & 3.53 & -\\
\noalign{\vskip 1mm} \it{Outcome Variable}\\
9. Ln Hours & 50 & 1.00 & 1.05 & 2.01 & 1.03 &  & 1.00 & 1.58 & 2.95 & -\\
 & 200 & 1.00 & 1.37 & 2.07 & 1.33 &  & 1.00 & 1.48 & 2.21 & -\\
10. Ln U-rate & 50 & 1.00 & 0.75 & 1.34 & 0.79 &  & 1.00 & 1.13 & 1.98 & -\\
 & 200 & 1.00 & 0.99 & 1.51 & 0.99 &  & 1.00 & 1.07 & 1.62 & -\\
\bottomrule
\end{tabular}
\begin{tablenotes}[para]
\item \vspace{-5ex} \singlespacing \footnotesize{Notes: Simulation results for CPS data based on 1,000 Monte Carlo experiments.
                               The data generating process is summarized in Section \ref{sec:Simulations}. The dataset has 7 times periods and either
                               $n=50$ or $n=200$. ``Bootstrap'' results use 300 nonparametric clustered bootstrap replications to compute bootstrapped clustered standard errors for all methods. 
                              ``Analytical'' results use asymptotic approximations for computing standard errors for all methods but synthetic DiD, as analytical standard errors are not available. Length of Confidence intervals
                              are measured relative to our efficient DiD estimator.
                               All the other details are the same as in Table \ref{tab:sim_placebo_synth}.}
\end{tablenotes}
\end{threeparttable}}
\end{table}

\begin{figure}[!ht]
\caption{Contribution of pre-treatment periods for the Efficient DiD estimator for $ES_{\text{avg}}$} 
\label{fig:heatmap_single}
\begin{subfigure}[t]{0.48\textwidth}
         \centering
          \includegraphics[width=0.9\textwidth, keepaspectratio]{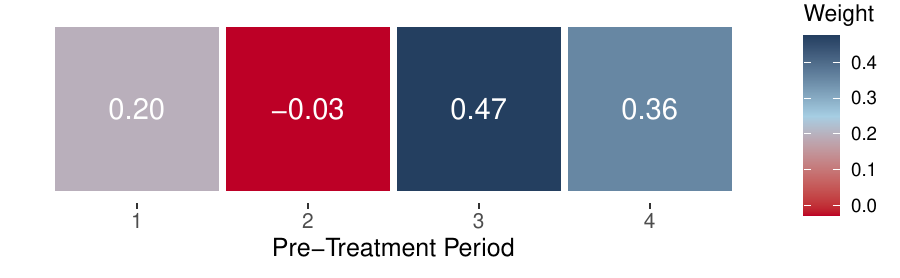}
         \caption{DGP 1: Baseline}
         \label{fig:heatmap1}
     \end{subfigure}
     \hfill
     \begin{subfigure}[t]{0.48\textwidth}
         \centering
         \includegraphics[width=0.9\textwidth, keepaspectratio]{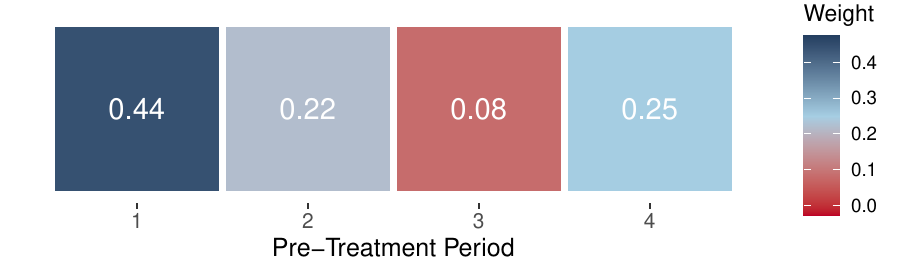}
         \caption{DGP 3: Outcome Model with no M}
         \label{fig:heatmap3}
     \end{subfigure}
\hfill
\begin{subfigure}[t]{0.48\textwidth}
         \centering
         \includegraphics[width=0.9\textwidth, keepaspectratio]{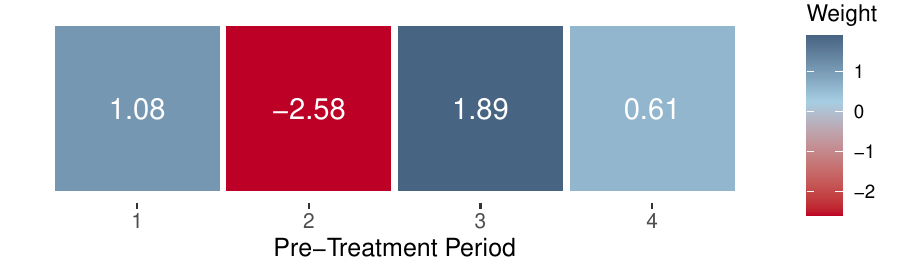}
         \caption{DGP 6: Gun law treatment assignment}
         \label{fig:heatmap6}
     \end{subfigure}
     \hfill
     \begin{subfigure}[t]{0.48\textwidth}
         \centering
         \includegraphics[width=0.9\textwidth, keepaspectratio]{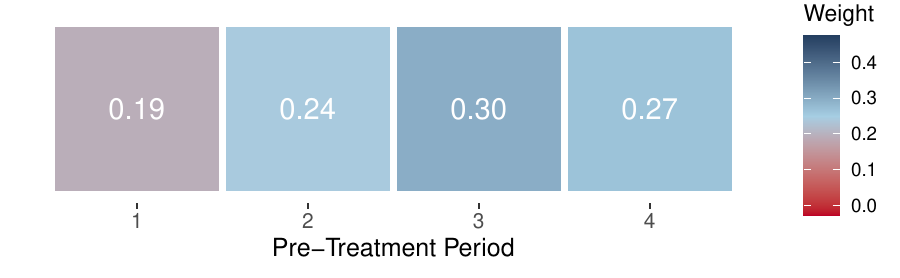}
         \caption{DGP 9: Ln Hours as outcome}
         \label{fig:heatmap9}
     \end{subfigure}

  \justifying
\noindent\scriptsize{\textit{Notes:} The figure displays the contribution of each pre-treatment period when constructing the efficient DiD estimator for $ES_{\text{avg}}$ based on a single draw of the data-generating process. Sample size $n=50$ and $T=7$, with four pre-treatment periods and three post-treatment periods. The weight color scale is the same across all sub-figures except Figure \ref{fig:heatmap6}. }
     
\end{figure} 

In practice, we anticipate that empirical researchers may also be interested in better understanding the contribution of each DiD component that uses a different pre-treatment period as a baseline period. In other words, one may be interested in understanding the ``contribution'' of each pre-treatment period to form efficient estimators. Figure \ref{fig:heatmap_single} plots this in a heatmap-style for four different DGPs, using a single simulation draw. As evident from this, the optimal way to aggregate pre-treatment periods to gain efficiency varies with the DGP, and the further it deviates from uniform weights, the higher the gains are compared to TWFE. This highlights that, in general, we can substantially improve upon TWFE. 

Another point worth stressing is related to negative efficiency weights. In our context, negative efficiency weights do not lead to concerns related to treatment effect sign preservation as discussed in \citet{Goodmanbacon2021}, \citet{Borusyak2023}, and \cite{deChaisemartin2020_AER}, for example. The reason for this is that our overidentification conditions imply that the $ATT(g,t)$'s are \emph{homogeneous} across baseline periods, allowing us to potentially non-convex sums across different baseline periods. As illustrated in Figure \ref{fig:heatmap6} and backed up by the results in Tables \ref{tab:sim_placebo_synth} and \ref{tab:sim_placebo_synth_length}, this should not be interpreted as a concern. Negative efficiency weights arise simply as a consequence of the dependence structure of the changes in outcome in treatment and comparison groups, and de facto leveraging them can lead to substantial gains in precision, as illustrated in DGP 6.

Altogether, these simulation results highlight the excellent finite sample properties of our efficient DiD estimator in realistic DGPs: they tend to have much smaller RMSE and shorter confidence intervals than other available estimators. Importantly, the gains in precision can be very substantial.

\subsection{Simulations based on Compustat with staggered treatment}
Our second set of simulations builds on \citet{Baker2022} and considers DGPs with staggered treatment adoption calibrated to Compustat panel data. We differ from \citet{Baker2022} in some aspects: (i) consistent with our DiD framework with ``large $n$'' and fixed $T$, we consider a setup with $n=400$ firms and follow them for $T=11$ years; (ii) we consider that error terms are serially correlated with varying autocorrelation parameter $\rho$ to allow for richer outcome dynamics and assess its impact of the performance of different estimators; (iii) we follows a one-step treatment assignment in which firms are assigned to different cohorts. Apart from these differences, the construction of our data-generating process follows that of \citet{Baker2022}, which allows for three treatment dates, with all units eventually being treated, and treatment effects evolving dynamically with heterogeneous trends. We describe the DGPs below for completeness. 

As in \citet{Baker2022}, we begin with a sample of all firms in Compustat over the 36-year period from 1980 to 2015 that are U.S. incorporated, non-financial, and contain at least five observations. This leaves us with a sample of 12,020 different firms. Using this unbalanced panel, we compute returns on assets (ROA) and then decompose it into year fixed effects, firm/unit fixed effects, and residuals:
\begin{align*}
    \{\widehat{\eta}_i\}_{i=1}^{12,020},~ \{\widehat{\alpha}_t\}_{t=1}^{36},~\{\widehat{\varepsilon}_{i,t}\}_{i,t=1}^{176,622},~\text{from } ROA_{i,t} = \eta_i + \alpha_t + \varepsilon_{i,t}.
\end{align*}
The empirical (marginal) distribution of these terms serves as the benchmark distribution in our Monte Carlo simulations. We consider three possible treatment dates, $G=5$, $G=8$, and $G=11$, with each firm randomly assigned to each treatment group with probability $1/3$.

Based on these features, we simulate potential outcomes $Y_{i,t}(5)$, $Y_{i,t}(8)$, $Y_{i,t}(11)$, and $Y_{i,t}(\infty)$ as 
\begin{align}
    Y_{i,t}(\infty) &= \widetilde{\alpha}_t + \widetilde{\eta}_i + \widetilde{\varepsilon}_{i,t}\\
    Y_{i,t}(5) &=   Y_{i,t}(\infty) + 0.5\times \sigma_{ROA} 1\{t\ge 5\} \left( t - 5 + 1 \right)\\
    Y_{i,t}(8) &=   Y_{i,t}(\infty) +  0.3\times \sigma_{ROA} 1\{t\ge 8\} \left( t - 8 + 1 \right)\\
    Y_{i,t}(11) &=   Y_{i,t}(\infty) + 0.1\times \sigma_{ROA} 1\{t\ge 11\} \left( t - 11 + 1 \right),
\end{align}
where $\widetilde{\eta}_i$ and $\widetilde{\alpha}_t$ are drawn from the empirical distribution of $\{\widehat{\eta}_i\}_{i=1}^{12,020}$ and $\{\widehat{\alpha}_t\}_{t=1}^{36}$, respectively,   $\widetilde{\varepsilon}_{i,t} = \rho \widetilde{\varepsilon}_{i,t-1} + \widetilde{u}_{i,t}$, with $\widetilde{u}_{i,t}$ being drawn from the empirical distribution of $\{\widehat{\varepsilon}_{i,t}\}_{it=1}^{176,622}$, and $\sigma_{ROA}$ refers to the sample standard deviation of ROA (0.309). To assess the impact of serial correlation on our results, we consider $\rho \in \{-1.1, -1, -0.5, 0, 0.5, 1, 1.1\}$.

The observed outcomes are given by $Y_{i,t} = \sum_{g \in \{5,8,11\}}Y_{i,t}(g)\times1\{G_i=g\}$. As we consider a setup where all units are eventually treated, we are unable to identify average treatment effects in period $ t=11$. Therefore, we effectively drop the data from the last period ($t=11$), and cohort $G=11$ acts as "never-treated" units. In our setup, we have that  $ATT(5,t) = 0.154 (t-4)$ for $t=5\dots, 10$, and $ATT(8,t) = 0.093 (t-7)$ for $t=8,9,10$. As a consequence, we have that for $e=0,1,\dots, 5$, $ES(e)$ is equal to $0.123, 0.247, 0.370, 0.617, 0.772$, and $0.926$, respectively, and $ES_{\text{avg}} = 0.509$.

\begin{table}[!h]
\centering
\caption{\label{tab:sim_Baker_etal}Simulation results for Compustat data: Relative RMSE and Bias}
\centering
\resizebox{\ifdim\width>\linewidth\linewidth\else\width\fi}{!}{
\begin{threeparttable}
\begin{tabular}[t]{lrrrrrrrrr}
\toprule
\phantom{abc} & \multicolumn{4}{c}{Relative RMSE} & \phantom{abc} & \multicolumn{4}{c}{Bias ($\times 10$)} \\
\cmidrule(l{3pt}r{3pt}){2-5} \cmidrule(l{3pt}r{3pt}){7-10}
\multicolumn{1}{c}{} & \multicolumn{1}{c}{EDiD} & \multicolumn{1}{c}{BJS-G-W} & \multicolumn{1}{c}{CS-SA} & \multicolumn{1}{c}{CS-dCDH} & \phantom{abc} & \multicolumn{1}{c}{EDiD } & \multicolumn{1}{c}{BJS-G-W } & \multicolumn{1}{c}{CS-SA } & \multicolumn{1}{c}{CS-dCDH }\\
\midrule
$\rho = 0$ & 1 & 1.01 & 1.61 & 1.55 &  & 0.00 & 0.00 & 0.00 & 0.00\\
$\rho = 0.5$ & 1 & 1.04 & 1.26 & 1.22 &  & -0.01 & -0.01 & 0.00 & 0.00\\
$\rho = 1$ & 1 & 1.18 & 1.08 & 1.04 &  & 0.00 & 0.01 & 0.01 & 0.01\\
$\rho = 1.1$ & 1 & 1.47 & 1.23 & 1.18 &  & -0.03 & -0.02 & -0.02 & -0.02\\
$\rho = -0.5$ & 1 & 1.04 & 2.31 & 2.20 &  & 0.00 & 0.00 & 0.00 & 0.00\\
$\rho = -1$ & 1 & 1.58 & 3.22 & 3.37 &  & 0.00 & 0.00 & 0.00 & 0.00\\
$\rho = -1.1$ & 1 & 2.23 & 3.37 & 3.77 &  & 0.00 & 0.00 & 0.00 & -0.01\\
\bottomrule
\end{tabular}
\begin{tablenotes}[para]
\item \vspace{-5ex} \singlespacing \footnotesize{Notes: Simulation results for Compustat data based on 1,000 Monte Carlo experiments.
                               The DGP is summarized in Section \ref{sec:Simulations}. In all setups, we randomly draw 400 firms and follow them for 11 periods.
                               The baseline case has error terms that are randomly drawn from the residual distribution as described in Section  \ref{sec:Simulations}. 
                               In subsequent settings, we consider error terms with different degrees of serial correlation. The parameter of interest is $ES_\text{avg}$. Relative RMSE is reported using our efficient DiD estimator as a benchmark.
                             All results for bias are multiplied by 10 for readability.}
\end{tablenotes}
\end{threeparttable}}
\end{table}

Table \ref{tab:sim_Baker_etal} presents the bias and relative RMSE of the four different estimators for $ES_\text{avg}$. As expected, all estimators are unbiased. However, in terms of RMSE, there is a good amount of variation across estimators depending on the residual serial correlation. In all scenarios, our proposed efficient DiD estimator yields the lowest RMSE, which supports our theoretical results. When the errors are spherical ($\rho = 0)$, our efficient estimator performs very similarly to BJS-G-W. This is aligned with the efficiency results derived under these additional assumptions on the error term in \citet{Borusyak2023} and \citet{Wooldridge2021a}. Outside this specific setup, though, the BJS-G-W estimator has no efficiency guarantees. Outside of our efficient estimators, our simulations also suggest that it is generally not possible to rank other estimators in terms of efficiency, as their performance depends on the specific DGP. 

\begin{table}[!h]
\centering
\caption{\label{tab:sim_Baker_etal_inference}Simulation results for Compustat data: Coverage and relative length of confidence interval}
\centering
\resizebox{\ifdim\width>\linewidth\linewidth\else\width\fi}{!}{
\begin{threeparttable}
\begin{tabular}[t]{lrrrrrrrrr}
\toprule
\phantom{abc} & \multicolumn{4}{c}{Coverage} & \phantom{abc} & \multicolumn{4}{c}{Rel Length of CI} \\
\cmidrule(l{3pt}r{3pt}){2-5} \cmidrule(l{3pt}r{3pt}){7-10}
\multicolumn{1}{c}{} & \multicolumn{1}{c}{EDiD} & \multicolumn{1}{c}{BJS-G-W} & \multicolumn{1}{c}{CS-SA} & \multicolumn{1}{c}{CS-dCDH} & \phantom{abc} & \multicolumn{1}{c}{EDiD } & \multicolumn{1}{c}{BJS-G-W } & \multicolumn{1}{c}{CS-SA } & \multicolumn{1}{c}{CS-dCDH }\\
\midrule
$\rho = 0$ & 0.93 & 0.93 & 0.94 & 0.94 &  & 1 & 1.01 & 1.62 & 1.55\\
$\rho = 0.5$ & 0.95 & 0.95 & 0.95 & 0.95 &  & 1 & 1.05 & 1.27 & 1.23\\
$\rho = 1$ & 0.94 & 0.96 & 0.95 & 0.95 &  & 1 & 1.24 & 1.11 & 1.07\\
$\rho = 1.1$ & 0.94 & 0.96 & 0.95 & 0.95 &  & 1 & 1.54 & 1.27 & 1.22\\
$\rho = -0.5$ & 0.93 & 0.93 & 0.94 & 0.95 &  & 1 & 1.04 & 2.35 & 2.25\\
$\rho = -1$ & 0.93 & 0.94 & 0.95 & 0.95 &  & 1 & 1.64 & 3.33 & 3.48\\
$\rho = -1.1$ & 0.95 & 0.95 & 0.94 & 0.94 &  & 1 & 2.22 & 3.38 & 3.78\\
\bottomrule
\end{tabular}
\begin{tablenotes}[para]
\item \vspace{-5ex} \singlespacing \footnotesize{Notes: Simulation results for Compustat data based on 1,000 Monte Carlo experiments.
                               The DGP is summarized in Section \ref{sec:Simulations}. In all setups, we randomly draw 400 firms and follow them for 11 periods.
                               The baseline case has error terms that are randomly drawn from the residual distribution as described in Section  \ref{sec:Simulations}. 
                               In subsequent settings, we consider error terms with different degrees of serial correlation.
                              The parameter of interest is $ES_\text{avg}$. Coverage is computed using empirical coverage associated with 95\% confidence intervals based on asymptotic approximations (analytical standard errors).
                               Relative length of confidence intervals are reported using our efficient DiD estimator as benchmark.}
\end{tablenotes}
\end{threeparttable}}
\end{table}

Table \ref{tab:sim_Baker_etal_inference} presents the empirical coverage of $95\%$ confidence intervals for $ES_\text{avg}$. All estimators display correct coverage across all considered DGPs. In terms of inferential precision measured by the length of the $95\%$ confidence intervals, the results once again highlight the practical gains of using our efficient DiD estimator. The relative size of these gains varies with the degree of serial correlation in the data, as expected and consistent with our theoretical results.

\begin{figure}[!ht]
\caption{Contribution of treatment and comparison groups, and pre-treatment periods for the Efficient DiD estimator for $ES_{\text{avg}}$} 
\label{fig:heatmap_stg}
\begin{subfigure}[t]{.9\textwidth}
         \centering
          \includegraphics[width=\linewidth, keepaspectratio]{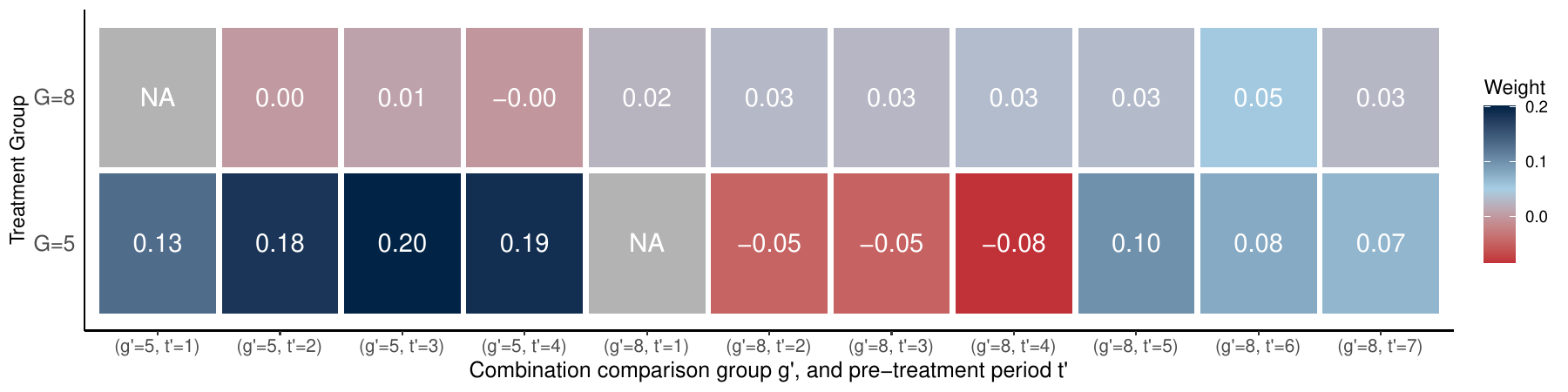}
         \caption{DGP with $\rho = 0$}
         \label{fig:heatmap_stg_avg0}
     \end{subfigure}
     
     \begin{subfigure}[t]{.9\textwidth}
         \centering
         \includegraphics[width=\linewidth, keepaspectratio]{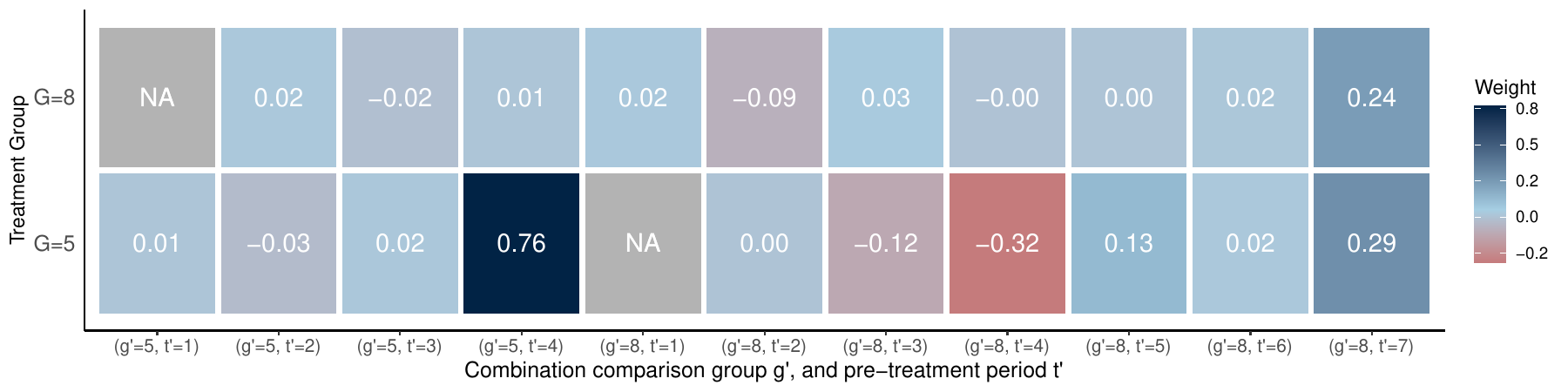}
         \caption{DGP with $\rho = 1$}
         \label{fig:heatmap_stg_avg1}
     \end{subfigure}

\begin{subfigure}[t]{.9\textwidth}
         \centering
         \includegraphics[width=\linewidth, keepaspectratio]{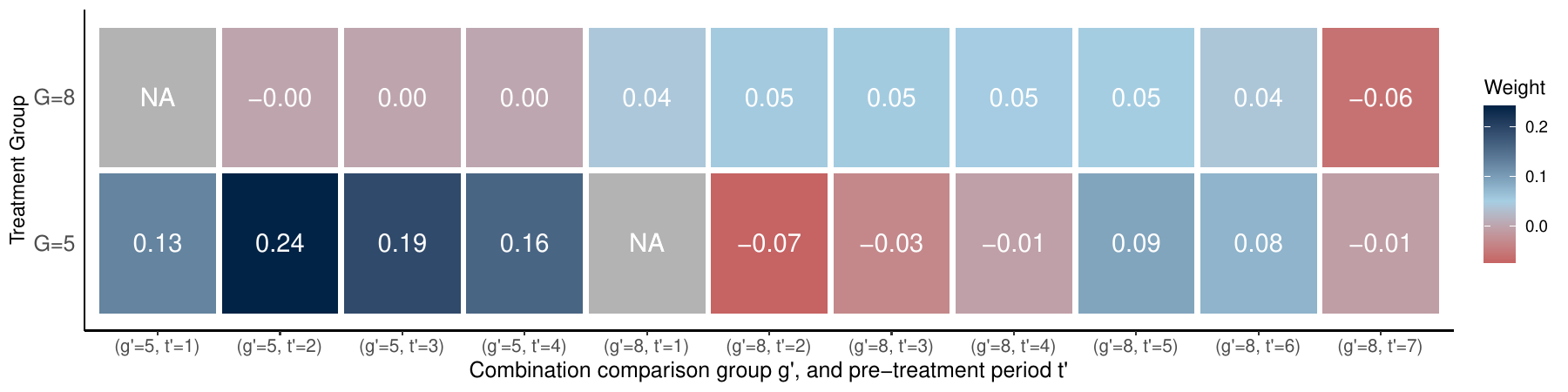}
         \caption{DGP with $\rho = -1$}
         \label{fig:heatmap_stg_avg-1}
     \end{subfigure}

\justifying
\noindent\scriptsize{\textit{Notes:} The figure displays the contribution of each pre-treatment period, comparison group, and treatment group when constructing the efficient DiD estimator for $ES_{\text{avg}}$ based on a single draw of the data-generating process. Sample size $n=400$ and $T=11$, and staggered treatment adoption. The weight color scale varies across DGPs. }
     
\end{figure}

As we discussed before, an appealing feature of our proposal is that we can visualize the contribution of the treatment group, the effective comparison group, and the pre-treatment period used as baseline for the construction of our efficient DiD estimator for $ES_{\text{avg}}$. Figure \ref{fig:heatmap_stg} plots this in a heatmap-style for three different DGPs, using a single simulation draw. Overall, we can see that the treatment group $G=5$ ``contributes more'' to the $\widehat{ES}_{\text{avg}}$, which is again expected, as this group has six available post-treatment periods; group $G=8$ has only three. Across DGPs, we can also see that the efficient weights vary across comparison groups and pre-treatment periods. When there is no serial correlation, most of the weights come from when $G=g'=5$, indicating that the efficient estimator is leveraging the ``never-treated'' units as a leading comparison group. In this case, we can also see that all pre-treatment periods have similar weighting schemes, again highlighting the appeal of exploring various pre-treatment periods as a baseline. When the error term follows a unit root process, we observe that the efficient estimator allocates almost all the weights to the last available pre-treatment periods, and earlier pre-treatment periods are less important for efficiency considerations. When the errors are negatively serially correlated with $\rho = -1$, we see that the weights are differently spread than in the previous case. Overall, these heatmap plots illustrate that the optimal way to aggregate information across treatment groups, comparison groups, and pre-treatment periods varies from DGP to DGP, and being able to (asymptotically) achieve the semiparametric efficiency bound without needing to know these additional features of the data can be very attractive. Similarly to the setup with a single treatment date, negative efficiency weights provide no reason to be concerned about bias in our context. These efficiency weights appear as a consequence of the semiparametric efficiency results, and leveraging their structure can lead to practical gains in precision.

\section{Empirical Illustration}\label{sec:application}

We illustrate how our efficient DiD and ES estimators can be used by estimating the effect of hospitalization on out-of-pocket medical spending using publicly available survey data from the Health and Retirement Study (HRS) from the replication package of \citet{Dobkin_etal_2018_AER}. Like  \citet{Dobkin_etal_2018_AER}, we explore the variation in the timing of hospitalization observed in the HRS to estimate the different treatment effects of interest. \citet{Dobkin_etal_2018_AER} analyzes many other outcomes of interest and also leverage (not publicly available) hospitalization data linked to credit reports.

To conduct our illustrative analysis, we follow the same sample selection steps used by \citet{Sun2021}, who also revisited \citet{Dobkin_etal_2018_AER}. The sample selection closely follows \citet{Dobkin_etal_2018_AER}, and restricts attention to non-pregnancy-related hospital admissions and to adults who are hospitalized at ages 50-59. Like \citet{Sun2021}, we restrict our analysis to a subsample of individuals who appear throughout waves 7-11 (approximately spanning 2004-2012), so we maintain a balanced panel with 652 individuals in 5 waves, $t=7, \dots, 11$. 
Given the data construction, all individuals in the sample are hospitalized by $t=11$, but were not hospitalized in wave $t=7$. Thus, as we do not have a valid comparison group for wave $t=11$, we also drop data from that period. 

Overall, we have 4 treatment groups: individuals (first) hospitalized in $t=8$ ($G_i = 8$, 252 individuals), $t=9$, ($G_i = 9$, 176 individuals),  $t=10$ ($G_i = 10$, 163 individuals), and those hospitalized in $t=11$ ($G_i=\infty$, 65 individuals). Note that as we are only using data up to wave 10, we can re-label the units treated in wave $t=11$ as the ``never-treated'' group, so we match our notation. 

As discussed in detail in \citet{Sun2021}, (unconditional) parallel trends (in all periods and groups) and no-anticipation assumptions are plausible in this setting, and one should a priori expect that the effect of hospitalization is heterogeneous across individuals and periods. Figure \ref{fig:trends_application} plots the evolution of the average out-of-pocket medical spending across all treatment groups, and also provides suggestive evidence in favor of the plausibility of parallel trends assumptions across all periods and groups. Thus, the ``heterogeneous robust'' DiD and ES estimators for staggered treatment adoption proposed by \citet{deChaisemartin2020_AER}, \citet{Callaway_Santanna_2021}, \citet{Sun2021}, \citet{Gardner2021}, \citet{Wooldridge2021a}, and \citet{Borusyak2023} are well-motivated. When parallel trends is plausible across all periods and groups, we should expect all modern DiD estimators to provide similar point estimates.

\begin{figure}[ht]
\caption{Evolution of average out-of-pocket medical spending across different treatment groups} 
\label{fig:trends_application}
         \centering
          \includegraphics[width=.75\textwidth, keepaspectratio]{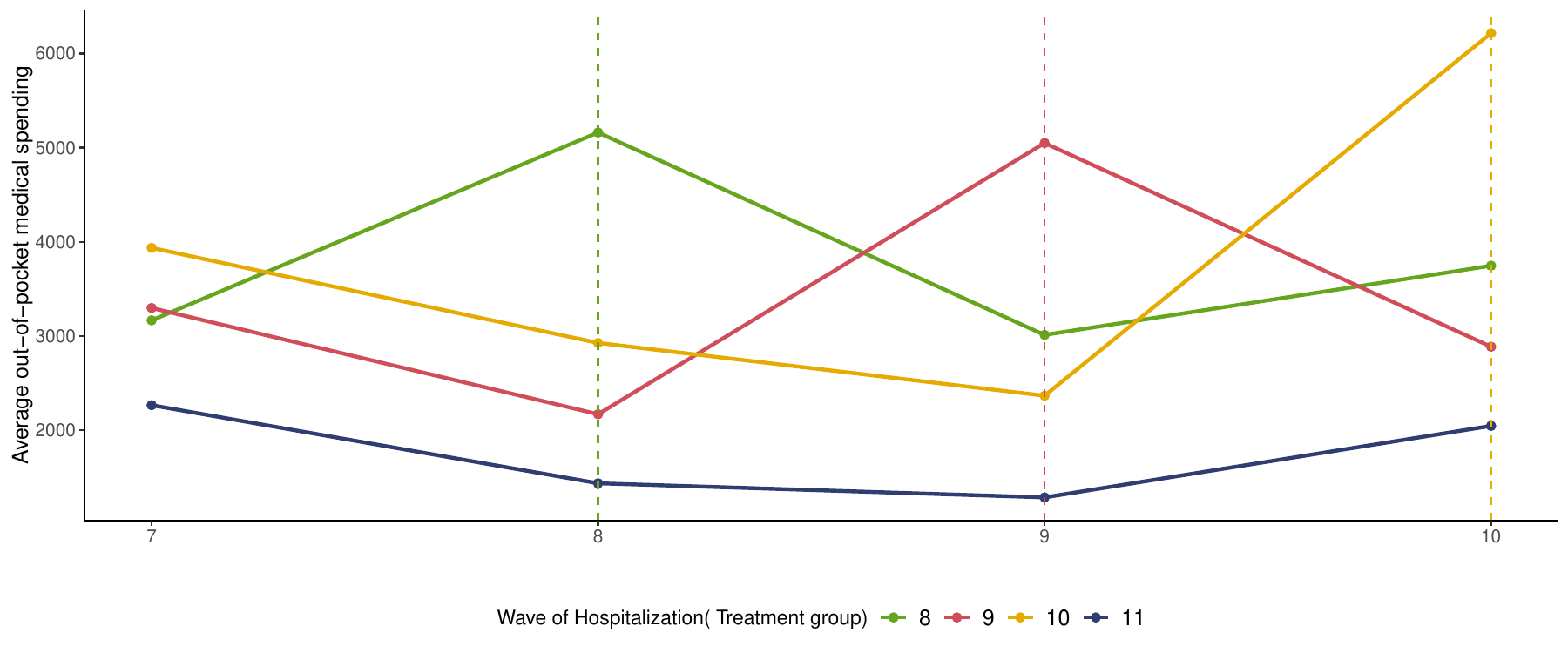}

\justifying
\noindent\scriptsize{\textit{Notes:} Plot of evolutions of average out-of-pocket medical spending across four cohorts defined by the time they were first hospitalized. The waves of the first hospitalizations are displayed in dashed vertical lines. For each cohort, points before the vertical lines are pre-treatment periods, and after the dashed vertical lines are post-treatment periods. We use the same balanced panel data as \citet{Sun2021}.  }
     
\end{figure}

In this context, we are interested in different treatment effect parameters. As we discussed before, a natural starting point is to analyze the six different post-treatment $ATT(g,t)$'s that measure the evolution of ATTs since hospitalization for each group. For instance, $ATT(8,8)$ would give the average treatment effect of hospitalization on out-of-pocket medical spending in time $t=8$, among individuals in group $g=8$, i.e., those who were hospitalized in time $8$; $ATT(8,9)$ and $ATT(8,10)$ would provide the average treatment effect for these same individuals but in periods $t=9$ (one year after hospitalization) and $t=10$ (two years after hospitalization). The interpretation of the other $ATT(g,t)$'s is analogous. We may also want to summarize all these different $ATT(g,t)$'s into more aggregated summary measures to ease interpretations and potentially gain precision. A natural way to summarize the effects is to look at the event-study parameters $ES(e)$ as in \eqref{eqn:ES}---which provides an average treatment effect by length of treatment exposure $e$---and their averages, $ES_{\text{avg}}$, as in \eqref{eqn:overall_ATT}.

\begin{table}[!h]
\centering
\caption{\label{tab:application_sa2021}Estimates for the effect of hospitalization on out-of-pocket medical spending}
\centering
\resizebox{\ifdim\width>\linewidth\linewidth\else\width\fi}{!}{
\begin{threeparttable}
\begin{tabular}[t]{lcccccccccc}
\toprule
\multicolumn{1}{c}{Estimator} & \multicolumn{1}{c}{$ATT(8,8)$} & \multicolumn{1}{c}{$ATT(8,9)$} & \multicolumn{1}{c}{$ATT(8,10)$} & \multicolumn{1}{c}{$ATT(9,9)$} & \multicolumn{1}{c}{$ATT(9,10)$} & \multicolumn{1}{c}{$ATT(10,10)$} & \multicolumn{1}{c}{$ES(0)$} & \multicolumn{1}{c}{$ES(1)$} & \multicolumn{1}{c}{$ES(2)$} & \multicolumn{1}{c}{$ES_\text{avg}$}\\
\midrule
EDiD & 3072 & 1112 & 1038 & 3063 & 90 & 2908 & 3024 & 692 & 1038 & 1585\\
 & (806) & (637) & (817) & (690) & (641) & (894) & (486) & (471) & (816) & (521)\\
\noalign{\vskip 2.5mm}
CS-SA & 2826 & 825 & 800 & 3031 & 107 & 3092 & 2960 & 530 & 800 & 1430\\
 & (1035) & (909) & (1008) & (702) & (651) & (995) & (539) & (585) & (1008) & (647)\\
\noalign{\vskip 2.5mm}
CS-dCDH & 3029 & 1248 & 800 & 3324 & 107 & 3092 & 3134 & 779 & 800 & 1571\\
 & (913) & (861) & (1008) & (959) & (651) & (995) & (536) & (570) & (1008) & (566)\\
\noalign{\vskip 2.5mm}
BJS-G-W & 3029 & 1285 & 1021 & 3239 & 77 & 2758 & 3017 & 788 & 1021 & 1609\\
 & (916) & (767) & (851) & (862) & (729) & (957) & (555) & (587) & (851) & (582)\\
\bottomrule
\end{tabular}
\begin{tablenotes}[para]
\item \vspace{-5ex} \singlespacing \footnotesize{Notes:  This table reports four different sets 
                             of estimates for the dynamic effects of hospitalization on out-of-pocket medical spending. Each column is a different
                             target parameter related to $ATT(g,t)$, $ES(e)$ or $ES_{avg}$. Standard errors clustered at the individual level are reported in parentheses.
                             The sample includes observations from wave $t=7$ to $t=10$, for 652 individuals. We leverage data from \citet{Dobkin_etal_2018_AER} and 
                             follow the same sample construction steps as \citet{Sun2021}.
}
\end{tablenotes}
\end{threeparttable}}
\end{table}

Table \ref{tab:application_sa2021} reports the estimates of all these target parameters and displays their standard errors clustered at individual level in parenthesis. We report estimates using different methods: our efficient DiD procedure (EDiD), the DiD estimator of \citet{Callaway_Santanna_2021} and \citet{Sun2021} that uses the never-treated units as the comparison group (CS-SA), the one of \citet{Callaway_Santanna_2021} and \citet{deChaisemartin2020_AER} that uses the not-yet-treated units as the comparison group (CS-dCDH), and the imputation DiD estimator of \citet{Borusyak2023}, \citet{Gardner2021}, and \citet{Wooldridge2021a} (BJS-G-W). Several overall patterns arise. First, note that the point estimates for each target parameter are similar across all the different DiD estimators. This is expected, as parallel trends across all periods and groups and no-anticipation are plausible in this application. We also note that instantaneous treatment effects tend to be larger than later effects, across all cohorts; for instance, estimates of $ATT(8,8)$ tend to be larger than $ATT(8,9)$ and $ATT(8,10)$, and $ATT(9,9)$ larger than $ATT(9,10)$. This is well captured in the event-study estimates, $ES(0)$, $ES(1)$, and $ES(2)$, indicating that the effect of hospitalization on out-of-pocket medical spending is mostly transitory. This is in line with the findings in \citet{Dobkin_etal_2018_AER}.

Another overall pattern that emerges from the results in Table \ref{tab:application_sa2021} relates to the inference precision of these estimators, as measured by their standard deviations. In line with our theoretical results, our efficient DiD estimators have the smallest standard errors across all target parameters. Outside our efficient DiD procedure, though, no other estimator uniformly dominates the remaining ones in terms of length of confidence intervals. In fact, CS-SA, CS-dCDH, and BJS-G-W take turns into leading to the second most precise estimator across target parameters, with CS-SA being ``second best'' for $ATT(9,9)$ and $ATT(9,10)$, CS-dCDH coming second for $ATT(8,8)$, $ATT(9,10)$, $ES(0)$, $ES(1)$, and $ES_{\text{avg}}$, and BJS-G-W for $ATT(8,9)$, $ATT(8,10)$, $ATT(10,10)$, and $ES(2)$. This highlights that relying on additional assumptions, such as homoskedasticity and restrictions on serial correlation, to justify efficiency gains may be practically hard. Our proposed efficient DiD estimator does not rely on these conditions and, in fact, delivers more precise inference procedures.

\begin{table}[!h]
\centering
\caption{\label{tab:application_sa2021_ARE}Relative efficiency of estimators for the effect of hospitalization on medical spending}
\centering
\resizebox{\ifdim\width>\linewidth\linewidth\else\width\fi}{!}{
\begin{threeparttable}
\begin{tabular}[t]{lcccccccccc}
\toprule
\multicolumn{1}{c}{Estimator} & \multicolumn{1}{c}{$ATT(8,8)$} & \multicolumn{1}{c}{$ATT(8,9)$} & \multicolumn{1}{c}{$ATT(8,10)$} & \multicolumn{1}{c}{$ATT(9,9)$} & \multicolumn{1}{c}{$ATT(9,10)$} & \multicolumn{1}{c}{$ATT(10,10)$} & \multicolumn{1}{c}{$ES(0)$} & \multicolumn{1}{c}{$ES(1)$} & \multicolumn{1}{c}{$ES(2)$} & \multicolumn{1}{c}{$ES_\text{avg}$}\\
\midrule
EDiD & 1.00 & 1.00 & 1.00 & 1.00 & 1.00 & 1.00 & 1.00 & 1.00 & 1.00 & 1.00\\
\noalign{\vskip -1.5mm}\\
CS-SA & 1.65 & 2.04 & 1.52 & 1.04 & 1.03 & 1.24 & 1.23 & 1.54 & 1.52 & 1.54\\
\noalign{\vskip -1.5mm}\\
CS-dCDH & 1.28 & 1.83 & 1.52 & 1.93 & 1.03 & 1.24 & 1.21 & 1.46 & 1.52 & 1.18\\
\noalign{\vskip -1.5mm}\\
BJS-G-W & 1.29 & 1.45 & 1.09 & 1.56 & 1.29 & 1.15 & 1.30 & 1.55 & 1.09 & 1.25\\
\bottomrule
\end{tabular}
\begin{tablenotes}[para]
\item \vspace{-5ex} \singlespacing \footnotesize{Notes:  This table reports the
                             estimated asymptotic relative efficiency (ARE) of four different sets 
                             of estimates for the dynamic effects of hospitalization on out-of-pocket medical spending, with
                             our efficient DiD estimator as the benchmark.
                             Each column provides the ARE for a different target parameter related to $ATT(g,t)$, $ES(e)$ or $ES_{avg}$.
                             The ARE is computed as the square of the ratio of standard errors of the DiD estimator relatitve to
                             the standard error of our efficient DiD estimator from Table \ref{tab:application_sa2021}.
                            
}
\end{tablenotes}
\end{threeparttable}}
\end{table}

But one may wonder: do these efficiency gains matter in practice? To answer this practically relevant question, we display in Table \ref{tab:application_sa2021_ARE} estimates of the asymptotic relative efficiency (ARE) of our proposed efficient DiD estimator with respect to the other available DiD estimators.\footnote{For any parameter $\eta$ of a distribution $F$, and for estimators $\widehat{\eta}_{1}$ and $\widehat{\eta }_{2}$ approximately $N\left(  \eta,V_{1}/n\right)  $ and $N\left(  \eta, V_{2}/n\right)  $, respectively, the asymptotic relative efficiency of $\widehat{\eta}_{2}$ with respect to $\widehat{\eta}_{1}$ is given by $V_{1}/V_{2}$; see, e.g., Section 8.2 in \cite{VanderVaart1998}.} Heuristically speaking, the ARE provides a relative measure of sample size needed for other DiD estimators to achieve the same precision as our efficient DiD estimator. The results in Table \ref{tab:application_sa2021_ARE} highlight that the gains in ARE of our efficient DiD estimator tend to be large. For example, for $ATT(8,8)$, CS-SA, CS-dCDH, and BJS-G-W DiD estimators would respectively need a sample size with $65\%$, $28\%$, and $29\%$ more hospitalized individuals than our EDiD estimator to achieve the same precision as EDiD. For $ATT(8,9)$, the gains are even larger, with CS-SA, CS-dCDH, and BJS-G-W DiD requiring $104\%$, $83\%$, and $45\%$ more hospitalized individuals than our EDiD estimator to equalize precision. These ARE gains remain large for the event-study coefficients, and their average. 

\begin{figure}[!ht]
\caption{Contribution of comparison groups and pre-treatment periods for the Efficient DiD estimator for $ATT(g,t)$'s} 
\label{fig:heatmap_stg_application}
         \centering
          \includegraphics[width=1\textwidth, keepaspectratio]{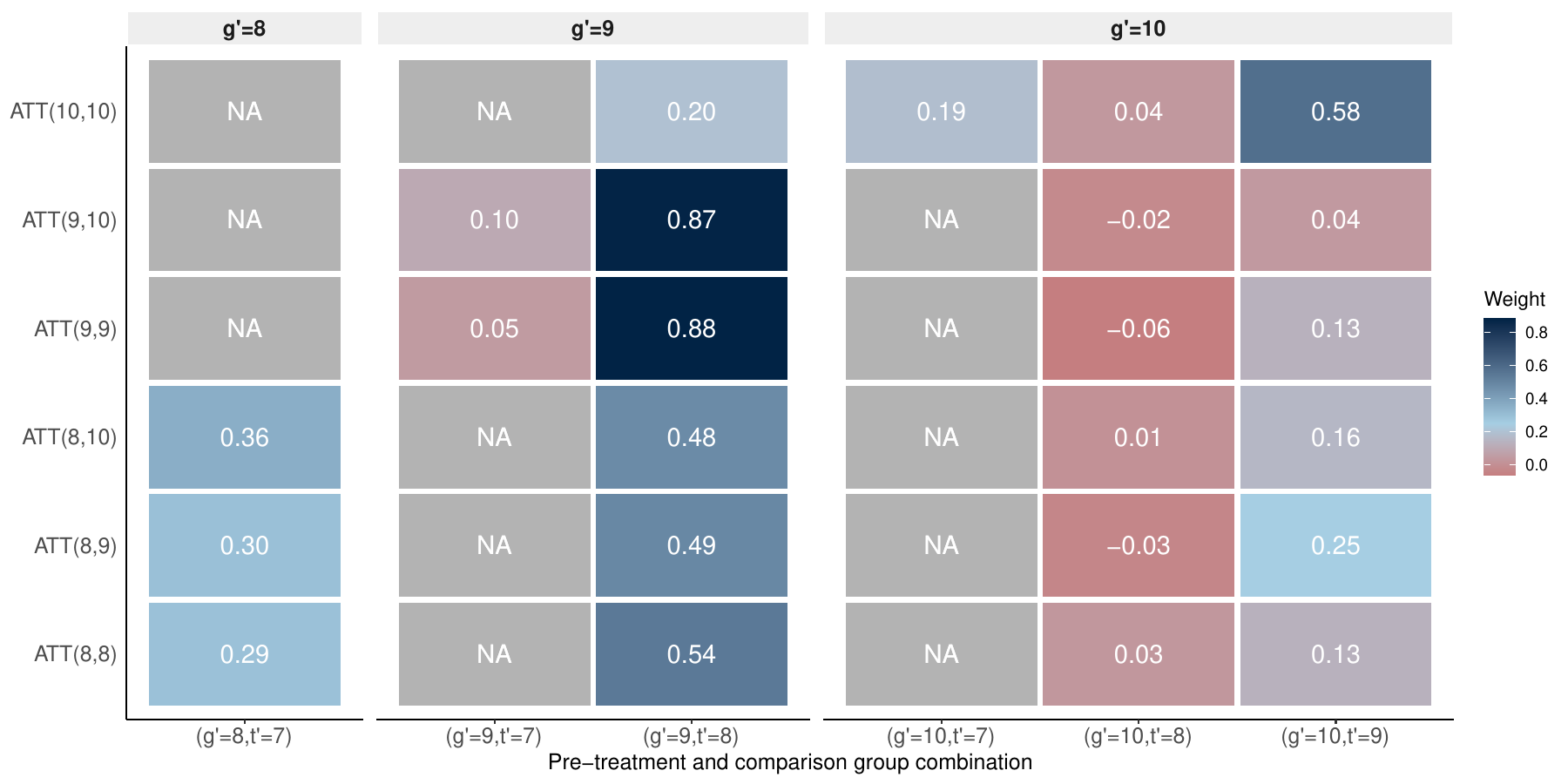}

\justifying
\noindent\scriptsize{\textit{Notes:} Weights attached to different $ATT(g,t)$ estimators that leverage different comparison groups and baseline periods in our application. The weights sum up to one in each row and arise as a consequence of our semiparametric efficient results in Theorem \ref{thm:efficiency}.  }
     
\end{figure} 

To gain further insights into how our efficient DiD and ES estimators achieve efficiency gains, in Figure \ref{fig:heatmap_stg_application} we report the weights attached to the different (effective) comparison groups and pre-treatment periods for each $ATT(g,t)$ of interest. Overall, there are four different (non-collinear) $ \widehat{\widetilde{Y}}^{\text{att(g,t)}}_{g',\tprime}$ as defined in \eqref{eqn:generic_att_gt_stg_estimated} that one can use to estimate each $ATT(g,t)$, and the weights attached to them sum up to one, i.e., the weights in Figure \ref{fig:heatmap_stg_application} sum up to one in each row. If we were to weigh each of these uniformly, each of them would have a 0.25 weight. In practice, we can see that our efficient DiD estimators never use these uniform weights, and, in fact, these weights vary according to the parameter of interest. This, by itself, highlights the appeal of our procedure to ``adapt'' to these different scenarios with the researcher needing to take a stand on additional hard-to-justify assumptions. 

It is also interesting to analyze these weights separately. We start with $ATT(8,8)$. For this parameter, the efficient estimator puts weights of 0.29 on the DiD estimator that uses only the never-treated (in our case, cohort treated in wave 11) as a comparison group ($G=g'=5)$,  0.54 on the DiD estimator that only uses cohort $G=9$ as a comparison group ($g'=9,t'=8)$, and 0.03 on the DiD estimator that uses cohort $G=10$ as the comparison group. Both of these use period $t=7$ as the baseline. Interestingly, the efficient DiD estimator for $ATT(8,8)$ puts 0.13 weight on the DiD estimator that uses ``never-treated'' and $G=10$ cohort as comparison group and leverages the first period as pre-treatment periods as well as $t_{\text{pre}}=9$, illustrating the ``bridging'' phenomenon discussed in the stylized example in Figure \ref{fig:illustration}. Here, note that $t_{\text{pre}}=9$ is a post-period for $G=5$ but a pre-treatment period for group $G=10$. The flexibility of our efficient DiD estimator to leverage this additional information from the data leads to the efficiency gains highlighted in Table \ref{tab:application_sa2021_ARE}.

For cohort $G=8$ in period $t=9$, we note that the efficient DiD estimator weights the DiD estimator based on never-treated units (i.e., the CS-SA estimator) by 0.30, the DiD estimator that leverages never-treated and $G=9$ as comparison groups by 0.49 (0.48), and the DiD estimator that leverages only $G=10$ as a comparison group by 0.25. An interpretation for the weights for $ATT(8,10)$ is analogous. For $ATT(9,9)$ and $ATT(9,10)$, note that the efficient DiD estimator puts nearly 0.90 weight on the CS-SA DiD estimator, explaining why its performance is similar to theirs (but better than the other alternative DiD estimators). For $ATT(10,10)$, our efficient estimator put 0.58 weight on the CS-SA estimator, 0.04, and 0.19 weight on the DiD estimator that uses never-treated as comparison and period eight and seven as baseline period, respectively, and the remaining 0.20 weight on the DiD estimator that combines never-treated and the cohort treated in wave $9$ as comparison groups and use first two periods as baseline periods. Together, these flexible weighting schemes demonstrate that our efficient DiD estimators are flexible and capable of extracting information from the data to provide the gains in precision that our theoretical results indicate.

\begin{figure}[!ht]
\caption{Assessing the stability of our event-study estimators} 
\label{fig:stability_estimators}
         \centering
          \includegraphics[width=1\textwidth, keepaspectratio]{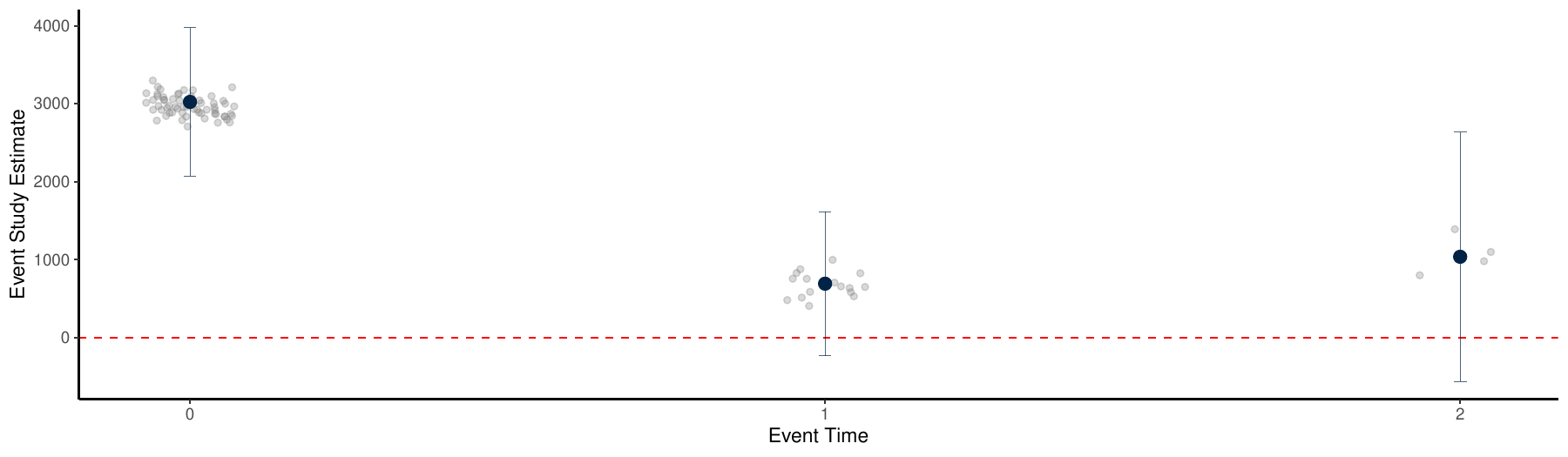}

\justifying
\noindent\scriptsize{\textit{Notes:} The plot displays our efficient event study estimators for event time zero, one, and two, and their 95\% confidence intervals computed using analytical standard errors in blue. In gray, we display all possible event-study estimators that attach unit weight to the different $ATT(g,t)$ estimators in Figure \ref{fig:heatmap_stg_application} that uses a particular $(g', \tprime)$ pair. For $ES(0)$, that are 64 possible estimators ($4\times4\times4$) that leverage different combinations of $ATT(8,8)$, $ATT(9,9)$, and $ATT(10,10)$. For $ES(1)$, we have 16 possible estimators, and for $ES(2)$, we have four possible estimators. }
     
\end{figure} 

We conclude the session by discussing how one can use our results to assess the stability of our results. In Figure \ref{fig:stability_estimators}, we plot our efficient ES estimators with their 95\% confidence intervals in blue, and in gray, we plot all possible event-study estimators that attach unit weight to the different $ATT(g,t)$ estimators in Figure \ref{fig:heatmap_stg_application} that uses a particular $(g', \tprime)$ pair. For $ES(0)$, that are 64 possible estimators ($4\times4\times4$) that leverage different combinations of $ATT(8,8)$, $ATT(9,9)$, and $ATT(10,10)$. For $ES(1)$, we have 16 possible estimators, and for $ES(2)$, we have four possible estimators. As you can see in Figure \ref{fig:stability_estimators}, all these different combinations lead to very stable event-study estimators, indicating the plausibility of Assumption \ref{asm:pt-n-all} in our context. Figure \ref{fig:stability_estimators} can be interpreted as an adaptation of the specification curve analysis \citep{Simonsohn_Simmons_Nelson_2020_spec_curve} to our DiD context.

\section{Conclusion}\label{sec:conclusion}

Empirical researchers routinely face a variety of alternative DiD and ES estimators to choose from. Existing approaches often overlook the fact that different pre-treatment periods and comparison groups may have varying identification power, leading them to treat all pre-treatment periods as equally informative or to use only the last pre-treatment period as a baseline. The semiparametric efficiency framework developed in this paper provides a principled approach to leveraging information from the pre-treatment period to form DiD and ES estimators with minimum variance under the maintained identification assumptions, while remaining agnostic to parametric functional form or parametric error structure. As our simulations and empirical illustration demonstrate, the gains in precision can be substantial, addressing the concern that heterogeneous robust DiD estimators may not be informative compared to more traditional estimators that assume treatment effect heterogeneity away.

In practice, practitioners may want to assess if there is any statistical evidence against using any pre-treatment period as the baseline in their DiD/ES procedure, or against using any specific available comparison group (when treatment is staggered). We present in Appendix \ref{sec:hausman-tests} how to construct Hausman-type tests for overidentification specification test of DiD models, or how to construct alternative estimators that aim to trade off bias and variance \citep{Armstrong_Kline_Sun_2024_adaptive}. We also discuss visualization schemes to informally quantify the sensitivity of the results concerning the usage of different pre-treatment baseline periods, and different comparison groups, or to ``eyeball'' the magnitude of potential model violations.

Our framework opens avenues for several extensions. It would be useful to study semiparametric efficient estimation for nonlinear DiD-type setups such as the Changes-in-Changes model \citep{Athey2006}, setups where treatment can turn on and off over multiple periods \citep{deChaisemartin2020_AER, deChaisemartin2024_intertemporal}, and setups with unbalanced panel or repeated cross-sections \citep{Callaway_Santanna_2021}---in this latter case, we expect new sources of overidentification related to event-study weights. Finally, it is important to stress that all results derived in this paper leveraged the large-$n$, fixed-$T$ panel data regime, as parallel trends are more likely to be satisfied when the number of periods is small. In setups where parallel trends are not plausible, we recommend that researchers use alternative estimators that do not rely on parallel trends assumptions, e.g., \citet{Arkhangelsky2021_SDiD}, \citet{Viviano2021}, and \citet{Imbens_Viviano_2023}. Developing semiparametric efficient estimators in such frameworks is left for future research.

\appendix

\numberwithin{assumption}{section}


\section{Assessing the plausibility of PT assumptions}\label{sec:hausman-tests}

Another consequence of Lemmas \ref{lm:1} and \ref{lm:2} is that Assumption \ref{asm:pt-n-all} can be \emph{directly} tested, as our DiD model is overidentified. In this section, we describe how to construct a Hausman-type test based on event-study parameters in the context of staggered treatment adoption. We focus on event-study parameters as they are often the main parameter of interest in empirical research, and are also often used to assess the plausibility of the identification assumptions; see, e.g., \citet{Roth2022_AERI} and \citet{Borusyak2023} for a discussion.

The main idea of our Hausman-type test is to compare $\widehat{ES} =(\widehat{ES}(e), e \in \mathcal{E})$ as defined in \eqref{eqn:Event_study_efficient}---which is consistent and semiparametrically efficient under Assumption \ref{asm:pt-n-all}---with an event-study estimator that is consistent under Assumption \ref{asm:pt-post} but does not require Assumption \ref{asm:pt-n-all}. Toward this end, let $\widecheck{ES} = (\widecheck{ES}(e),e \in \mathcal{E})$, where each event-study $\widecheck{ES}(e)$ is given by
\begin{align*}
    \widecheck{ES}(e) =\sum_{g \in \mathcal{G}_{\text{trt}}} \frac{\widehat{\pi}_g}{\sum_{g' \in \mathcal{G}_{\text{trt}}} \widehat{\pi}_{g'}} \widecheck{ATT}_{\text{stg}}(g,g+e),
\end{align*}
with 
\begin{align}
\widecheck{ATT}_{\text{stg}}(g,t)= \E_n\left[ \widehat{\widetilde{Y}}^{\text{att(g,t)}}_{g,g-1} \right], \label{eqn:ATT-hat-just-id}
\end{align}
and $\widehat{\widetilde{Y}}^{\text{att(g,t)}}_{g',\tprime}$ as in \eqref{eqn:generic_att_gt_stg_estimated}. Under this construction, $\widecheck{ES}$ is consistent for $ES$ under Assumption \ref{asm:pt-post} but less efficient than $\widehat{ES} =(\widehat{ES}(e), e \in \mathcal{E})$.

Based on these two estimators, we can construct a Hausman-type test statistic as
\begin{align} \label{eqn:Hausman_statistic}
    \widehat{H} =n \left( \widehat{ES}-\widecheck{ES} \right)' \left( \widehat{\aCov}(\widecheck{ES}) - \widehat{\aCov}(\widehat{ES}) \right)^{-1}\left( \widehat{ES}-\widecheck{ES} \right),
\end{align}
where $\widehat{\aCov} (\cdot)$ denotes the corresponding asymptotic covariance estimator for the asymptotic covariance matrices of $\widehat{ES}$ and $\widecheck{ES}$ given by Theorem \ref{thm:efficiency} and Corollary \ref{cor:2-period}.\footnote{It is also straightforward to construct an alternative estimator to $\widehat{\aCov}(\widecheck{ES}) - \widehat{\aCov}(\widehat{ES})$ that leverages the difference of the (efficient) influence function of these two estimators. An advantage of this alternative estimator is that it is positive definite in finite samples. We omit the details as the notation is heavier.} The test rejects the parallel trends assumption for all periods and all groups if $\widehat{H}$ exceeds the corresponding critical value of a $\chi^2(|\mathcal{E}|)$ distribution.\footnote{Note that our Hausman-type test is effectively testing if the event-study aggregation of the parallel trends is the same under Assumptions \ref{asm:pt-n-all} and \ref{asm:pt-post}. It is straightforward to construct a Hausman-type test for Assumptions \ref{asm:pt-n-all} and \ref{asm:pt-post} based on all the estimators of the $ATT(g,t)$'s. However, we anticipate that this would not be as empirically attractive as the one in \eqref{eqn:Hausman_statistic}, as event-study parameters often play a more prominent role than the $ATT(g,t)$'s in setups with staggered treatment adoption.}

\begin{theorem} \label{thm:test}
    Suppose that the estimator $\widehat{ES}$ is constructed under the conditions of Theorem \ref{thm:asy_properties_estimator}, and $\widecheck{ES}$ also satisfies the corresponding conditions. Assume that the covariance matrices estimators $\widehat{\aCov}(\widecheck{ES})$ and $\widehat{\aCov}(\widehat{ES})$ are consistent. Then the test statistic $\widehat{H}$ converges in distribution to a $\chi^2(|\mathcal{E}|)$ distributions, where $|\mathcal{E}|$ denotes the number of elements in $\mathcal{E}$. Also, this Hausman test has nontrivial power against all local alternatives.
\end{theorem}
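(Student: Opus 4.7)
The plan is to follow the classical Hausman template, adapted to our semiparametric setting. First, under the null (Assumptions \ref{asm:sampling_panel}, \ref{asm:overlap}, \ref{asm:no anticipation}, and \ref{asm:pt-n-all}), Theorem \ref{thm:asy_properties_estimator} gives the asymptotic linearization $\sqrt{n}(\widehat{ES}-ES) = n^{-1/2}\sum_i \mathbb{EIF}^{es}_{\text{stg}}(W_i) + o_p(1)$, while Corollary \ref{cor:staggered_just_id} combined with the arguments in the proof of Theorem \ref{thm:asy_properties_estimator} applied to the just-identified plug-in $\widecheck{ES}$ delivers $\sqrt{n}(\widecheck{ES}-ES) = n^{-1/2}\sum_i \psi(W_i) + o_p(1)$ for a regular influence function $\psi$ built from $\mathbb{IF}^{att(g,g-1)}_{g,g-1}$. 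Under the null both estimators are consistent for the same $ES$, so the joint CLT yields joint asymptotic normality of $\sqrt{n}(\widehat{ES}-ES,\widecheck{ES}-ES)$ with zero mean.

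The key step is invoking the Hausman orthogonality identity for semiparametrically efficient estimators: since $\mathbb{EIF}^{es}_{\text{stg}}$ is the projection of any regular influence function onto the tangent space at the null model, $\E[\mathbb{EIF}^{es}_{\text{stg}}\,(\psi-\mathbb{EIF}^{es}_{\text{stg}})']=0$. This implies the standard identity
\begin{equation*}
    \aCov(\widecheck{ES}-\widehat{ES}) \;=\; \aCov(\widecheck{ES})-\aCov(\widehat{ES}),
\end{equation*}
so the weighting matrix in \eqref{eqn:Hausman_statistic} consistently estimates the asymptotic variance of $\sqrt{n}(\widehat{ES}-\widecheck{ES})$. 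Combined with consistency of $\widehat{\aCov}(\widecheck{ES}) - \widehat{\aCov}(\widehat{ES})$ (assumed) and Slutsky's lemma, $\widehat{H}\overset{d}{\to}\chi^2(|\mathcal{E}|)$ by the usual quadratic-form-in-normals argument. If the difference matrix is only positive semidefinite in finite samples, I would replace the inverse with a Moore-Penrose pseudoinverse and adjust the degrees of freedom to the rank; under mild regularity the asymptotic rank is $|\mathcal{E}|$ because $\widecheck{ES}$ discards strictly more identifying information than $\widehat{ES}$.

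For nontrivial local power, I would consider Pitman drifts $P_n$ that violate Assumption \ref{asm:pt-n-all} by $O(n^{-1/2})$ while preserving Assumption \ref{asm:pt-post}. Along such drifts $\widecheck{ES}$ remains consistent because its construction in \eqref{eqn:ATT-hat-just-id} only relies on the $(g,g-1)$ baseline and the never-treated comparison (which is the content of \ref{asm:pt-post}), whereas $\widehat{ES}$ acquires an asymptotic bias $\mu\neq 0$ from loading on the additional overidentifying moments in \eqref{eqn:over-id-moments} that are now misspecified. A standard contiguity/Le Cam argument (the log-likelihood ratio is contiguous, so the influence-function representations carry over) then yields $\sqrt{n}(\widehat{ES}-\widecheck{ES})\overset{d}{\to}\mathcal{N}(\mu,V_H)$ with $V_H=\aCov(\widecheck{ES})-\aCov(\widehat{ES})$, so $\widehat{H}\overset{d}{\to}\chi^2(|\mathcal{E}|;\,\mu'V_H^{-1}\mu)$, a noncentral chi-square with nontrivial power whenever $\mu\neq 0$.

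The main technical obstacle is verifying the Hausman orthogonality in our sequential conditional moment setting: one must show that $\psi-\mathbb{EIF}^{es}_{\text{stg}}$ lies in the orthogonal complement of the tangent space spanned by score perturbations of the nuisance functions $(m(X),p_{\text{ratio}}(X),\Omega_{gt}^*(X))$. This is where the Neyman-orthogonal structure of the EIF derived in Theorem \ref{thm:efficiency} does the work: the remark after Theorem \ref{thm:asy_properties_estimator} notes that any consistent first-step nuisance estimator yields the same asymptotic linearization, so the difference $\psi-\mathbb{EIF}^{es}_{\text{stg}}$ depends only on the observables-layer moments, where the efficient-score projection argument of \citet{Ai_Chen_2012} applies directly. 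A second, more routine obstacle is the consistency of the covariance estimators under drifting sequences, which follows from uniform equicontinuity of the influence-function representations in a neighborhood of the null.
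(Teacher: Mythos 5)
Your treatment of the null distribution is essentially the paper's own argument, just spelled out: the paper's proof is the classical Hausman observation that, because $\widehat{ES}$ attains the semiparametric efficiency bound, $\aCov(\widehat{ES}-\widecheck{ES})=\aCov(\widecheck{ES})-\aCov(\widehat{ES})$, so the quadratic form $\widehat{H}$ converges to $\chi^2(|\mathcal{E}|)$ once the plug-in covariance estimators are consistent. Your orthogonality-of-the-EIF justification for that variance identity is exactly the standard mechanism behind it, so that portion is fine (modulo the implicit nonsingularity of the limiting difference matrix, which both you and the paper take for granted; your pseudoinverse remark is a reasonable patch but then the degrees of freedom would be the limiting rank, not automatically $|\mathcal{E}|$).

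The gap is in the power claim. The theorem asserts nontrivial power against \emph{all} local alternatives, and the paper obtains this by invoking Theorem 3.3 and Remark 3.4 of \citet{Chen_Santos_2018_ECMA}, i.e., the local-overidentification machinery that characterizes when specification tests of this form can detect every $n^{-1/2}$ deviation from the model. Your contiguity/Pitman-drift argument delivers a noncentral $\chi^2(|\mathcal{E}|;\,\mu'V_H^{-1}\mu)$ limit and hence power \emph{whenever} the induced drift $\mu$ in $\widehat{ES}-\widecheck{ES}$ is nonzero, but you never show that every local violation of Assumption \ref{asm:pt-n-all} produces $\mu\neq 0$; a priori some local deviations could shift the efficient and just-identified aggregations equally (or load only on directions annihilated by the event-study weighting), in which case your noncentrality parameter vanishes and your argument gives no power. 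You also restrict attention to drifts that preserve Assumption \ref{asm:pt-post} so that $\widecheck{ES}$ stays centered, which is a further narrowing of the class of local alternatives relative to the statement. To close the claim as stated you either need to prove that the map from local perturbations of the overidentifying moments in \eqref{eqn:over-id-moments} to the asymptotic mean of $\sqrt{n}(\widehat{ES}-\widecheck{ES})$ is injective (which is essentially what the Chen--Santos local-overidentification result packages), or simply appeal to their Theorem 3.3 and Remark 3.4 as the paper does.
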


    Although assessing the plausibility of Assumption \ref{asm:pt-n-all} via the Hausman-type test in Theorem \ref{thm:test} is attractive, it is also worth noting that Assumptions \ref{asm:pt-n-all} and \ref{asm:pt-post} respectively define the largest and smallest sets of conditional moment restrictions for estimating treatment effects. Thus, in practice, it may be the case that the ``plausible'' parallel trends condition lies between these two extremes, and one may be interested in approximating this set of conditional moment restrictions. It is possible to do so by coupling our Hausman-type test with a Holm-Bonferroni \citep{Holm1979} sequential procedure to select conditional moment restrictions that align with Assumption \ref{asm:pt-post}. The main idea is to contrast $\widecheck{ES}$ with event-study estimators that (sequentially) include additional overidentifying restrictions \eqref{eqn:over-id-moments}, and select the event-study estimator with the largest set of restrictions that is not statistically different from $\widecheck{ES}$ (after adjusting for multiple testing). This essentially entails an incremental Sargan Test, as discussed in \citet{Chen_Santos_2018_ECMA} in different contexts.

    The specific procedure is described as follows. Let \(\mathcal{M}\) denote the set of conditional moment restrictions specified by (\ref{eqn:CATT-id}) for all \(t \geq g\), with \(g' = g\) and \(\tprime = g - 1\), corresponding to the restrictions implied by Assumption \ref{asm:pt-post}. For each \(g' > \tprime\) (with \(\tdoubleprime\) fixed at \(1\)), let \(\mathcal{M}_{g',\tprime}\) represent \(\mathcal{M}\) extended by the additional conditional moment restriction given by (\ref{eqn:CATT-id}) for \(g'\) and \(\tprime\). Let \(L\) denote the total number of such models \(\mathcal{M}_{g',\tprime}\). For each \(\mathcal{M}_{g',\tprime}\), we perform a Hausman-type test against the baseline model \(\mathcal{M}\) similar to the approach in (\ref{eqn:Hausman_statistic}) and Theorem \ref{thm:test}. Specifically, we construct event-study estimators \(\widehat{ES}\) using the moment restrictions in \(\mathcal{M}_{g',\tprime}\) and compute the corresponding p-value, \(\mathrm{p}_{g',\tprime}\), of the Hausman statistic in (\ref{eqn:Hausman_statistic}). These p-values are then ordered from smallest to largest as \(\mathrm{p}_{(1)}, \ldots, \mathrm{p}_{(L)}\). Let \(\alpha\) denote the family-wise error rate (e.g., \(\alpha = 0.05\)). The procedure starts with \(\ell = 1\) and compares \(\mathrm{p}_{(1)}\) with \(\frac{\alpha}{L}\). If \(\mathrm{p}_{(1)} \geq \frac{\alpha}{L}\), the procedure terminates without rejecting any models. Otherwise, the conditional moment corresponding to \(\mathrm{p}_{(1)}\) is rejected, and the procedure proceeds to the next step. For each subsequent \(\ell = 2, \ldots, L\), \(\mathrm{p}_{(\ell)}\) is compared with \(\frac{\alpha}{L+1-\ell}\). If \(\mathrm{p}_{(\ell)} \geq \frac{\alpha}{L+1-\ell}\), the procedure terminates; otherwise, the conditional moment corresponding to \(\mathrm{p}_{(\ell)}\) is rejected, and the process continues until either a p-value falls below the threshold or all models are tested.

\begin{remark}\label{rm:adapting_to_misspecification}
    We caveat that using our Hausman-type test $\widehat{H}$ in Theorem \ref{thm:test} to decide whether to report $\widehat{ES}$ or $\widecheck{ES}$ can lead to unnecessarily high mean squared errors compared to an oracle selection procedure \citep{Armstrong_Kline_Sun_2024_adaptive}. Instead of following this approach, one can adopt the adaptive estimation procedure of  \citet{Armstrong_Kline_Sun_2024_adaptive} that can be understood as weighting $\widehat{ES}$ and $\widecheck{ES}$, with weights that are designed to take into account a bias-variance tradeoff. In their context, $\widehat{ES}$ would play the role of the ``restricted'' estimator while $\widecheck{ES}$ would play the role of the ``unrestricted'' estimator. 
\end{remark}

\begin{remark}\label{rm:visualizations}
    Another popular technique commonly used to assess the plausibility of parallel trends assumptions is to examine whether event-study coefficients in pre-treatment periods are close to zero, i.e., whether there is some evidence of parallel pre-treatment trends. Although our proposed efficient event-study estimators leverage all the available pre-treatment information to estimate post-treatment average treatment effects, we can also construct ``placebo'' pre-treatment effects by fixing the estimation procedure in Section \ref{sec:estimation_inference} but allowing for $t<g$. This shares the same spirit as the pre-treatment event study analysis of \citet{Borusyak2023}. Alternatively, one can fix any comparison group and pre-treatment period, and report a pre-treatment event study analysis for these. 
\end{remark}

\section{Extension: Instrumented DiD}\label{sec:Ext1}

We extend the instrumented DiD (DiD-IV) setup in \cite{Miyaji_DiD_IV_2024} to incorporate covariates. There are $T$ time periods: $t=1,2,\cdots,T$. Let $Z_{t}$ denote the instrument in time period $t$ and collect them into the path $Z \coloneqq (Z_1,\cdots,Z_T)$. The instrument is irreversible: $Z_{t} \geq Z_{t-1}$ for all $t$. Therefore, the instrument path is uniquely characterized by the initial date of exposure $G^{\text{IV}} \coloneqq \min \{t:Z_t = 1\}$. The units are grouped based on $G^{\text{IV}}$ instead of on the actual treatment. Denote $G^{\text{IV}}_g \coloneqq \mathbf{1}\{G^{\text{IV}}=g\}$. Let $\mathcal{G}^{\text{IV}}_{\text{trt}}$ be the support of $G^{\text{IV}}$ among the units who are eventually exposed to the instrument. 
Denote $D_t(g)$ as the potential treatment if the unit is first exposed to the instrument in period $g$. Denote $Y_t(d_t)$ as the potential outcome if the treatment in period $t$ is $d_t$. This definition already imposes the no carryover assumption that the potential outcomes depend only on the current treatment
status \citep{deChaisemartin2020_AER,Miyaji_DiD_IV_2024} and the exclusion of the instrument. Let $X$ be a set of pretreatment covariates.

The target parameter is the local average treatment effect for the treated (LATT) defined as
\begin{align*}
    \textit{LATT}(g,t) \coloneqq \mathbb{E}[Y_t(1) - Y_t(0)|G^{\text{IV}}=g,D_t(g) > D_t(\infty)].
\end{align*}

The following assumptions are imposed.

\begin{namedassumption}{DiD-IV} \label{asm:didiv}
\phantom{abcgsfafafaagagagagahahah}
   \begin{enumerate}[(1)]
   \item (Random Sampling) $\{(Y_{i,t=1},\dots,Y_{i,t=T},D_{i,t=1},\dots,D_{i,t=T}, X_i', G^{\text{IV}}_i)'\}_{i=1}^n$ is a random sample from $(Y_{t=1},\dots,Y_{t=T},D_{t=1},\dots,D_{t=T}, X', G^{\text{IV}})'$.
    \item (Overlap) For each $g$, $\mathbb{E}[G^{\text{IV}}_g|X] \in (0,1)$ a.s.
        \item (Monotonicity) $\mathbb{P}(D_t(g) \geq D_t(\infty)|X)=1$ a.s., for $t \geq g$.
        \item (No-anticipation in the first stage) $\mathbb{E}[D_t(g)|G^{\text{IV}}=g,X] = \mathbb{E}[D_t(\infty)|G^{\text{IV}}=g,X], t < g$.
        \item (Parallel trends in the treatment) $\mathbb{E}[D_t(\infty) - D_{t-1}(\infty)|G^{\text{IV}}=g,X] = \mathbb{E}[D_t(\infty) - D_{t-1}(\infty)|G^{\text{IV}}=\infty,X]$, for all $g,t$.
        \item (Parallel trends in the outcome) $\mathbb{E}[Y_{t}(D_t(\infty)) - Y_{t-1}(D_{t-1}(\infty))|G^{\text{IV}}=g,X] = \mathbb{E}[Y_{t}(D_t(\infty)) - Y_{t-1}(D_{t-1}(\infty))|G^{\text{IV}}=\infty,X]$, for all $g,t$.
    \end{enumerate}
\end{namedassumption}

\begin{lemma}\label{lm:LATT-id}
    Under Assumption \ref{asm:didiv}, $LATT(g,t)$ is identified as
    \begin{align*}
       \textit{LATT}(g,t) = \frac{\mathbb{E}\big[G^{\text{IV}}_g (\mathbb{E}[Y_t - Y_{g-1} | G^{\text{IV}}=g, X] - \mathbb{E}[Y_t - Y_{g-1} | G^{\text{IV}}=\infty, X]) \big]}{\mathbb{E}\big[G^{\text{IV}}_g(\mathbb{E}[D_t - D_{g-1} | G^{\text{IV}}=g, X] - \mathbb{E}[D_t - D_{g-1} | G^{\text{IV}}=\infty, X]) \big]}.
    \end{align*}
\end{lemma}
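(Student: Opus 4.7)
The plan is to combine a DiD-type identification argument applied separately to the outcome $Y$ and the treatment $D$ with a LATE-style monotonicity argument, obtaining $\textit{LATT}(g,t)$ as a Wald ratio, analogous to the instrumented DiD strategy of \citet{Miyaji_DiD_IV_2024}. I would first identify the conditional ``first-stage'' effect $\textit{ITT}_D(g,t,X):=\mathbb{E}[D_t(g)-D_t(\infty)|G^{\text{IV}}=g,X]$. By parallel trends on $D$ (Assumption \ref{asm:didiv}(5)) applied telescopically between $g-1$ and $t$, $\mathbb{E}[D_t(\infty)-D_{g-1}(\infty)|G^{\text{IV}}=g,X]=\mathbb{E}[D_t-D_{g-1}|G^{\text{IV}}=\infty,X]$, and by first-stage no-anticipation (Assumption \ref{asm:didiv}(4)), $\mathbb{E}[D_{g-1}(\infty)|G^{\text{IV}}=g,X]=\mathbb{E}[D_{g-1}|G^{\text{IV}}=g,X]$, so $\textit{ITT}_D(g,t,X)$ collapses to $\mathbb{E}[D_t-D_{g-1}|G^{\text{IV}}=g,X]-\mathbb{E}[D_t-D_{g-1}|G^{\text{IV}}=\infty,X]$. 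An entirely analogous argument using parallel trends on the path $Y_t(D_t(\infty))$ (Assumption \ref{asm:didiv}(6)), together with a pre-period no-anticipation statement on $Y$ that I plan to derive from Assumption \ref{asm:didiv}(4), would then give the reduced-form analog
\begin{align*}
\textit{ITT}_Y(g,t,X):=\mathbb{E}[Y_t(D_t(g))-Y_t(D_t(\infty))|G^{\text{IV}}=g,X]=\mathbb{E}[Y_t-Y_{g-1}|G^{\text{IV}}=g,X]-\mathbb{E}[Y_t-Y_{g-1}|G^{\text{IV}}=\infty,X].
\end{align*}

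Next, I would invoke monotonicity (Assumption \ref{asm:didiv}(3)) together with the binary structure of $D$: the difference $D_t(g)-D_t(\infty)$ equals the indicator $\mathbf{1}\{D_t(g)>D_t(\infty)\}$ almost surely for $t\ge g$. Combined with the identity $Y_t(d)=dY_t(1)+(1-d)Y_t(0)$, this yields the pointwise decomposition
\begin{align*}
Y_t(D_t(g))-Y_t(D_t(\infty))=(Y_t(1)-Y_t(0))(D_t(g)-D_t(\infty)).
\end{align*}
Multiplying by $G^{\text{IV}}_g$, taking expectations, and absorbing the inner $X$-conditioning via iterated expectations (noting that $\mathbb{E}[G^{\text{IV}}_g\,\mathbb{E}[Z|G^{\text{IV}}=g,X]]=\mathbb{E}[G^{\text{IV}}_g\,Z]$ for any integrable $Z$), I would obtain
\begin{align*}
\mathbb{E}[G^{\text{IV}}_g\,\textit{ITT}_Y(g,t,X)] &= \mathbb{E}\big[G^{\text{IV}}_g\,(Y_t(1)-Y_t(0))\,\mathbf{1}\{D_t(g)>D_t(\infty)\}\big] \\
&= \textit{LATT}(g,t)\cdot\mathbb{P}(G^{\text{IV}}=g,\,D_t(g)>D_t(\infty)),
\end{align*}
by the definition of $\textit{LATT}(g,t)$ as a conditional expectation on the complier sub-population, and similarly $\mathbb{E}[G^{\text{IV}}_g\,\textit{ITT}_D(g,t,X)]=\mathbb{P}(G^{\text{IV}}=g,\,D_t(g)>D_t(\infty))$. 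Dividing cancels the common ``complier-mass'' factor and gives exactly the Wald-type expression in the lemma.

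The main obstacle I anticipate is the pre-period step---namely, moving from Assumption \ref{asm:didiv}(4), which is a conditional \emph{mean} equality on $D$, to an identification of $\mathbb{E}[Y_{g-1}(D_{g-1}(\infty))|G^{\text{IV}}=g,X]$ by the observable $\mathbb{E}[Y_{g-1}|G^{\text{IV}}=g,X]$, since the observed $Y_{g-1}$ on $\{G^{\text{IV}}=g\}$ equals $Y_{g-1}(D_{g-1}(g))$ rather than $Y_{g-1}(D_{g-1}(\infty))$. My plan is to exploit the fact that $D$ is binary: a suitable reading of Assumption \ref{asm:didiv}(4) (possibly reinforced by a pre-period extension of monotonicity, or by an explicit no-anticipation on the outcome as in the instrumented DiD literature) forces $D_{g-1}(g)=D_{g-1}(\infty)$ almost surely on $\{G^{\text{IV}}=g\}$, which transfers to $Y_{g-1}(D_{g-1}(g))=Y_{g-1}(D_{g-1}(\infty))$ almost surely. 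Once this pre-period identification is secured, the remaining steps reduce to algebraic manipulations of conditional expectations and the ratio argument outlined above.
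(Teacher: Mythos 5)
Your proposal is correct and follows essentially the same route as the paper: DiD identification of the reduced-form and first-stage ITTs from the two parallel-trends conditions plus no-anticipation, then the monotonicity-based decomposition $Y_t(D_t(g))-Y_t(D_t(\infty))=(Y_t(1)-Y_t(0))(D_t(g)-D_t(\infty))$ to obtain the Wald ratio; your only deviation is aggregating the conditional ITTs directly via $\mathbb{E}\big[G^{\text{IV}}_g\,\mathbb{E}[Z|G^{\text{IV}}=g,X]\big]=\mathbb{E}[G^{\text{IV}}_g Z]$, whereas the paper first forms the conditional LATT and then integrates over the complier distribution of $X$ by Bayes' rule --- the two are equivalent. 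The pre-period subtlety you flag (passing from the mean no-anticipation condition on $D$ to $\mathbb{E}[Y_{g-1}(D_{g-1}(g))|G^{\text{IV}}=g,X]=\mathbb{E}[Y_{g-1}(D_{g-1}(\infty))|G^{\text{IV}}=g,X]$) is treated no more carefully in the paper, whose proof simply invokes ``no anticipation'' at that step, so your proposed fixes are, if anything, more explicit than the paper's own argument.
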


The lemma shows that the LATT parameter is a ratio between two ATT-type parameters. The following moment restrictions define our DiD-IV model under Assumption \ref{asm:didiv}. For simplicity in exposition, we consider the case with a single date of exposure to the instrument $g$. The more general staggered case can be derived similarly but with a more complicated expression of the efficient influence function.

\begin{lemma}[Moment-restrictions for overidentified DiD-IV with a single instrument exposure time] \label{lm:3}
The family of probability distributions of $(Y_{t=1},....Y_{t=T}, D_{t=1},....D_{t=T},X', G^{IV})$ satisfying Assumption \ref{asm:didiv} are observationally equivalent to the family of probability distributions of $(Y_{t=1},....Y_{t=T}, D_{t=1},....D_{t=T} ,X', G^{IV})$ satisfying Assumption \ref{asm:didiv}(1)-(3) and the following set of moment restrictions: for all post-treatment periods $t\in \{g,\dots, T\}$, with probability one, 
\begin{align*}
    \mathbb{E}[G^{\text{IV}}_g (LATT(g,t)_{num} - h_1(g,t,X))] & = 0, \\
    \mathbb{E}[G^{\text{IV}}_g (LATT(g,t)_{den} - h_2(g,t,X))] & = 0, \\
    \mathbb{E}\left[ h_1(g,t,X) - \frac{G^{\text{IV}}_g(Y_{t} - Y_{g-1})}{p^{\text{IV}}_g(X)} + \frac{G^{\text{IV}}_{\infty}(Y_{t} - Y_{g-1})}{p^{\text{IV}}_{\infty}(X)} \Big| X \right] & = 0, \\
    \mathbb{E}\left[ h_2(g,t,X) - \frac{G^{\text{IV}}_g(D_{t} - D_{g-1})}{p^{\text{IV}}_g(X)} + \frac{G^{\text{IV}}_{\infty}(D_{t} - D_{g-1})}{p^{\text{IV}}_{\infty}(X)} \Big| X \right] & = 0, \\
    \mathbb{E}\left[\frac{G^{\text{IV}}_{g}(Y_{\tprime} - Y_1)}{p^{\text{IV}}_{g}(X)} - \frac{G^{\text{IV}}_\infty(Y_{\tprime} - Y_1)}{p^{\text{IV}}_\infty(X)}\Big| X \right] & = 0, \text{for all } 2 \leq \tprime \leq g-1, \\
    \mathbb{E}\left[\frac{G^{\text{IV}}_{g}(D_{\tprime} - D_1)}{p^{\text{IV}}_{g}(X)} - \frac{G^{\text{IV}}_\infty(D_{\tprime} - D_1)}{p^{\text{IV}}_\infty(X)}\Big| X \right] & = 0, \text{for all } 2 \leq \tprime \leq g-1, \\
        \mathbb{E}[G^{\text{IV}}_g - p^{\text{IV}}_g(X)|X]&=0.
\end{align*}
\end{lemma}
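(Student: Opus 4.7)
The plan is to adapt the observational-equivalence argument used for Lemma \ref{lm:1} to the DiD-IV setting, with the new feature that we must handle two parallel-trends conditions simultaneously---one for the reduced form (on $Y$) and one for the first stage (on $D$)---linked through the common instrument-exposure group $G^{\text{IV}}$. As usual, the proof splits into two directions.

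For the forward direction, assume the full Assumption \ref{asm:didiv} and derive the listed moment conditions. The fifth and sixth conditional moments (the pre-treatment restrictions) follow by applying Assumption \ref{asm:didiv}(5)-(6) to the pair $(\tprime,1)$ together with the standard inverse-propensity reweighting identity $\mathbb{E}[\Delta | G^{\text{IV}}=g,X]=\mathbb{E}[G^{\text{IV}}_g \Delta/p^{\text{IV}}_g(X)\mid X]$ applied separately to $\Delta=Y_{\tprime}-Y_1$ and $\Delta=D_{\tprime}-D_1$; here I also use Assumption \ref{asm:didiv}(4) and no-anticipation in the outcome (implied by (6) via an analogous derivation to Lemma \ref{lm:LATT-id}) so that the pre-period observed outcomes coincide with never-exposed counterfactuals. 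Defining $h_1(g,t,X)$ and $h_2(g,t,X)$ as the conditional ATT (under $G^{\text{IV}}$) on $Y$ and on $D$ respectively, Assumption \ref{asm:didiv}(5)-(6) at the post-treatment period $t$ give $h_1(g,t,X)=\mathbb{E}[Y_t-Y_{g-1}|G^{\text{IV}}=g,X]-\mathbb{E}[Y_t-Y_{g-1}|G^{\text{IV}}=\infty,X]$ and an analogous identity for $h_2$, which rearranged via the IPW identity yield the third and fourth conditional moments. Finally, the first two unconditional moments follow by iterated expectations, since $\mathbb{E}[G^{\text{IV}}_g LATT(g,t)_{\text{num}}]=\mathbb{E}[G^{\text{IV}}_g h_1(g,t,X)]$ (and similarly for the denominator) by Lemma \ref{lm:LATT-id}. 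The last propensity-score moment is true by the definition $p^{\text{IV}}_g(X)=\mathbb{E}[G^{\text{IV}}_g|X]$.

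For the reverse direction, start from any distribution of $(Y,D,X,G^{\text{IV}})$ satisfying Assumption \ref{asm:didiv}(1)-(3) and all the listed moment conditions, and construct latent potential outcomes $\{Y_t(d)\}$ and potential treatments $\{D_t(g)\}$ whose implied observable distribution equals the given one and which satisfy Assumption \ref{asm:didiv}(4)-(6). For units with $G^{\text{IV}}=\infty$, set $D_t(\infty)=D_t$ and $Y_t(D_t(\infty))=Y_t$. For units with $G^{\text{IV}}=g$ and $t<g$, no-anticipation forces $D_t(\infty)=D_t$ and $Y_t(D_t(\infty))=Y_t$. For $t\ge g$, use the third and fourth moment conditions to pin down the required conditional means $\mathbb{E}[D_t(\infty)|G^{\text{IV}}=g,X]$ and $\mathbb{E}[Y_t(D_t(\infty))|G^{\text{IV}}=g,X]$; any joint law of these latent variables with these conditional means and compatible with monotonicity yields a valid extension. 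The pre-period moment conditions (fifth and sixth) ensure consistency with pre-period trends in both stages.

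The main obstacle is verifying that the reverse construction can always be carried out without conflict---specifically, that the latent variables one builds in post-treatment periods satisfy the monotonicity condition in Assumption \ref{asm:didiv}(3) while matching the observed distribution on the treated-exposure side $(D_t(g),Y_t(D_t(g)))=(D_t,Y_t)$. Since monotonicity is a pointwise restriction on the joint law of $(D_t(g),D_t(\infty))$ and the moment conditions only pin down conditional means of $D_t(\infty)$ given $(G^{\text{IV}},X)$, there remain enough degrees of freedom in the joint law to accommodate monotonicity, exactly as in the standard LATE coupling argument \emph{\`a la} \citet{Miyaji_DiD_IV_2024}. This resolves the potential obstruction and completes the equivalence.
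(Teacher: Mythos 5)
Your forward direction is fine and matches what the paper leaves implicit (the moment restrictions follow from Assumption \ref{asm:didiv} by the same iterated-expectations/IPW manipulations as in Lemma \ref{lm:LATT-id}), though note a small logical slip that recurs in the reverse direction: Assumption \ref{asm:didiv}(4) is a conditional-\emph{mean} restriction, so it does not "force" $D_t(\infty)=D_t$ pointwise in pre-periods; setting them equal is a permissible construction choice, not an implication.

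The problem is in your reverse direction, precisely at the step you yourself flag as the main obstacle. You only pin down $\mathbb{E}[D_t(\infty)\mid G^{\text{IV}}=g,X]$ and $\mathbb{E}[Y_t(D_t(\infty))\mid G^{\text{IV}}=g,X]$ and then assert that "any joint law \ldots compatible with monotonicity" exists because conditional means leave enough degrees of freedom. That claim is not justified and is false in general: with $D_t(g)=D_t$ observed and binary, a pointwise-monotone coupling $D_t(g)\ge D_t(\infty)$ a.s.\ forces $\mathbb{E}[D_t(\infty)\mid G^{\text{IV}}=g,X]\in\bigl[0,\ \mathbb{E}[D_t\mid G^{\text{IV}}=g,X]\bigr]$, i.e.\ it requires the conditional first-stage contrast $h_2(g,t,X)$ to be nonnegative (and suitably bounded), and none of the equality restrictions in the lemma deliver such inequalities. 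So, as written, your resolution of the obstacle does not go through. The paper's proof takes a different and more explicit route that sidesteps this entirely: it constructs, conditional on $G^{\text{IV}}=g$, the counterfactual paths $(D(g'),Y(g'))$ for $g'\ne g$ as conditionally independent of the observed path, with period-1 levels equal in distribution to the observed period-1 values, increments equal in distribution to the observed increments of the never-exposed group, and levels independent of increments; it then verifies only the mean restrictions (no-anticipation and the two parallel trends) exactly as in Lemma \ref{lm:2}, while monotonicity (Assumption \ref{asm:didiv}(3)) is maintained on both sides of the equivalence rather than re-verified for the constructed counterfactuals. To repair your argument you would either need to adopt that treatment of (3), or add the inequality content (e.g.\ $h_2(g,t,X)\ge 0$) that a pointwise monotone coupling actually requires, which is not part of the stated moment restrictions.
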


Using the moment restrictions, the LATT parameter can be written as $\textit{LATT}(g,t) = \frac{LATT(g,t)_{num}}{LATT(g,t)_{den}}$. The nuisance parameters \(h_1\) and \(h_2\) correspond to the numerator and denominator of the conditional LATT parameter, respectively, while \(LATT(g,t)_{num}\) and \(LATT(g,t)_{den}\) represent their unconditional counterparts. Compared to the moment restrictions in Lemma \ref{lm:1}, the ones in Lemma \ref{lm:3} incorporate the instrument as the treatment and the treatment as the outcome in the restrictions for \(LATT(g,t)_{den}\). Nonetheless, this minor distinction does not affect the efficiency calculations.
To formally define the efficient influence function, let
\begin{align*} 
    \mathbb{IF}_{\tprime}^{latt(g,t),num,Y} & =\Big( G_g(m_{g,t,\tprime}(X) - m_{\infty,t,\tprime}(X) - LATT(g,t)_{num}) \Big) \\
    & + p_g(X) \Big( \frac{G_g}{p_g(X)}(Y_t - Y_{1} - m_{g,t,\tprime}(X)) \\
    & - \frac{G_\infty}{p_\infty(X)}(Y_t - Y_{1} - m_{\infty,t,\tprime}(X)) \Big), 1 \leq \tprime \leq g-1,
\end{align*}
and
\begin{align*} 
    \mathbb{IF}_{\tprime}^{latt(g,t),num,D} & =\Big( G_g(\mu_{g,t,\tprime}(X) - \mu_{\infty,t,\tprime}(X) - LATT(g,t)_{num}) \Big) \\
    & + p_g(X) \Big( \frac{G_g}{p_g(X)}(Y_t - Y_{1} + D_{\tprime} - D_1 - \mu_{g,t,\tprime}(X)) \\
    & - \frac{G_\infty}{p_\infty(X)}(Y_t - Y_{1} + D_{\tprime} - D_1 - \mu_{\infty,t,\tprime}(X)) \Big), 2 \leq \tprime \leq g-1,
\end{align*}
where $\mu_{g,t,\tprime}(X) \coloneqq \mathbb{E}[Y_t - Y_{1} + D_{\tprime} - D_1|G^{\text{IV}}=g,X]$. Denote the column vector that stacks these $2(g-1)-1$ functions by 
\begin{align*}
    \mathbb{IF}^{latt(g,t),num} \coloneqq (\mathbb{IF}_{1}^{latt(g,t),num,Y},\cdots,\mathbb{IF}_{g-1}^{latt(g,t),num,Y},\mathbb{IF}_{2}^{latt(g,t),num,D},\cdots,\mathbb{IF}^{latt(g,t),num,D}_{g-1})',
\end{align*}
and the conditional covariance matrix of $\mathbb{IF}^{latt(g,t),num}$ as $V^{latt(g,t),num}(X)$. Similarly, let
\begin{align*} 
    \mathbb{IF}_{\tprime}^{latt(g,t),den,D} & =\Big( G_g(\tilde{m}_{g,t,\tprime}(X) - \tilde{m}_{\infty,t,\tprime}(X) - LATT(g,t)_{den}) \Big) \\
    & + p_g(X) \Big( \frac{G_g}{p_g(X)}(D_t - D_{1} - \tilde{m}_{g,t,\tprime}(X)) \\
    & - \frac{G_\infty}{p_\infty(X)}(D_t - D_{1} - \tilde{m}_{\infty,t,\tprime}(X)) \Big), 1 \leq \tprime \leq g-1,
\end{align*}
and
\begin{align*} 
    \mathbb{IF}_{\tprime}^{latt(g,t),den,Y} & =\Big( G_g(\tilde{\mu}_{g,t,\tprime}(X) - \tilde{\mu}_{\infty,t,\tprime}(X) - LATT(g,t)_{num}) \Big) \\
    & + p_g(X) \Big( \frac{G_g}{p_g(X)}(D_t - D_{1} + Y_{\tprime} - Y_1 - \tilde{\mu}_{g,t,\tprime}(X)) \\
    & - \frac{G_\infty}{p_\infty(X)}(D_t - D_{1} + Y_{\tprime} - Y_1 - \tilde{\mu}_{\infty,t,\tprime}(X)) \Big), 2 \leq \tprime \leq g-1,
\end{align*}
where $\tilde{m}_{g,t,\tprime}(X) \coloneqq \mathbb{E}[D_t - D_{1}|G^{\text{IV}}=g,X]$, and $\tilde{\mu}_{g,t,\tprime}(X)\coloneqq \mathbb{E}[D_t - D_{1} + Y_{\tprime} - Y_1|G^{\text{IV}}=g,X]$. The vector $\mathbb{IF}^{latt(g,t),den}$ of influence functions and its conditional covariance matrix $V^{latt(g,t),den}(X)$ are defined analogously. 

Compared to the previous efficiency results, the difference in this IV setting is the presence of two parallel trends -- one in the treatment and one in the outcome. To fully utilize all available information, when estimating the numerator of the LATT, we must account for the overidentifying information from both the parallel trend in the outcome and treatment, and vice versa for the denominator. Consequently, the estimation of either the numerator or the denominator involves twice as many influence functions due to the existence of two parallel trends. As before, the efficient influence function is then obtained by optimally weighting these functions. This result is summarized in the following corollary.

\begin{corollary} \label{cor:iv}
    Under Assumption \ref{asm:didiv}, the efficient influence function for $\textit{LATT}(g,t)$, $t \geq g$, is given by
    \begin{align*}
        \frac{\mathbb{EIF}^{latt(g,t),num} - \textit{LATT}(g,t)\mathbb{EIF}^{latt(g,t),den}}{\mathbb{E}\big[G^{\text{IV}}_g(\mathbb{E}[D_t - D_{g-1} | G^{\text{IV}}=g, X] - \mathbb{E}[D_t - D_{g-1} | G^{\text{IV}}=\infty, X]) \big]},
    \end{align*}
    where
    \begin{align*}
        \mathbb{EIF}^{latt(g,t),j} = \frac{\mathbf{1}'V^{latt(g,t),j}(X)^{-1}}{\mathbf{1}'V^{latt(g,t),j}(X)^{-1} \mathbf{1} } \mathbb{IF}^{latt(g,t),j}, j=num,den.
    \end{align*}
    Assuming the second moment of the efficient influence function is finite, the semiparametric efficiency bound for $\textit{LATT}(g,t)$ is its second moment.
\end{corollary}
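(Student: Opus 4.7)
The plan is to exploit the ratio representation $LATT(g,t) = LATT(g,t)_{num}/LATT(g,t)_{den}$ given by Lemma \ref{lm:LATT-id} and derive the efficient influence function in three steps: (i) obtain the EIF of the numerator, treated as a standalone over-identified ATT-type parameter for the outcome $Y$; (ii) obtain the EIF of the denominator by the symmetric argument with $D$ playing the role of the outcome; (iii) combine the two via the functional delta method applied to the map $(a,b)\mapsto a/b$.

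For step (i), I would apply the same semiparametric efficiency machinery that underlies Theorem \ref{thm:efficiency-fixed-g} to the conditional moment system in Lemma \ref{lm:3}. The key observation is that Assumption \ref{asm:didiv} delivers pre-trend restrictions on \emph{both} $Y$ and $D$, so $LATT(g,t)_{num}$ is over-identified along two channels. The $Y$-channel supplies $g-1$ influence functions $\mathbb{IF}^{latt(g,t),num,Y}_{\tprime}$ indexed by the baseline period $\tprime$, while the $D$-channel yields an additional $g-2$ influence functions $\mathbb{IF}^{latt(g,t),num,D}_{\tprime}$ obtained by augmenting the post-treatment change $Y_t-Y_1$ with the pre-treatment auxiliary term $D_{\tprime}-D_1$. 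Because the pre-trend on $D$ forces $\mathbb{E}[D_{\tprime}-D_1\mid G^{IV}=g,X]=\mathbb{E}[D_{\tprime}-D_1\mid G^{IV}=\infty,X]$ for $\tprime<g$, each augmented IF remains unbiased for $LATT(g,t)_{num}$ while having a different conditional variance, which is what makes it informative. Stacking the $2(g-1)-1$ IFs into $\mathbb{IF}^{latt(g,t),num}$, the Ai--Chen (2012) variational argument used in the proof of Theorem \ref{thm:efficiency-fixed-g} then identifies the conditional GLS combination
$$\mathbb{EIF}^{latt(g,t),num}=\frac{\mathbf{1}'V^{latt(g,t),num}(X)^{-1}}{\mathbf{1}'V^{latt(g,t),num}(X)^{-1}\mathbf{1}}\,\mathbb{IF}^{latt(g,t),num}$$
as the unique efficient mean-zero element in the linear span of the restrictions. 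A verbatim argument with $Y$ and $D$ exchanged yields $\mathbb{EIF}^{latt(g,t),den}$.

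For step (iii), $LATT(g,t)_{den}\ne 0$ by the relevance of the instrument, so $LATT(g,t)$ is pathwise differentiable and the chain rule for EIFs produces
$$\mathbb{EIF}^{latt(g,t)}=\frac{\mathbb{EIF}^{latt(g,t),num}- LATT(g,t)\,\mathbb{EIF}^{latt(g,t),den}}{LATT(g,t)_{den}},$$
whose denominator is exactly the quantity appearing in the corollary statement. The semiparametric efficiency bound is then $\mathbb{E}[(\mathbb{EIF}^{latt(g,t)})^2]$, provided this second moment is finite. I expect the main obstacle to lie in step (i): verifying that the stack $\mathbb{IF}^{latt(g,t),num}$ is \emph{exhaustive}, in the sense that every other mean-zero linear combination of the restrictions in Lemma \ref{lm:3} that targets $LATT(g,t)_{num}$ can be rewritten (after projecting out the propensity score submodel's tangent directions) as a linear combination of the entries of $\mathbb{IF}^{latt(g,t),num}$. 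This is the same kind of calculation that backs Theorem \ref{thm:efficiency-fixed-g}, but in a space of roughly doubled dimension and with nontrivial cross-covariances between $Y$-based and $(Y+D)$-based IFs; the bookkeeping there is where the work actually happens. Once exhaustiveness is established, the conditional GLS step is routine and the delta-method combination in step (iii) is standard.
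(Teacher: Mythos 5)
Your proposal is correct and follows essentially the same route as the paper: derive the EIFs of $LATT(g,t)_{num}$ and $LATT(g,t)_{den}$ separately by recasting the doubled moment system of Lemma \ref{lm:3} (the $Y$-based and $D$-augmented restrictions) into the form of Theorem \ref{thm:efficiency-fixed-g} and applying its conditional-GLS weighting, then combine via the delta method for the ratio; the ``exhaustiveness'' issue you flag is exactly what the paper handles by rotating the conditional moments so the augmented terms play the role of baseline-period outcomes. The only details you gloss over are the paper's one-line observation that the monotonicity shape restriction does not affect the bound, and the $\pi_g^{\text{IV}}$ normalization (the stacked $\mathbb{EIF}^{latt(g,t),j}$ equal $\pi_g^{\text{IV}}$ times the EIFs of the Lemma \ref{lm:3} parameters), which is why the corollary's denominator is the expectation $\mathbb{E}\big[G^{\text{IV}}_g(\cdot)\big]$ rather than $LATT(g,t)_{den}$ itself.
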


The efficiency results for staggered exposure to the instrument can be derived analogously to those in Section \ref{sec:staggered}, with the incorporation of additional overidentifying restrictions from multiple comparison groups. Building on the derived efficient influence function, semiparametrically efficient estimators for the LATT parameters can be constructed following the approach outlined in Section \ref{sec:estimation_inference}.

\section{Proofs for theoretical results}  \label{sec:prof.ident}

\textbf{Notation:} For simplicity, we use $t'$ and $t''$ to denote pre-treatment periods instead of $t_{\text{pre}}$ and $t_{\text{pre}}'$. We write $p_{\text{ratio}}$ simply as $p$. We denote the ``generated outcome'' $\tilde{Y}^{att(g,t)}_{g',t_{\text{pre}}}$ as $\theta_{g',t''}(W;p,m)$, making explicit its dependence on the nuisance parameters $p$ and $m$.

\begin{proof}[Proof of Lemmas \ref{lm:1} and \ref{lm:2}]
    We prove only the staggered case for Lemma \ref{lm:2} as Lemma \ref{lm:1} is a more straightforward case.
    It is easy to see that the moment restrictions are implied by the identification assumptions. Therefore, we only prove the converse implication. It suffices to construct a joint distribution of the potential outcomes that is consistent with the observed outcome and satisfies the identification assumptions. Without loss of generality, we can suppress the covariates, since the analysis can be conducted conditional on each value of the covariates. Denote \(Y \equiv (Y_{t=1}, \cdots, Y_{t=T})\) as the vector of observed outcomes and \(\Delta Y \equiv (\Delta Y_2, \cdots, \Delta Y_T)\) as the vector of differenced outcomes, where \(\Delta Y_t \equiv Y_t - Y_{t-1}\). The notations \(Y(g)\), \(\Delta Y_t(g)\), and \(\Delta Y(g)\) are defined analogously for potential outcomes. Let $(Y,G)$ denote the observed variables that already satisfy random sampling and overlap. Once we have the observed distribution of $G$, we can define the parameters $\pi_g$ and $p_g$ accordingly. For each given \(g \in \mathcal{G}\), we construct the potential outcomes as follows:
    \begin{align*}
        & \text{the vectors } Y(g'), g' \in \mathcal{G} \text{ are jointly independent conditional on $G=g$,} \\
        & \text{for $g'=g$: } Y(g) | \{G=g\} \overset{d}{=} Y | \{G=g\} \text{, and} \\
        & \text{for $g'\ne g$: } Y_1(g') | \{G=g\} \overset{d}{=} Y_1(g)| \{G=g\}, \Delta Y(g') | \{G=g\} \overset{d}{=} \Delta Y | \{G=\infty\}, Y_1(g') \perp \Delta Y(g') | \{G=g\},
    \end{align*}
    where $\overset{d}{=}$ denotes equal in distribution. Notice that this construction already ensures that the potential outcome induces the observed outcome because $Y(g) | \{G=g\} \overset{d}{=} Y | \{G=g\}$. The parallel trends condition also holds because $\Delta Y(\infty) | \{G=g\} \overset{d}{=} \Delta Y | \{G=\infty\}$. For the no-anticipation assumption, we have for any $2 \leq t < g$,
    \begin{align*}
        \mathbb{E}[\Delta Y_t(g)|G=g] & = \mathbb{E}[\Delta Y_t|G=g] \\
        & = \mathbb{E}[\Delta Y_t|G = \infty] \text{ (by the moment condition $\mathbb{E}[Y_t-Y_{t-1}|G=\infty] = \mathbb{E}[Y_t-Y_{t-1}|G=g]$)} \\
        & = \mathbb{E}[\Delta Y_t(\infty)|G=g] \text{ (by $\Delta Y(\infty) | \{G=g\} \overset{d}{=} \Delta Y | \{G=\infty\}$)}.
    \end{align*}
    Combining the above equality with the fact that $\mathbb{E}[Y_{1}(\infty)|G=g] = \mathbb{E}[Y_{1}(g)|G=g]$ (by construction $Y_1(\infty) | \{G=g\} \overset{d}{=} Y_1(g)| \{G=g\}$), we obtain the no-anticipation condition. This completes the proof.
\end{proof}

\begin{proof}[Proof of Lemma \ref{lem:id-attgt-over}]
    Since both $\tprime$ and $\tdoubleprime$ are pre-treatment for group $g'$, by Assumption \ref{asm:no anticipation}, the right-hand side of (\ref{eqn:CATT-id}) is equal to
    \begin{align*}
        & \mathbb{E}[Y_t(g) - Y_{\tdoubleprime}(\infty) |G=g,X] - (\mathbb{E}[Y_t(\infty) - Y_{\tprime}(\infty) |G=\infty,X] + \mathbb{E}[Y_{\tprime}(\infty) - Y_{\tdoubleprime}(\infty) |G=g',X]) \\
        = & \mathbb{E}[Y_t(g) - Y_{\tdoubleprime}(\infty) |G=g,X] - \mathbb{E}[Y_t(\infty) - Y_{\tdoubleprime}(\infty) |G=\infty,X] \\
        = & CATT(g,t,X),
    \end{align*}
    where the first equality follows from Assumption \ref{asm:pt-n-all}, and the second inequality follows from standard DiD calculation. The identification of ATT given CATT follows from taking the expectation conditional on $G=g$.
\end{proof}

\begin{proof}[Proof of Theorem \ref{thm:efficiency-fixed-g}]
    We divide the proof into two parts. In the first part, we focus on a submodel for $ATT(g,t)$ at a single time period $t$:
    \begin{align} \label{eqn:model1-tt}
    \begin{split}
            \mathbb{E}[G_g(ATT(g,t) - CATT(g,t,X))] & =0, \\
    \mathbb{E} \left[ CATT(g,t,X) - \frac{G_g(Y_{t} - Y_{1})}{p_g(X)} + \frac{G_\infty(Y_{t} - Y_{1})}{p_\infty(X)} \Big|X \right] & = 0, \\
    \mathbb{E} \left[ \frac{G_g(Y_{t'} - Y_{1})}{p_g(X)} - \frac{G_\infty(Y_{t'} - Y_{1})}{p_\infty(X)} \Big|X \right] & = 0, 2 \leq t' \leq g-1, \\
    \mathbb{E}[G_g - p_g(X) | X] & = 0,
    \end{split}
\end{align}
which is an equivalent representation of the restrictions in Lemma \ref{lm:1}, obtained by replacing \(Y_{g-1}\) in the second line with \(Y_1\) for convenience in the proof.
Then in the second part of the proof, we show that the EIFs remain the same for the entire set of moment restrictions in Lemma \ref{lm:1}.

\paragraph{Part 1}
We follow the orthogonalization method in \cite{Ai_Chen_2012} to derive the efficient influence function and the semiparametric efficiency bound. We adopt the notations in that paper. Denote the available random variables as $W \equiv (Y_{t=1},\cdots,Y_{t=T},X',G)'$ and the parameters $\alpha \equiv (\theta,h)$, where $\theta \equiv ATT(g,t)$ is the finite dimensional parameter, and $h \equiv (CATT(g,t,\cdot),p_g)$ contains the first-stage nuisance parameters. Below, we slightly modify the moment conditions to make the derivation easier while preserving the model and the efficiency bound. Let the unconditional and conditional moments be denoted by $\rho_1$ and $\rho_2$, respectively:
\begin{align*}
    \rho_1(W,\alpha) & \equiv p_g(X) ATT(g,t) - G_g CATT(g,t,X), \\
    \rho_2(W,h) & \equiv 
    \begin{pmatrix}
        CATT(g,t,X) - \frac{G_g(Y_t - Y_1)}{p_g(X)} + \frac{G_\infty(Y_t - Y_1)}{p_\infty(X)} \\
        \cdots \\
        CATT(g,t,X) - \frac{G_g(Y_t - Y_{g-1})}{p_g(X)} + \frac{G_\infty(Y_t - Y_{g-1})}{p_\infty(X)} \\
        G_g - p_g(X) \\
        G_\infty - p_\infty(X)
    \end{pmatrix}.
\end{align*}
In the unconditional moment, we replace $G_g$ by $p_g(X)$, which makes $\rho_1$ and $\rho_2$ orthogonal. This is the orthogonalized unconditional moment one would obtain following the procedure in \cite{Ai_Chen_2012}. To obtain $\rho_2$, we simply rotate the original conditional moment restrictions by using the following invertible matrix 
\begin{align*}
    \begin{pmatrix}
        1 & 0 & 0 & \cdots & 0 & 0 & 0 \\
        1 & 1 & 0 & \cdots & 0 & 0 & 0 \\
        1 & 0 & 1 & \cdots & 0 & 0 & 0 \\
        \cdots \\
        1 & 0 & 0 & \cdots & 1 & 0 & 0 \\
        0 & 0 & 0 & \cdots & 0 & 1 & 0 \\
        0 & 0 & 0 & \cdots & 0 & 0 & 1 
    \end{pmatrix}.
\end{align*}
The subsequent analysis shows that the semiparametric efficiency bound remains invariant to such rotations.
Whenever necessary, we will use $\alpha^*, \theta^*$, and $h^*$ to denote the true values for the respective parameters, and $\alpha,\theta$, and $h$ to denote generic values. Let
\begin{align*}
    \Sigma_1 & \equiv \mathbb{E}[\rho_1 (W,\alpha^*) \rho_1 (W,\alpha^*)'] = \mathbb{E}[p_g(X)^2 (CATT(g,t,X) - ATT(g,t))^2] , \\
    \Sigma_2(X) & \equiv \mathbb{E}[\rho_2 (W,h^*) \rho_1 (W,h^*)'|X] , \\
    m_1(\alpha) & \equiv \mathbb{E}[\rho_1 (W,\alpha)], \\
    m_2(X,\alpha) & \equiv \mathbb{E}[\rho_2 (W,h)|X]. 
\end{align*}
The derivatives of $m_1$ and $m_2$ with respect to $\theta$ are 
\begin{align*}
    \frac{dm_1(\alpha^*)}{d ATT(g,t)} & = \mathbb{E}\left[\frac{d\rho_1(W,\alpha^*)}{d ATT(g,t)} \right] = \mathbb{E}[p_g(X)] = \pi_g, \\
    \frac{dm_2(X,\alpha^*)}{d ATT(g,t)} & = \mathbb{E}\left[\frac{d\rho_2(W,h^*)}{d ATT(g,t)} \right] = 0.
\end{align*}
Let $h_\tau = h^* + \tau r$ be a smooth path in $\tau \in [0,1]$, where $r \equiv (r_1,r_2,r_3)'$, and $h^*+r$ lies in the nuisance parameter space. The derivative of $m_1$ with respect to $h$ (in the direction of $r$) is
\begin{align*}
    \frac{dm_1(\alpha^*)}{d h}[r] & = \mathbb{E}\left[ \frac{d\rho_1(W,\alpha^*)}{d h}[r] \right] = \mathbb{E}[(-p_g(X),ATT(g,t)-CATT(g,t,X),0)r(X)|X] \equiv \mathbb{E}[A_{t}(X)r(X)],
\end{align*}
where $A_{t}(X) \equiv (-p_g(X),ATT(g,t)-CATT(g,t,X),0)$. Similarly, the derivative of $m_2$ with respect to $h$ (in the direction of $r$) is
\begin{align*}
    \frac{dm_2(X,\alpha^*)}{d h}[r] = 
    \begin{pmatrix}
        r_1(X) + \frac{\mathbb{E}[G_g(Y_t - Y_1)|X]}{p_g(X)^2} r_2(X)  -\frac{\mathbb{E}[G_\infty(Y_t - Y_1)|X]}{p_\infty(X)^2} r_3(X) \\
        \cdots \\
        r_1(X) + \frac{\mathbb{E}[G_g(Y_t - Y_{g-1})|X]}{p_g(X)^2} r_2(X) - \frac{\mathbb{E}[G_\infty(Y_t - Y_{g-1})|X]}{p_\infty(X)^2} r_3(X) \\
        -r_2(X) \\
        -r_3(X)
    \end{pmatrix}
    \equiv L(X) r(X),
\end{align*}
where the matrix $L(X)$ is defined as 
\begin{align*}
    L(X) \equiv 
    \begin{pmatrix}
        1 &  \frac{\mathbb{E}[Y_t - Y_1|G=g,X]}{p^*_g(X)} &  -\frac{\mathbb{E}[Y_t - Y_1|G=\infty,X]}{p^*_\infty(X)} \\
        \cdots \\
        1 &  \frac{\mathbb{E}[Y_t - Y_{g-1}|G=g,X]}{p^*_g(X)} &  -\frac{\mathbb{E}[Y_t - Y_{g-1}|G=\infty,X]}{p^*_\infty(X)}  \\
        0 & -1 & 0 \\
        0 & 0 & -1
    \end{pmatrix}.
\end{align*}
By Theorem 2.1 of \cite{Ai_Chen_2012}, the semiparametric efficiency bound for $ATT(g,t)$ is obtained by solving the following minimization problem:
\begin{align*}
    \inf_{r} \quad &  (\pi_g - \mathbb{E}[A_{t}(X)r(X)]) \Sigma_1^{-1}(\pi_g - \mathbb{E}[A_{t}(X)r(X)]) + \mathbb{E}[ (L(X)r(X))'\Sigma_2(X)^{-1}L(X)r(X) ].
\end{align*}
By the calculus of variations, the optimizer $r^*$ satisfies the following first-order condition:
\begin{align*}
&  (\pi_g - \mathbb{E}[A_{t}(X)r^*(X)]) \Sigma_1^{-1}\mathbb{E}[A_{t}(X)r(X)] - \mathbb{E}[ (L(X)r^*(X))'\Sigma_2(X)^{-1}L(X)r(X) ] = 0,
\end{align*}
for all $r$. Denote $B(X) \equiv L(X)'\Sigma_2(X)^{-1}L(X)$. The left-hand side of the above first-order condition can be written as
\begin{align*}
    \mathbb{E}\big[\big((\pi_g - \mathbb{E}[A_{t}(X)r^*(X)])\Sigma_1^{-1} A_{t}(X) - r^*(X)'B(X) \big) r(X)\big].
\end{align*}
In order for this expectation to be zero for all $r$, it must hold that
\begin{align*}
    (\pi_g - \mathbb{E}[A_{t}(X)r^*(X)])\Sigma_1^{-1} A_{t}(X) - r^*(X)'B(X) = 0.
\end{align*}
We can verify that the solution $r^*(X)$ is given by
\begin{align*}
    r^*(X) = \frac{B(X)^{-1}A_{t}(X)' \pi_g}{\mathbb{E}[A_{t}(X)B(X)^{-1}A_{t}(X)'] + \Sigma_1}.
\end{align*}
To see this, notice that we have
\begin{align*}
    \mathbb{E}[A_{t}(X)r^*(X)] =  \frac{\mathbb{E}[A_{t}(X)B(X)^{-1}A_{t}(X)'] \pi_g}{\mathbb{E}[A_{t}(X)B(X)^{-1}A_{t}(X)'] + \Sigma_1}.
\end{align*}
Then we have
\begin{align*}
    r^*(X)'B(X) & = \frac{\pi_g A_{t}(X) B(X)^{-1}B(X)}{\mathbb{E}[A_{t}(X)B(X)^{-1}A_{t}(X)'] + \Sigma_1} = \frac{\pi_g A_{t}(X)}{\mathbb{E}[A_{t}(X)B(X)^{-1}A_{t}(X)'] + \Sigma_1} \\
    & = (\pi_g - \mathbb{E}[A_{t}(X)r^*(X)])\Sigma_1^{-1} A_{t}(X).
\end{align*}
Substituting this expression for $r^*$ into the minimization problem, we obtain the Fisher information:
\begin{align*}
    & (\pi_g - \mathbb{E}[A_{t}(X)r^*(X)]) \Sigma_1^{-1} (\pi_g - \mathbb{E}[A_{t}(X)r^*(X)]) + \mathbb{E}[r^*(X)'B(X)r^*(X)] \\
    = & \frac{\Sigma_1 \pi_g^2}{(\mathbb{E}[A_{t}(X)B(X)^{-1}A_{t}(X)'] + \Sigma_1)^2} + \frac{\mathbb{E}[A_{t}(X)B(X)^{-1}A_{t}(X)'] \pi_g^2}{(\mathbb{E}[A_{t}(X)B(X)^{-1}A_{t}(X)'] + \Sigma_1)^2} \\
    = & \frac{\pi_g^2}{\mathbb{E}[A_{t}(X)B(X)^{-1}A_{t}(X)'] + \Sigma_1}.
\end{align*}
The semiparametric efficiency bound is the inverse of the Fisher information:
\begin{align*}
    \frac{\mathbb{E}[A_{t}(X)B(X)^{-1}A_{t}(X)'] + \Sigma_1}{\pi_g^2}.
\end{align*}
To further simplify the inverse matrix $B(X)^{-1} = (L(X)'\Sigma_2(X)^{-1}L(X))^{-1}$, we aim to show that this matrix has the following block-diagonal form:
\begin{align*}
    (L(X)'\Sigma_2(X)^{-1}L(X))^{-1} = 
    \begin{pmatrix}
        1/(\sum_{g',t'} \sum_{j'=1}^{g-1} s_{j,j'}) & 0 & 0 \\
        0 & p_g(X)(1-p_g(X)) & -p_g(X)p_\infty(X) \\
        0 & -p_g(X)p_\infty(X) & p_\infty(X)(1-p_\infty(X))
    \end{pmatrix},
\end{align*}
where $s_{j,j'}$ denotes the $(j,j')$-th element of the inverse matrix $S(X) \equiv \Sigma_2(X)^{-1}$. We denote the entries of $\Sigma_2(X)$ as
\begin{align*}
    \Sigma_2(X) = 
    \begin{pmatrix}
        \sigma_{1,1} & \cdots & \sigma_{g-1,1} & \sigma_{g,1} & \sigma_{g+1,1} \\
        \vdots & & \vdots & \vdots & \vdots \\
        \sigma_{g-1,1} & \cdots & \sigma_{g-1,g-1} & \sigma_{g,g-1} & \sigma_{g+1,g-1} \\
        \sigma_{g,1} & \cdots & \sigma_{g,g-1} & \sigma_{g,g} & \sigma_{g+1,g} \\
        \sigma_{g+1,1} & \cdots & \sigma_{g+1,g-1} & \sigma_{g+1,g} & \sigma_{g+1,g+1} \\
    \end{pmatrix}.
\end{align*}
Let the bottom-right $2\times 2$ submatrix be denoted by
\begin{align*}
\Pi \equiv
\begin{pmatrix}
    \sigma_{g,g} & \sigma_{g+1,g} \\
    \sigma_{g+1,g} & \sigma_{g+1,g+1}
\end{pmatrix} =
    \begin{pmatrix}
        p_g(X)(1-p_g(X)) & -p_g(X)p_\infty(X) \\
        -p_g(X)p_\infty(X) & p_\infty(X)(1-p_\infty(X))
    \end{pmatrix}.
\end{align*}
For $1 \leq t' \leq g-1$, the term $\sigma_{g,t'}$ is
\begin{align*}
    \sigma_{g,t'} & = \mathbb{E}\left[\left( CATT(g,t,X) - \frac{G_g(Y_t-Y_{t'})}{p_g(X)} + \frac{G_\infty(Y_t-Y_{t'})}{p_\infty(X)} \right) (G_g-p_g(X))|X\right] \\
    & = p_g(X) CATT(g,t,X) - m_{g,t,t'}(X) \\
    & = -(1-p_g(X))m_{g,t,t'}(X) - p_g(X)m_{\infty,t,t'}(X).
\end{align*}
The term $\sigma_{g+1,t'}$ is
\begin{align*}
    \sigma_{g+1,t'} & = \mathbb{E}\left[\left( CATT(g,t,X) - \frac{G_g(Y_t-Y_{t'})}{p_g(X)} + \frac{G_\infty(Y_t-Y_{t'})}{p_\infty(X)} \right) (G_\infty-p_\infty(X))|X\right] \\
    & = p_\infty(X) CATT(g,t,X) + m_{\infty,t,t'}(X) \\
    & = p_\infty(X)m_{g,t,t'}(X) + (1-p_\infty(X))m_{\infty,t,t'}(X).
\end{align*}
Denote two vectors $\bm{a} = (a_1,\cdots,a_{g-1})$ and $\bm{b} = (b_1,\cdots,b_{g-1})$ respectively as
    $a_{t'} \equiv m_{g,t,t'}(X)/p_g(X)$ and
    $b_{t'} \equiv - m_{\infty,t,t'}(X)/p_\infty(X)$.
Then the matrix $L(X)$ can be written as
\begin{align*}
    L(X) = 
    \begin{pmatrix}
        \bm{1}_{g-1} & (\bm{a}',\bm{b}') \\
        \bm{0}_{2} & - I_2
    \end{pmatrix},
\end{align*}
where $\bm{1}_{g-1}$ is a column vector of ones of length $g-1$, $\bm{0}_{2}$ is a column vector of zeros of length 2, and $I_2$ is the 2-dimensional identity matrix. Notice that
\begin{align*}
    a_{t'} \sigma_{g,g} + b_{t'} \sigma_{g+1,g} & = (1-p_g(X))m_{g,t,t'}(X) + p_g(X)m_{\infty,t,t'}(X) = -\sigma_{g,t'}, \\
    a_{t'} \sigma_{g+1,g} + b_{t'} \sigma_{g+1,g+1} & = -p_\infty(X)m_{g,t,t'}(X) - (1-p_\infty(X))m_{\infty,t,t'}(X) = -\sigma_{g+1,t'}.
\end{align*}
In matrix notation, this means that
\begin{align*}
    \Pi 
    \begin{pmatrix}
        \bm{a} \\ 
        \bm{b}
    \end{pmatrix}
    = - 
    \begin{pmatrix}
        \sigma_{g,1} & \cdots & \sigma_{g,g-1} \\
        \sigma_{g+1,1} & \cdots & \sigma_{g+1,g-1} 
    \end{pmatrix} \equiv -
    \begin{pmatrix}
        \tilde{\bm{\sigma}}_{g} \\
        \tilde{\bm{\sigma}}_{g+1}
    \end{pmatrix}
    \implies
    \begin{pmatrix}
        \bm{a} \\ 
        \bm{b}
    \end{pmatrix}
    = -\Pi^{-1}
    \begin{pmatrix}
        \tilde{\bm{\sigma}}_{g} \\
        \tilde{\bm{\sigma}}_{g+1}
    \end{pmatrix},
\end{align*}
where we denote $\tilde{\bm{\sigma}}_{g} \equiv (\sigma_{g,1} \cdots \sigma_{g,g-1})$ and $\tilde{\bm{\sigma}}_{g+1} \equiv (\sigma_{g+1,1} \cdots \sigma_{g+1,g-1})$, which correspond to the second-to-last row and last row of $\Sigma_2(X)$, respectively, with the last two entries removed.
Let $S = \Sigma_2(X)^{-1}$ be denoted by $S = (\bm{s}_1,\cdots,\bm{s}_{g+1})$, where $\bm{s}_j$ are column vectors. Let $\tilde{\bm{s}}_j$ denote the vector $\bm{s}_j$ with its last two entries removed.
We now compute $L(X)'\Sigma_2(X)^{-1}$:
\begin{align*}
    L(X)'S = 
    \begin{pmatrix}
        \bm{1}_{g-1}' & 0 & 0 \\
        \bm{a} & -1 & 0 \\
        \bm{b} & 0 & -1 
    \end{pmatrix}
    S = 
    \begin{pmatrix}
        \bm{1}_{g-1}' \tilde{\bm{s}}_{1} & \cdots & \bm{1}_{g-1}' \tilde{\bm{s}}_{g-1} & \bm{1}_{g-1}' \tilde{\bm{s}}_{g} &  \bm{1}_{g-1}' \tilde{\bm{s}}_{g+1} \\
        0 & \cdots & 0 & \bm{a} \tilde{\bm{s}}_{g} - s_{g,g} & \bm{a} \tilde{\bm{s}}_{g+1} - s_{g+1,g} \\
        0 & \cdots & 0 & \bm{b} \tilde{\bm{s}}_{g} - s_{g+1,g} & \bm{b} \tilde{\bm{s}}_{g+1} - s_{g+1,g+1} \\
    \end{pmatrix}.
\end{align*}
The bottom-left entries are zero because because this submatrix can be written as
\begin{align*}
    \begin{pmatrix}
        \bm{a} \\ 
        \bm{b}
    \end{pmatrix}
    (\tilde{\bm{s}}_1,\cdots,\tilde{\bm{s}}_{g-1}) - 
    \begin{pmatrix}
        \bm{s}_{g,1} & \cdots & \bm{s}_{g,g-1} \\
        \bm{s}_{g+1,1} & \cdots & \bm{s}_{g+1,g-1}
    \end{pmatrix}.
\end{align*}
Factoring out $\Pi^{-1}$, we see that this matrix is equal to $\Pi^{-1}$ multiplied by the following matrix:
\begin{align*}
    & \begin{pmatrix}
        \tilde{\bm{\sigma}}_{g} \\
        \tilde{\bm{\sigma}}_{g+1}
    \end{pmatrix}
    (\tilde{\bm{s}}_1,\cdots,\tilde{\bm{s}}_{g-1})
    + 
    \Pi
    \begin{pmatrix}
        \bm{s}_{g,1} & \cdots & \bm{s}_{g,g-1} \\
        \bm{s}_{g+1,1} & \cdots & \bm{s}_{g+1,g-1}
    \end{pmatrix} \\
    = & 
    \begin{pmatrix}
        \tilde{\bm{\sigma}}_{g} \tilde{\bm{s}}_{t'} + \sigma_{g,g} s_{g,t'} + \sigma_{g+1,g} s_{g+1,t'} \\
        \tilde{\bm{\sigma}}_{g+1} \tilde{\bm{s}}_{t'} + \sigma_{g+1,g} s_{g,t'} + \sigma_{g+1,g+1} s_{g+1,t'}
    \end{pmatrix}_{1 \leq t' \leq g-1} \\
    = & 
    \begin{pmatrix}
        \bm{\sigma}_{g} \bm{s}_{1} & \cdots & \bm{\sigma}_{g} \bm{s}_{g-1}\\
        \bm{\sigma}_{g+1} \bm{s}_{1} & \cdots & \bm{\sigma}_{g+1} \bm{s}_{g-1}
    \end{pmatrix},
\end{align*}
where $\bm{\sigma}_{g}$ and $\bm{\sigma}_{g+1}$ denote the second-to-last and last rows, respectively, of $\Sigma_2(X)$. The above entries are all zero because, by definition, $S$ is the inverse of $\Sigma_2(X)$. Then $L(X)'SL(X)$ is equal to
\begin{align*}
    L(X)'SL(X) = 
    \begin{pmatrix}
        \sum_{t'=1}^{g-1}\bm{1}_{g-1}' \tilde{\bm{s}}_{t'} & 0 & 0 \\
        0 & -\bm{a} \tilde{\bm{s}}_{g} + s_{g,g} & -\bm{a} \tilde{\bm{s}}_{g+1} + s_{g+1,g} \\
        0 & -\bm{b} \tilde{\bm{s}}_{g} + s_{g+1,g} & -\bm{b} \tilde{\bm{s}}_{g+1} + s_{g+1,g+1}
    \end{pmatrix}.
\end{align*}
Observe that the upper-right block is also zero because $L(X)'SL(X)$ is symmetric by construction. The bottom-right $2\times 2$ matrix 
\begin{align*}
    \begin{pmatrix}
        \bm{a} \\
        \bm{b}
    \end{pmatrix} 
    \begin{pmatrix}
        \tilde{\bm{s}}_{g} & \tilde{\bm{s}}_{g+1}
    \end{pmatrix}
    +
    \begin{pmatrix}
         s_{g,g} & s_{g+1,g} \\
        s_{g+1,g} & s_{g+1,g+1}
    \end{pmatrix}
    =  \Pi^{-1}.
\end{align*}
This holds because, after factoring out $\Pi^{-1}$, this matrix equals $\Pi^{-1}$ multiplied by the following matrix
\begin{align*}
    & \begin{pmatrix}
        \tilde{\bm{\sigma}}_{g} \\
        \tilde{\bm{\sigma}}_{g+1}
    \end{pmatrix}
    \begin{pmatrix}
        \tilde{\bm{s}}_{g} & \tilde{\bm{s}}_{g+1}
    \end{pmatrix}
    + 
    \Pi
    \begin{pmatrix}
        s_{g,g} & s_{g+1,g} \\
        s_{g+1,g} & s_{g+1,g+1}
    \end{pmatrix} \\
    = & 
    \begin{pmatrix}
        \tilde{\bm{\sigma}}_{g}\tilde{\bm{s}}_{g} + \sigma_{g,g} s_{g,g} + \sigma_{g+1,g} s_{g+1,g} & \tilde{\bm{\sigma}}_{g}\tilde{\bm{s}}_{g+1} + \sigma_{g,g} s_{g+1,g} + \sigma_{g+1,g} s_{g+1,g+1} \\
        \tilde{\bm{\sigma}}_{g+1}\tilde{\bm{s}}_{g} + \sigma_{g+1,g} s_{g,g} + \sigma_{g+1,g+1} s_{g+1,g} & \tilde{\bm{\sigma}}_{g+1}\tilde{\bm{s}}_{g+1} + \sigma_{g+1,g} s_{g+1,g} + \sigma_{g+1,g+1} s_{g+1,g+1}
    \end{pmatrix} \\
    = & 
    \begin{pmatrix}
        \bm{\sigma}_{g} \bm{s}_{g}  & \bm{\sigma}_{g} \bm{s}_{g+1} \\
        \bm{\sigma}_{g+1} \bm{s}_{g}  & \bm{\sigma}_{g+1} \bm{s}_{g+1}
    \end{pmatrix} 
    = I_2,
\end{align*}
where the last equality follows because $S$ is the inverse of $\Sigma_2(X)$. Therefore, by the property of the block-diagonal matrix, the inverse of $L(X)'SL(X)$ is again a block-diagonal matrix:
\begin{align*}
    (L(X)'SL(X))^{-1} = \text{diag}\left(\left(\sum_{j,j'=1}^{g-1} s_{j,j'}\right)^{-1},\Pi\right).
\end{align*}
Applying this block-diagonal structure, we obtain that
\begin{align} \label{eqn:speb-initial-form}
    A_{t}(X)(L(X)'SL(X))^{-1}A_{t}(X)' = \frac{p_g(X)^2}{\sum_{j,j'=1}^{g-1} s_{j,j'}} + p_g(X)(1-p_g(X))(CATT(g,t,X)-ATT(g,t))^2.
\end{align}
The term $\frac{1}{\sum_{j,j'=1}^{g-1} s_{j,j'}}=|\Sigma_2(X)|/(\sum_{j,j'=1}^{g-1} (-1)^{j+j'} |\Sigma_{2,jj'}|)$, where $\Sigma_{2,jj'}$ denotes the submatrix of $S$ obtained by removing the $j$th row and $j'$th column.  We decompose $\Sigma_2$ into 4 blocks:
    \begin{align*}
        \sigma_2 = 
        \begin{pmatrix}
            V_1 & V_2' \\
            V_2 & \Pi 
        \end{pmatrix},
    \end{align*}
    where $V_1$ and $V_2$ are defined as
    \begin{align*}
        V_1 \equiv 
        \begin{pmatrix}
            \sigma_{11} & \cdots & \sigma_{g-1,1} \\
            \vdots & & \vdots \\
            \sigma_{g-1,1} & \cdots & \sigma_{g-1,g-1}
        \end{pmatrix}, 
        V_2 \equiv 
        \begin{pmatrix}
            \sigma_{g,1} & \cdots & \sigma_{g,g-1} \\
            \sigma_{g+1,1} & \cdots & \sigma_{g+1,g-1}
        \end{pmatrix}.
    \end{align*}
    The determinant of $\Sigma_2$ can be computed using the Schur complement as
    \begin{align*}
        |\Sigma_2| = |\Pi| |V_1 - V_2'\Pi^{-1} V_2|.
    \end{align*}
    The term $|\Pi|$ equals $\sigma_{g,g}\sigma_{g+1,g+1}-\sigma_{g+1,g}^2$.
    The Schur complement of $\Pi$ is the $(g-1) \times (g-1)$ symmetric matrix $V_1 - V_2'\Pi^{-1} V_2$, whose $(j,j')$-th elements is
    \begin{align*}
        \sigma_{j,j'} - \frac{\sigma_{g,j}\sigma_{g,j'}\sigma_{g+1,g+1} - \sigma_{g,j}\sigma_{g+1,j'}\sigma_{g+1,g} - \sigma_{g,j'}\sigma_{g+1,j}\sigma_{g+1,g} - \sigma_{g+1,j}\sigma_{g+1,j'}\sigma_{g,g}}{\sigma_{g,g}\sigma_{g+1,g+1}-\sigma_{g+1,g}^2}.
    \end{align*}
    A direct calculation verifies that the above term is equal to the $(j,j')$-th entry of $V^*_{gt}(X)$, yielding the identity $V_1 - V_2'\Pi^{-1} V_2 = V^*_{gt}(X)$, where recall that $V^*_{gt}(X)$ is the $(g-1)\times (g-1)$ matrix whose $(j,k)$-th element is
    \begin{align} 
        \frac{1}{p_g(X)} \Cov(Y_{t} - Y_{j},Y_{t} - Y_{k} |G=g,X) + \frac{1}{1 - p_g(X)} \Cov (Y_{t} - Y_{j} ,Y_{t} - Y_{k}  |G=\infty,X ).\label{eqn:V-star2}
    \end{align}
    The same procedure can be applied to each $\Sigma_{2,jj'}$ and show that the determinant of each block $\Sigma_{2,jj'}$ is equal to $|\Pi|$ multiplied by $|V^*_{gt,jj'}(X)|$, where $V^*_{gt,jj'}(X)$ is the minor of $V^*_{gt}(X)$ formed by deleting its $j$-th row and $j'$-th column. Substituting these into (\ref{eqn:speb-initial-form}), we find that the semiparametric efficiency bound equals:
    \begin{align*}
    & \frac{\mathbb{E}[A_{t}(X)B(X)^{-1}A_{t}(X)'] + \Sigma_1}{\pi_g^2} \\
    = & \frac{1}{\pi_g^2}\mathbb{E}\left[ \frac{p_g(X)^2}{\sum_{j,j'=1}^{g-1} s_{j,j'}} + p_g(X)(CATT(g,t,X)-ATT(g,t))^2 \right] \\
    = & \frac{1}{\pi_g^2}\mathbb{E}\left[ \frac{p_g(X)^2|V^*_{gt}(X)|}{\sum_{j,j'=1}^{g-1} (-1)^{j+j'}|V^*_{gt,jj'}(X)|} + p_g(X)(CATT(g,t,X)-ATT(g,t))^2 \right].
    \end{align*}
    The efficient influence function is the efficiency bound multiplied by the efficient score function. The efficient score function is given by the proof of Theorem 2.1 in \citet{Ai_Chen_2012}:
    \begin{align*}
        & - \left( \pi_g - \mathbb{E}[A_{t}(X)r^*(X)]  \right) \Sigma_1^{-1} (p_g(X)(ATT(g,t) - CATT(g,t,X))) + (L(X)r^*(X))' \Sigma_{2}(X)^{-1} \rho_2(W,h^*).
    \end{align*}
    The first part of the efficient score is equal to
    \begin{align*}
        \left( \pi_g - \mathbb{E}[A_{t}(X)r^*(X)]  \right) \Sigma_1^{-1} (p_g(X)(ATT(g,t) - CATT(g,t,X))) = \frac{p_g(X)(ATT(g,t) - CATT(g,t,X))}{\pi_g\Omega_g(ATT(g,t))}.
    \end{align*}
    In the second part, $L(X)r^*(X)$ is equal to
    \begin{align*}
        L(X)r^*(X) = \frac{L(X)B(X)^{-1}A_{t}(X)'}{\pi_g\Omega_g(ATT(g,t))}.
    \end{align*}
    We calculate the term $A_{t}(X)'$ in the above numerator. Denote $\Bar{s} \equiv \sum_{j,j'=1}^{g-1}s_{j,j'}$. The product $L(X)B(X)^{-1}$ is 
    \begin{align*}
        &L(X)B(X)^{-1}  \\
        = & 
        \begin{pmatrix}
        1 &  \frac{m_{g,t,1}(X)}{p_g(X)} &  -\frac{m_{\infty,t,1}(X)}{p_\infty(X)} \\
        \vdots \\
        1 &  \frac{m_{g,t,g-1}(X)}{p_g(X)} &  -\frac{m_{\infty,t,g-1}(X)}{p_\infty(X)}  \\
        0 & -1 & 0 \\
        0 & 0 & -1
    \end{pmatrix} 
    \begin{pmatrix}
        \left( \Bar{s} \right)^{-1} & 0 & 0 \\
        0 & p_g(X) (1-p_g(X)) & -p_g(X)p_\infty(X) \\
        0 & -p_g(X)p_\infty(X) & p_\infty(X) (1-p_\infty(X))
    \end{pmatrix} \\
    = &
    \begin{pmatrix}
        \left( \Bar{s} \right)^{-1} & (1-p_g(X))m_{g,t,1}(X) + p_g(X)m_{\infty,t,1}(X) & -p_\infty(X)m_{g,t,1}(X) - (1-p_\infty(X))m_{\infty,t,1}(X) \\
        \vdots & \vdots & \vdots  \\
        \left( \Bar{s} \right)^{-1} & (1-p_g(X))m_{g,t,g-1}(X) + p_g(X)m_{\infty,t,g-1}(X) & -p_\infty(X)m_{g,t,1}(X) - (1-p_\infty(X))m_{\infty,t,g-1}(X)  \\ 
         0 & -p_g(X) (1-p_g(X)) & p_g(X)p_\infty(X) \\
         0 & p_g(X)p_\infty(X) & -p_\infty(X) (1-p_\infty(X))
    \end{pmatrix} \\
    = &
    \begin{pmatrix}
        ((\mathbf{1}_{g-1}(\Bar{s})^{-1})',0,0)' & -\bm{\sigma}_g & -\bm{\sigma}_{g+1}
    \end{pmatrix}.
    \end{align*}
    Multiplying the above matrix with $A_{t}(X)'$, we obtain
    \begin{align*}
     L(X)B(X)^{-1} A_{t}(X)' & = \begin{pmatrix}
        ((\mathbf{1}_{g-1}(\Bar{s})^{-1})',0,0)' & -\bm{\sigma}_g & -\bm{\sigma}_{g+1}.
        \end{pmatrix}
        \begin{pmatrix}
        -p_g(X) \\
        ATT(g,t) - CATT(g,t,X) \\
        0
    \end{pmatrix} \\
    & = - (p_g(X)(\Bar{s})^{-1}(\mathbf{1}_{g-1}',0,0)' + (ATT(g,t) - CATT(g,t,X))\bm{\sigma}_g).
    \end{align*}
    Multiplying the transpose of this matrix with the inverse of $\Sigma_2(X)$, we obtain
    \begin{align*}
        & - (p_g(X)(\Bar{s})^{-1}(\mathbf{1}_{g-1}',0,0) + (ATT(g,t) - CATT(g,t,X))\bm{\sigma}_g') (\bm{s}_1,\cdots,\bm{s}_{g+1}) \\
        = & - p_g(X)(\Bar{s})^{-1}((\mathbf{1}_{g-1}',0,0)\bm{s}_1, \cdots, (\mathbf{1}_{g-1}',0,0)\bm{s}_{g+1}) - (ATT(g,t) - CATT(g,t,X)) (0,\cdots,0,1,0) \\
        = & -\frac{p_g(X)}{\Bar{s}} \left( \sum_{j'=1}^{g-1}s_{1j'},\cdots, \sum_{j'=1}^{g-1}s_{g+1,j'}\right) - (ATT(g,t) - CATT(g,t,X)) (0,\cdots,0,1,0)
    \end{align*}
    because $\bm{\sigma}_g'\bm{s}_j = \mathbf{1}\{g=j\}$. Lastly, we multiply the above matrix with the first stage moments $\rho_2$ and obtain that
    \begin{align*}
        & L(X)B(X)^{-1}A_{t}(X)' \Sigma_{2}(X)^{-1} \rho_2(W,h^*) \\
        = & -\frac{p_g(X)}{\Bar{s}} \sum_{j,j'=1}^{g-1} s_{j,j'} \left(CATT(g,t,X)- \frac{G_g(Y_t - Y_j)}{p_g(X)} + \frac{G_\infty(Y_t - Y_j)}{p_\infty(X)} \right) \\
        & -\frac{p_g(X)}{\Bar{s}} \sum_{j'=1}^{g-1} s_{g,j'} (G_g - p_g(X)) -\frac{p_g(X)}{\Bar{s}} \sum_{j'=1}^{g-1} s_{g+1,j'} (G_\infty - p_\infty(X)) \\
        & - (ATT(g,t) - CATT(g,t,X))(G_g - p_g(X)).
    \end{align*}
    This gives the expression for the efficient score. The efficient influence function is equal to the efficient score premultiplied by the efficiency bound:
    \begin{align*}
        \mathbb{EIF}_g(ATT(g,t)) = \frac{1}{\pi_g} \Bigg( & G_g(CATT(g,t,X) - ATT(g,t)) \\
        & + \frac{p_g(X)}{\Bar{s}} \sum_{j,j'=1}^{g-1} s_{j,j'} \left(\frac{G_g(Y_t - Y_j)}{p_g(X)} - \frac{G_\infty(Y_t - Y_j)}{p_\infty(X)} - CATT(g,t,X) \right) \\
        &-\frac{p_g(X)}{\Bar{s}} \sum_{j'=1}^{g-1} s_{g,j'} (G_g - p_g(X)) -\frac{p_g(X)}{\Bar{s}} \sum_{j'=1}^{g-1} s_{g+1,j'} (G_\infty - p_\infty(X)) \Bigg).
    \end{align*}
    By direct calculation, we can show that the EIF can be written as
    \begin{align}
        \mathbb{EIF}_g(ATT(g,t)) = \frac{1}{\pi_g} \Bigg( & G_g(CATT(g,t,X) - ATT(g,t)) + \sum_{j,j'=1}^{g-1} \frac{s_{j,j'}}{\Bar{s}} G_g(Y_t - Y_j - m_{g,t,j}(X)) \nonumber \\
         & - \sum_{j,j'=1}^{g-1} \frac{s_{j,j'}}{\Bar{s}} \frac{p_g(X)}{p_\infty(X)} G_\infty(Y_t - Y_j - m_{\infty,t,j}(X)) \Bigg) \nonumber \\
        & + \frac{p_g(X)}{\Bar{s}} (G_g - p_g(X)) \sum_{j'=1}^{g-1}\left( \sum_{g',t'}s_{j,j'} \frac{m_{g,t,j}}{p_g} - s_{g,j'} \right), \label{eqn:EIF-extra-term1} \\
        & + \frac{p_g(X)}{\Bar{s}} (G_\infty - p_\infty(X)) \sum_{j'=1}^{g-1}\left( \sum_{g',t'}s_{j,j'} \frac{m_{\infty,t,j}}{p_\infty} - s_{g+1,j'} \right). \label{eqn:EIF-extra-term2}
    \end{align}
    We want to show that the last two terms on the left-hand side of the above equation, (\ref{eqn:EIF-extra-term1}) and (\ref{eqn:EIF-extra-term2}), are zero. Notice that for each $1 \leq j' \leq g-1$, the term
    \begin{align*}
        \sum_{g',t'}s_{j,j'} \frac{m_{g,t,j}}{p_g} - s_{g,j'}
    \end{align*}
    is equal to the $j'$th row of $\Sigma_2(X)^{-1}$ multiplied by the second column of $L(X)$. In other words, it is the $(g,j')$-th element of the matrix $L(X)'\Sigma_2(X)^{-1}$. By the previous analysis, this term is zero for any $j \leq g-1$, which implies that (\ref{eqn:EIF-extra-term1}) evaluates to zero. Similarly, the term 
    \begin{align*}
        \sum_{g',t'}s_{j,j'} \frac{m_{\infty,t,j}}{p_\infty} - s_{g+1,j'}
    \end{align*}
    is the $(g+1,j')$-th element of the matrix $L(X)'\Sigma_2(X)^{-1}$, which is also equal to zero, implying that (\ref{eqn:EIF-extra-term2}) is zero. The weights $s_{j,j'}/\Bar{s}$ can be represented using $V^*_{gt}(X)$ as
    \begin{align*}
        \frac{s_{j,j'}}{\Bar{s}} = \frac{(-1)^{j+j'}|\Sigma_{2,jj'}|/|\Sigma_{2}|}{\sum_{j,j'=1}^{g-1}(-1)^{j+j'}|\Sigma_{2,jj'}|/|\Sigma_{2}|} =  \frac{(-1)^{j+j'}|V^*_{gt,jj'}(X)|/|V^*_{gt}(X)}{\sum_{j,j'=1}^{g-1}(-1)^{j+j'}|V^*_{gt,jj'}(X)|/|V^*_{gt}(X)|} = \frac{(j,j')\text{th entry of } V^*_{gt}(X)^{-1}}{\mathbf{1}'V_{gt}^*(X)^{-1} \mathbf{1}}.
    \end{align*}
    Notice that the first part of the EIF, $G_g(CATT(g,t,X) - ATT(g,t))$, does not depend on $j$ and can be put into the weighted average.
    The expression of the EIF becomes
    \begin{align*}
        \mathbb{EIF}_g(ATT(g,t)) = \frac{\mathbf{1}'V_{gt}^*(X)^{-1}}{\mathbf{1}'V_{gt}^*(X)^{-1} \mathbf{1} } \mathbb{IF}_g(ATT(g,t)).
    \end{align*}
    Lastly, we want to show that the weights can be represented by using $V_{gt}(X)$. Notice that $V_{gt}(X)$ is equal to
    \begin{align*}
        V_{gt}(X) = \frac{p_g(X)^2}{\pi_g^2} V^*_{gt}(X) + c(X) \mathbf{1} \mathbf{1}',
    \end{align*}
    where $c(X)$ is
    \begin{align*}
        c(X) \equiv \frac{1}{\pi_g^2}p_g(X)(1-p_g(X))(CATT(g,t,X) - ATT(g,t))^2.
    \end{align*}
    By the Sherman–Morrison formula, $V_{gt}(X)^{-1}$ is equal to
    \begin{align*}
        V_{gt}(X)^{-1} = \frac{\pi_g^2}{p_g(X)^2} V^*_{gt}(X)^{-1} - \frac{\frac{c(X)\pi_g^4}{p_g(X)^4}V^*_{gt}(X)^{-1}\mathbf{1}\mathbf{1}' V^*_{gt}(X)^{-1}}{1 + \frac{c(X)\pi_g^2}{p_g(X)^2} \mathbf{1}'V^*_{gt}(X)^{-1}\mathbf{1}}.
    \end{align*}
    Therefore, we have 
    \begin{align*}
        \mathbf{1}'V_{gt}(X)^{-1} & = \frac{\pi_g^2}{p_g(X)^2} \mathbf{1}'V^*_{gt}(X)^{-1} - \frac{\frac{c(X)\pi_g^4}{p_g(X)^4}\mathbf{1}'V^*_{gt}(X)^{-1}\mathbf{1}\mathbf{1}' V^*_{gt}(X)^{-1}}{1 + \frac{c(X)\pi_g^2}{p_g(X)^2} \mathbf{1}'V^*_{gt}(X)^{-1}\mathbf{1}} \\
        & = \mathbf{1}'V^*_{gt}(X)^{-1} \left( \frac{\pi_g^2}{p_g(X)^2} - \frac{\frac{c(X)\pi_g^4}{p_g(X)^4}\mathbf{1}'V^*_{gt}(X)^{-1}\mathbf{1}}{1 + \frac{c(X)\pi_g^2}{p_g(X)^2} \mathbf{1}'V^*_{gt}(X)^{-1}\mathbf{1}} \right),
    \end{align*}
    and
    \begin{align*}
        \mathbf{1}'V_{gt}(X)^{-1}\mathbf{1} = \mathbf{1}'V^*_{gt}(X)^{-1}\mathbf{1} \left( \frac{\pi_g^2}{p_g(X)^2} - \frac{\frac{c(X)\pi_g^4}{p_g(X)^4}\mathbf{1}'V^*_{gt}(X)^{-1}\mathbf{1}}{1 + \frac{c(X)\pi_g^2}{p_g(X)^2} \mathbf{1}'V^*_{gt}(X)^{-1}\mathbf{1}} \right).
    \end{align*}
    Their ratio is equal to
    \begin{align*}
        \frac{\mathbf{1}'V_{gt}(X)^{-1}}{\mathbf{1}'V_{gt}(X)^{-1}\mathbf{1}} = \frac{\mathbf{1}'V^*_{gt}(X)^{-1}}{\mathbf{1}'V^*_{gt}(X)^{-1}\mathbf{1}}.
    \end{align*}
    This completes the first part of the proof, which gives the EIF for $ATT(g,t)$ in the submodel constructed for a single time period $t$. 

    \paragraph{Part 2}
    For the second part of the proof, we examine the entire set of moments in Lemma \ref{lm:1} for all post-treatment periods: 
    \begin{align*}
    \mathbb{E}[G_g(ATT(g,t) - CATT(g,t,X))] & =0, \text{ for all $t \in [g,T]$,} \\
    \mathbb{E} \left[ CATT(g,t,X) - \frac{G_g(Y_{t} - Y_{g-1})}{p_g(X)} + \frac{G_\infty(Y_{t} - Y_{g-1})}{p_\infty(X)} \Big|X \right] & = 0, \text{ for all $t \in [g,T]$,} \\
    \mathbb{E} \left[ \frac{G_g(Y_{t'} - Y_{1})}{p_g(X)} - \frac{G_\infty(Y_{t'} - Y_{1})}{p_\infty(X)} \Big|X \right] & = 0, 2 \leq t' \leq g-1, \\
    \mathbb{E}[G_g - p_g(X) | X] & = 0.
    \end{align*}
    Intuitively, the additional moment conditions do not alter the efficiency bound of $ATT(g,t)$, and we aim to establish this formally.
    We first show that, to derive the EIF of a single $ATT(g,t)$, we may remove the unconditional moments (in the first line) corresponding time periods other than $t$. In other words, it suffices to examine the following model:
    \begin{align} \label{eqn:model1-t}
    \begin{split}
            \mathbb{E}[G_g(ATT(g,t) - CATT(g,t,X))] & =0, \\
    \mathbb{E} \left[ CATT(g,t,X) - \frac{G_g(Y_{t} - Y_{g-1})}{p_g(X)} + \frac{G_\infty(Y_{t} - Y_{g-1})}{p_\infty(X)} \Big|X \right] & = 0, \text{ for all $t \in [g,T]$,} \\
    \mathbb{E} \left[ \frac{G_g(Y_{t'} - Y_{1})}{p_g(X)} - \frac{G_\infty(Y_{t'} - Y_{1})}{p_\infty(X)} \Big|X \right] & = 0, 2 \leq t' \leq g-1, \\
    \mathbb{E}[G_g - p_g(X) | X] & = 0.
    \end{split}
    \end{align}
    To prove this point, we use the notations $\tilde{\rho}_1,\tilde{\rho}_2,\tilde{m}_1,\tilde{m}_2,\tilde{\Sigma}_1,\tilde{\Sigma}_2$ to denote the corresponding terms in the larger model. Denote $\theta = (ATT(g,t):t \in [g,T])$ as the vector of finite-dimensional parameter and $h = (CATT(g,g,\cdot),\cdots,CATT(g,T,\cdot),p_g)$ the nuisance parameters. The derivatives are 
    \begin{alignat*}{2}
    \frac{d\tilde{m}_1(\alpha^*)}{d \theta} & = \pi_g I , \quad &
    \frac{d\tilde{m}_2(X,\alpha^*)}{d \theta} & = 0 , \\
    \frac{d\tilde{m}_1(\alpha^*)}{d h}[r] & = A(X)r(X) , \quad &
    \frac{d\tilde{m}_2(X,\alpha^*)}{d h}[r] & = \tilde{L}(X)r(X),
\end{alignat*}
    where $A$ essentially stacks all the $A_t$'s, i.e., 
    \begin{align*}
        A(X) = (-p_g(X) I , (ATT(g,g) - CATT(g,T,X),\cdots,ATT(g,g) - CATT(g,T,X))).
    \end{align*}
    We omit the specific expression of $\tilde{L}(X)$ for now. Define $\tilde{B}(X) \equiv \tilde{L}(X)' \tilde{\Sigma}_2(X)^{-1} \tilde{L}(X)$. We want to solve the following optimization:
    \begin{align*}
    \inf_{r} \quad &  (\pi_g I - \mathbb{E}[A(X)r(X)]) \tilde{\Sigma}_1^{-1}(\pi_g - \mathbb{E}[A(X)r(X)]) + \mathbb{E}[ (\tilde{L}(X)r(X))'\tilde{\Sigma}_2(X)^{-1}\tilde{L}(X)r(X) ].
\end{align*}
By solving the first-order condition, we obtain the optimal $r^*$ as
\begin{align*}
    r^*(X)' = \pi_g (\tilde{\Sigma}_1 + \mathbb{E}[A(X)\tilde{B}(X)^{-1}A(X)'])^{-1} A(X) \tilde{B}(X)^{-1}.
\end{align*}
Therefore, the efficient score is
\begin{align*}
    & -\pi_g(\tilde{\Sigma}_1 + \mathbb{E}[A(X)\tilde{B}(X)^{-1}A(X)'])^{-1} \tilde{\rho}_1 \\ & + \pi_g (\tilde{\Sigma}_1 + \mathbb{E}[A(X)\tilde{B}(X)^{-1}A(X)'])^{-1} A(X) \tilde{B}(X)^{-1} \tilde{L}(X)' \tilde{\Sigma}_2(X)^{-1} \tilde{\rho}_2.
\end{align*}
The efficiency bound is the inverse of the expected outer product of the efficient score: 
\begin{align*}
    (\tilde{\Sigma}_1 + \mathbb{E}[A(X)\tilde{B}(X)^{-1}A(X)']) / \pi_g^2.
\end{align*}
The efficient influence function is equal to the efficiency score pre-multiplied by the efficiency bound:
\begin{align*}
    -\frac{1}{\pi_g} \tilde{\rho}_1 + \frac{1}{\pi_g}A(X) \tilde{B}(X)^{-1} \tilde{L}(X)' \tilde{\Sigma}_2(X)^{-1} \tilde{\rho}_2.
\end{align*}
For each $ATT(g,t)$ as an entry of $\theta$, its EIF is equal to the corresponding row in the above expression:
\begin{align} \label{eqn:EIF-model1-t}
    & -\frac{1}{\pi_g} p_g(X)(ATT(g,t) - CATT(g,t,X)) \nonumber \\
    & + \frac{1}{\pi_g}(0,\cdots,0,-p_g(X),0,\cdots,0,(ATT(g,t) - CATT(g,t,X))) \tilde{B}(X)^{-1} \tilde{L}(X)' \tilde{\Sigma}_2(X)^{-1} \tilde{\rho}_2.
\end{align}
Observe that this is exactly the expression for the EIF we would obtain for the model given by (\ref{eqn:model1-t}). This proves that to derive the EIF of a single $ATT(g,t)$, we can remove the unconditional moments (in the first line) for time periods other than $t$. 

The remaining task for the second part of the proof is to show that the conditional moments defining the irrelevant/redundant $CATT(g,t'',\cdot), t'' \ne t$ can also be removed. If this is true, then we can claim that the EIFs derived in the first part (based on the model given in (\ref{eqn:model1-tt})) coincide with the EIFs for the entire set of moment restrictions in Lemma \ref{lm:1}. This can be shown by induction. Assume first that there is only one additional conditional moment for an irrelevant $CATT(g,t'',X)$. For convenience, this moment is placed at the end of the model:
\begin{align*} 
            \mathbb{E}[G_g(ATT(g,t) - CATT(g,t,X))] & =0, \\
    \mathbb{E} \left[ CATT(g,t,X) - \frac{G_g(Y_{t} - Y_{g-1})}{p_g(X)} + \frac{G_\infty(Y_{t} - Y_{g-1})}{p_\infty(X)} \Big|X \right] & = 0, \\
    \mathbb{E} \left[ \frac{G_g(Y_{t'} - Y_{1})}{p_g(X)} - \frac{G_\infty(Y_{t'} - Y_{1})}{p_\infty(X)} \Big|X \right] & = 0, 2 \leq t' \leq g-1, \\
    \mathbb{E}[G_g - p_g(X) | X] & = 0, \\
    \mathbb{E} \left[ CATT(g,t'',X) - \frac{G_g(Y_{t''} - Y_{g-1})}{p_g(X)} + \frac{G_\infty(Y_{t''} - Y_{g-1})}{p_\infty(X)} \Big|X \right] & = 0.
    \end{align*}
    The parameters of this model are $\theta = ATT(g,t)$ and $h = (CATT(g,t,\cdot),p_g,CATT(g,t'',\cdot))$. The second term of the EIF given in (\ref{eqn:EIF-model1-t}) now becomes
    \begin{align} \label{eqn:EIF-model1-t-part2}
        \frac{1}{\pi_g}(A_t(X),0) \tilde{B}(X)^{-1} \tilde{L}(X)' \tilde{\Sigma}_2(X)^{-1} \tilde{\rho}_2.
    \end{align}
    The goal is to show that this is the same as the one for the model derived in the first part of the proof
    \begin{align*}
        \frac{1}{\pi_g}A_t(X) B(X)^{-1} L(X)' \Sigma_2(X)^{-1} \rho_2.
    \end{align*}
    Since the redundant $CATT(g,t'',\cdot)$ does not appear in any other moments except the last one, we can decompose $\tilde{L}$ as 
    \begin{align*}
    \tilde{L} = 
        \begin{pmatrix}
            L & \bm{0} \\
            \ell' & 1
        \end{pmatrix},
    \end{align*}
    where the specific expression for $\ell$ is not important here. We write the inverse matrix $\tilde{\Sigma}_2(X)^{-1} \equiv \Tilde{S}$ as the following $2 \times 2$ block matrix
    \begin{align*}
        \Tilde{S} = 
        \begin{pmatrix}
            \tilde{S}_{UL} & \tilde{S}_{LL}' \\
            \tilde{S}_{LL} & \tilde{S}_{LR}
        \end{pmatrix}.
    \end{align*}
    The matrix $\tilde{B}$ is equal to
    \begin{align*}
        \tilde{B} & = \tilde{L}' \tilde{S} \tilde{L} = \begin{pmatrix}
            L' & \ell \\
            \bm{0}' & 1
        \end{pmatrix}
        \begin{pmatrix}
            \tilde{S}_{UL} & \tilde{S}_{LL}' \\
            \tilde{S}_{LL} & \tilde{S}_{LR}
        \end{pmatrix}
        \begin{pmatrix}
            L & \bm{0} \\
            \ell' & 1
        \end{pmatrix} \\
        & = 
        \begin{pmatrix}
            L'\tilde{S}_{UL}L + \ell \tilde{S_{LL}} + L'\tilde{S}_{LL}' \ell' + \ell \tilde{S}_{LR} \ell' & L'\tilde{S}_{LL}' + \ell \tilde{S}_{LR} \\
            \tilde{S}_{LL}L + \tilde{S}_{LR}\ell' & \tilde{S}_{LR}
        \end{pmatrix}.
    \end{align*}
    Using Schur complement for block matrix inversion, we know that the upper-left block of $\tilde{B}^{-1}$ is equal to the inverse of the Schur complement of $\tilde{S}_{LR}$:
    \begin{align*}
        & \left( L'\tilde{S}_{UL}L + \ell \tilde{S}_{LL} + L'\tilde{S}_{LL}' \ell' + \ell \tilde{S}_{LR} \ell' - ( L'\tilde{S}_{LL}' + \ell \tilde{S}_{LR}) S_{LR}^{-1} (\tilde{S}_{LL}L + \tilde{S}_{LR}\ell') \right)^{-1} \\
        = & (L'(\tilde{S}_{UL} - \tilde{S}_{LL}'\tilde{S}_{LR}^{-1}\tilde{S}_{LL})L)^{-1} = (L'\Sigma_2^{-1}L)^{-1},
    \end{align*}
    where the second inequality follows from the fact that $\Sigma_2$ is the upper-left block of the inverse of $\tilde{S}$. Similarly, we can show that the upper-right block of $\tilde{B}^{-1}$ is equal to $-(L'\Sigma_2^{-1}L)^{-1}(L'\tilde{S}_{LL}' + \ell \tilde{S}_{LR})\tilde{S}_{LR}^{-1}$. Now the expression in (\ref{eqn:EIF-model1-t-part2}) becomes
    \begin{align*}
        & \frac{1}{\pi_g}A_t(X) 
        \begin{pmatrix}
            (L'\Sigma_2^{-1}L)^{-1} & -(L'\Sigma_2^{-1}L)^{-1}(L'\tilde{S}_{LL}' + \ell \tilde{S}_{LR})\tilde{S}_{LR}^{-1}
        \end{pmatrix}
        \begin{pmatrix}
            L'\tilde{S}_{UL}' + \ell \tilde{S}_{LL} & L'\tilde{S}_{LL}' + \ell \tilde{S}_{LR} \\
            \tilde{S}_{LL} & S_{LR}
        \end{pmatrix}
        \tilde{\rho}_2 \\
        = & \frac{1}{\pi_g}A_t(X)  (L'\Sigma_2^{-1}L)^{-1} ( L'\Sigma_2^{-1}L,0) \tilde{\rho}_2 \\
        = &  \frac{1}{\pi_g}A_t(X) B(X)^{-1} L(X)' \Sigma_2(X)^{-1} \rho_2,
    \end{align*}
    where the last equality follows from the fact that $\tilde{\rho}_2$ augments $\rho_2$ by including an additional moment for $CATT(g,t'',\cdot)$ at the end. This proves that the efficiency bound remains unchanged when we include an additional moment based on an irrelevant $CATT(g,t'',\cdot), t'' \ne t$. We can then sequentially include further moments for irrelevant CATTs, and, by induction, show that these extra moments leave the efficient influence function unchanged. This completes the second part of the proof.
\end{proof}

\begin{proof}[Proof of Corollary \ref{cor:2-period}]
    In the exactly identified model under Assumption \ref{asm:pt-post}, there is only one influence function. Therefore, the weight $\frac{\mathbf{1}'V_{gt}(X)^{-1}}{\mathbf{1}'V_{gt}(X)^{-1} \mathbf{1} }$ is equal to one. The efficient influence function coincides with the only influence function.
\end{proof}

\begin{proof}[Proof of Theorem \ref{thm:efficiency}]
    We first orthogonalize the unconditional moment conditions by replacing $G_g$ with $p_g(X)$ as in the proof of Theorem \ref{thm:efficiency-fixed-g}. The orthogonalized unconditional moments become the following: for each $g \in \mathcal{G}_{\text{trt}}$, 
    \begin{align*}
        \mathbb{E}[\pi_g - p_g(X)] & = 0, \\
        \mathbb{E}[p_g(X) (ATT(g,t) - CATT(g,t,X))] & = 0, g \leq t \leq T. 
    \end{align*}
    We reuse the notations $\rho_1,\rho_2,m_1,m_2,\Sigma_1,\Sigma_2$ in the proof of Theorem \ref{thm:efficiency-fixed-g} to denote the corresponding terms in the set of moment restrictions in Lemma \ref{lm:2}. Notice that the parameters $\pi_g$ and $ATT(g,t)$ are just identified by the unconditional moments in $\rho_1$. In particular, each parameter only appears in a single moment equation. Therefore, we can derive the efficient influence function for each parameter separately.\footnote{Such claims can be verified more formally using the induction approach in the second part of the proof for Theorem \ref{thm:efficiency-fixed-g}. These proofs are omitted to avoid repetition.} The following proof is divided into two parts: (i) the efficient influence function for the group probability $\pi_g$, and (ii) the efficient influence function for the treatment effect $\mathrm{ATT}(g,t)$.

    \paragraph{EIF for $\pi_g$} For a single $\pi_g$, its efficient influence function can be derived equivalently using the following model
    \begin{align*}
    \mathbb{E}[\pi_g - p_g(X)] & = 0, \\
        \mathbb{E}\left[ CATT(g,t,X) - \frac{G_{g}(Y_{t} - Y_1)}{p_{g}(X)} + \frac{G_\infty(Y_{t} - Y_1)}{p_\infty(X)} \Big| X \right] & = 0, {g} \leq t \leq T, g \in \mathcal{G}_{\text{trt}}, \\
        \mathbb{E}\left[\frac{G_{g'}(Y_{t'} - Y_1)}{p_{g'}(X)} - \frac{G_\infty(Y_{t'} - Y_1)}{p_\infty(X)}\Big| X \right] & = 0, 2 \leq t' \leq {g'}-1, {g'} \in \mathcal{G}_{\text{trt}}, \\
        \mathbb{E}[G_{g'} - p_{g'}(X)|X]&=0, {g'} \in \mathcal{G}_{\text{trt}}, \\
        \mathbb{E}[G_\infty - p_\infty(X)|X]&=0.
    \end{align*}
    Notice that the second line contains only definitions of CATT instead of restrictions. Since the CATTs are not involved in the definition of $\pi_g$, we can remove them from the model without affecting the calculation of the efficient influence function. Then the model becomes
    \begin{align*}
    \mathbb{E}[\pi_g - p_g(X)] & = 0, \\
        \mathbb{E}\left[\frac{G_{g'}(Y_{t'} - Y_1)}{p_{g'}(X)} - \frac{G_\infty(Y_{t'} - Y_1)}{p_\infty(X)}\Big| X \right] & = 0, 2 \leq t' \leq {g'}-1, {g'} \in \mathcal{G}_{\text{trt}}, \\
        \mathbb{E}[G_{g'} - p_{g'}(X)|X]&=0, {g'} \in \mathcal{G}_{\text{trt}}, \\
        \mathbb{E}[G_\infty - p_\infty(X)|X]&=0.
    \end{align*}
    In this model, the parameters are $\theta = \pi_g$ and $h=(p_g, g \in \mathcal{G})$.
    The derivatives of $m_1$ and $m_2$ with respect to $\pi_g$ are
    \begin{align*}
    \frac{dm_1(\alpha^*)}{d \pi_g} = 1,~ \frac{dm_2(X,\alpha^*)}{d \pi_g} = 0.
\end{align*}
    The derivatives of $m_1$ and $m_2$ with respect to $h$ are 
    \begin{align*}
        \frac{dm_1(\alpha^*)}{dh}[r] & = \mathbb{E}[-\bm{e}_g' r(X)], \\
        \frac{dm_2(X,\alpha^*)}{dh}[r] & = L(X) r(X),
    \end{align*}
    where $\bm{e}_g$ is the one-hot vector with the $g$th entry being 1 and the remaining entries being zero, and $L(X)$ is defined by
    \begin{align*}
        L(X) \equiv 
        \begin{pmatrix}
        \begin{pmatrix}
        \begin{pmatrix}
            -\frac{m_{g',2,1}(X)}{p_{g'}(X)} \\
            \vdots \\
            -\frac{m_{g',g'-1,1}(X)}{p_{g'}(X)}
        \end{pmatrix} \bm{e}_{g'}',
        \begin{pmatrix}
            \frac{m_{\infty,2,1}(X)}{p_\infty(X)} \\
            \vdots \\
            \frac{m_{\infty,g'-1,1}(X)}{p_\infty(X)}
        \end{pmatrix}
        \end{pmatrix}_{g' \in \mathcal{G}_{\text{trt}}} \\
            -I
        \end{pmatrix},
    \end{align*}
    where $I$ denotes the identity matrix (of dimension $|\mathcal{G}|$). Similar to the proof of Theorem \ref{thm:efficiency-fixed-g}, we follow Theorem 2.1 of \cite{Ai_Chen_2012} and solve the following minimization problem:
    \begin{align*}
    \inf_{r} \quad &  (1 + \mathbb{E}[\bm{e}_{g}'r(X)]) \Sigma_1^{-1}(1 + \mathbb{E}[\bm{e}_g'r(X)]) + \mathbb{E}[ (L(X)r(X))'\Sigma_2(X)^{-1}L(X)r(X) ].
    \end{align*}
    The corresponding solution is given by
    \begin{align*}
    r^*(X) = \frac{-B(X)^{-1}\bm{e}_g}{\mathbb{E}[\bm{e}_g'B(X)^{-1}\bm{e}_g] + \Sigma_1},
    \end{align*}
    with $B(X) = L(X)'\Sigma_2(X)^{-1}L(X)$. The efficient score of $\pi_g$ is given by
    \begin{align*}
        & - \left( \frac{dm_1(\alpha^*)}{d \pi_g} - \frac{dm_1(\alpha^*)}{d h}[r^*] \right)' \Sigma_1^{-1} \rho_1 - \left( \frac{dm_2(X,\alpha^*)}{d \pi_g} - \frac{dm_2(X,\alpha^*)}{d h}[r^*] \right)' \Sigma_2(X)^{-1} \rho_2 \\
        = & - (\mathbb{E}[\bm{e}_g'B(X)^{-1}\bm{e}_g] + \Sigma_1)^{-1} (\rho_1 + \bm{e}_g'B(X)^{-1}L(X)'\Sigma_2(X)^{-1}\rho_2 ).
    \end{align*}
    The semiparametric efficiency bound is hence equal to $(\mathbb{E}[\bm{e}_g'B(X)^{-1}\bm{e}_g] + \Sigma_1)^{-1}$. Therefore, the efficient influence function for $\pi_g$ is equal to
    \begin{align*}
        - (\rho_1 + \bm{e}_g'B(X)^{-1}L(X)'\Sigma_2(X)^{-1}\rho_2 ).
    \end{align*}
    The analysis of $L(X)'\Sigma_2(X)^{-1}$ and $B(X)$ is similar to the proof of Theorem \ref{thm:efficiency-fixed-g}. Define the bottom right $|\mathcal{G}| \times |\mathcal{G}|$ submatrix of $\Sigma_2(X)$ as
    \begin{align*}
    \Pi \equiv 
    \begin{pmatrix}
        p_2 (1-p_2) & - p_3 p_2 & -p_4 p_2 & \cdots & - p_\infty p_2 \\
        - p_3 p_2 & p_3 (1-p_3) & - p_4 p_3  & \cdots & - p_\infty p_3 \\ 
        \vdots & \vdots & \vdots & & \vdots \\
        -p_{\infty} p_2 & -p_{\infty} p_3 & -p_{\infty} p_4 & \cdots & p_\infty (1-p_\infty) \\
    \end{pmatrix}.
    \end{align*}
    We can decompose the covariance matrix $\Sigma_2(X)$ into the following block matrix:
\begin{align*}
    \Sigma_2 = 
    \begin{pmatrix}
        \Sigma_{2,UL} & \Sigma_{2,LL}' \\
        \Sigma_{2,LL} & \Pi 
    \end{pmatrix}.
\end{align*}
Here $\Sigma_{2,LL}$ is a matrix with $|\mathcal{G}|$ rows. Each row of $\Sigma_{2,LL}$ is $(\sigma_{g,g',t'}(X), 2 \leq t' \leq g'-1, g' \in \mathcal{G}_{\text{trt}})$, where $\sigma_{g,g',t'}$ is defined as
with 
\begin{align*}
    \sigma_{g,g',t'}(X) & \equiv \mathbb{E}\left[ (G_{g} - p_{g}(X))  \left(\frac{G_{g'}(Y_{t'} - Y_{1})}{p_{g'}(X)} - \frac{G_\infty(Y_{t'} - Y_{1})}{p_\infty(X)} \right) | X \right] \\
    & = 
    \begin{cases}
        -p_{g}(X) (m_{g',t',1}(X) - m_{\infty,t',1}(X)), & \text{ if } g \notin \{g', \infty\}, \\
        (1-p_g(X)) m_{g,t',1}(X) + p_g(X) m_{\infty,t',1}(X), & \text{ if } g = g', \\
        - p_\infty(X) m_{g,t',1}(X) - (1-p_\infty(X)) m_{\infty,t',1}(X) , & \text{ if } g = \infty.
    \end{cases}
    \end{align*}
    Notice that $L(X)$ is related to $\Sigma_2(X)$ as $L(X) \Pi = -(\Sigma_{2,LL},\Pi)'$, which implies that $L(X)' = -(\Pi^{-1}\Sigma_{2,LL}, I)$. Therefore, we have
    \begin{align*}
        L(X)' \Sigma_2(X)^{-1} & = - (\bm{0},\Pi^{-1}), \\
        L(X)' \Sigma_2(X)^{-1} L(X) & = \Pi^{-1}, \\
        B(X)^{-1} & = \Pi, \\
        B(X)^{-1}L(X)' \Sigma_2(X)^{-1} & = -(\bm{0},I).
    \end{align*}
    The efficient influence function of $\pi_g$ is hence equal to
    \begin{align*}
        - (\rho_1 + \bm{e}_g'B(X)^{-1}L(X)'\Sigma_2(X)^{-1}\rho_2 ) = -(\rho_1 - \bm{e}_g'(\bm{0},I)\rho_2) = G_g - \pi_g,
    \end{align*}
    which is the same influence function in a model without any restrictions.

    \paragraph{EIF for $ATT(g,t)$} The efficient influence function of $ATT(g,t)$ can be derived similarly. The corresponding model can be reduced to the following:
    \begin{align*}
    \mathbb{E}[p_g(X) (ATT(g,t) - CATT(g,t,X))] & = 0, \\
        \mathbb{E}\left[ CATT(g,t,X) - \frac{G_{g}(Y_{t} - Y_1)}{p_{g}(X)} + \frac{G_\infty(Y_{t} - Y_1)}{p_\infty(X)} \Big| X \right] & = 0, \\
        \mathbb{E}\left[\frac{G_{g'}(Y_{t'} - Y_1)}{p_{g'}(X)} - \frac{G_\infty(Y_{t'} - Y_1)}{p_\infty(X)}\Big| X \right] & = 0, 2 \leq t' \leq {g'}-1, {g'} \in \mathcal{G}_{\text{trt}}, \\
        \mathbb{E}[G_{g'} - p_{g'}(X)|X]&=0, {g'} \in \mathcal{G}_{\text{trt}}, \\
        \mathbb{E}[G_\infty - p_\infty(X)|X]&=0.
    \end{align*}
    After rotation, the model is equivalently represented as
    \begin{align*}
    \mathbb{E}[p_g(X) (ATT(g,t) - CATT(g,t,X))] & = 0, \\
        \mathbb{E}\left[ CATT(g,t,X) - \frac{G_{g}(Y_{t} - Y_1)}{p_{g}(X)} + \frac{G_\infty(Y_{t} - Y_1)}{p_\infty(X)} \Big| X \right] & = 0, \\
        \mathbb{E}\left[CATT(g,t,X) - \frac{G_{g}(Y_{t} - Y_1)}{p_{g}(X)} + \frac{G_\infty(Y_{t} - Y_{t'})}{p_\infty(X)} + \frac{G_{g'}(Y_{t'} - Y_1)}{p_{g'}(X)} \Big| X \right] & = 0, 2 \leq t' \leq {g'}-1, {g'} \in \mathcal{G}_{\text{trt}}, \\
        \mathbb{E}[G_{g'} - p_{g'}(X)|X]&=0, {g'} \in \mathcal{G}_{\text{trt}}, \\
        \mathbb{E}[G_\infty - p_\infty(X)|X]&=0.
    \end{align*}
    Then this model is essentially the same as the one studied in the proof of Theorem \ref{thm:efficiency-fixed-g}. Following the same steps, we can show that the efficient influence function is obtained by optimally weighting the influence functions in $\mathbb{IF}(ATT(g,t))$.
\end{proof}

We introduce some definitions. Recall that $r_{g,g'} \coloneqq p_g/p_{g'}$ for any $g,g' \in \mathcal{G}$, and denote $r \equiv (r_{g,g'},g,g' \in \mathcal{G})$. Denote $\mathcal{H}_w$, $\mathcal{H}_m$, and $\mathcal{H}_p$ as the nuisance parameter spaces containing respectively the true values of $w$, $m$, and $r$ and their estimates. For a generic $\mathcal{H}$ and norm $\lVert \cdot \rVert$, the covering number $N(\epsilon,\mathcal{H},\lVert \cdot \rVert)$ is the minimal number of $N$ for which there exist $\epsilon$-balls $\{  \{f: \lVert f-h_j \rVert \leq \epsilon \}, \lVert h_j \rVert < \infty, j=1,\cdots,N \}$ to cover $\mathcal{H}$.

\begin{assumption} \label{asm:regularity} The following regularity assumptions are imposed for Theorem \ref{thm:asy_properties_estimator}.
    \begin{enumerate}[(1)]
        \item Second moment: Each outcome $Y_t$ has finite second moment.
        \item Proper weighting: The estimated weights $\hat{w}^{att(g,t)}$ sum to one and is bounded in probability, i.e., $\lVert \hat{w}^{att(g,t)}\rVert_\infty = O_p(1)$.
        \item Donsker property: For each $j=w,m,r$, we have 
        \begin{align*}
            \int_0^\infty \sup_{Q} \sqrt{\log N(\epsilon \lVert H \rVert_{L_2(Q)} ,\mathcal{H}_j,\lVert\cdot\rVert_{L_2(Q)})} d\epsilon < \infty,
        \end{align*}
        where the supremum is taken over all finitely discrete probability measures $Q$ on the support of $X$, $H$ denotes an envelope of $\mathcal{H}$, and $L_2(Q)$ denotes the $L_2$ measure under $Q$.
        \item Overlap: For each $g,g' \in \mathcal{G}_{\text{trt}} \times \mathcal{G}$, $\mathbb{E}[r_{g,g'}(X)^2] < \infty$.
        \item Uniform consistency: $\lVert \hat{w}^{att(g,t)} - w^{att(g,t)} \rVert_\infty =o_p(1)$, $\lVert \hat{m} -m \rVert_\infty =o_p(1)$, and $\lVert \hat{r} -r \rVert_\infty =o_p(1)$, where $\lVert \cdot \rVert_\infty$ denotes the sup norm.
        \item Rate requirement:
        \begin{align*}
    \left\lVert \hat{m}_{\infty,t,t''} - m_{\infty,t,t''} \right\rVert_{L_2(X)} \left\lVert \hat{r}_{g,\infty} - r_{g,\infty}  \right\rVert_{L_2(X)} & = o_p(n^{-1/2}), t'' < t, \\
    \left\lVert \hat{m}_{g',t'',1} - m_{g',t'',1} \right\rVert_{L_2(X)} \left\lVert \hat{r}_{g,g'} - r_{g,g'} \right\rVert_{L_2(X)} & = o_p(n^{-1/2}), t'' < g', g' \in \mathcal{G}_{\text{trt}},
    \end{align*}
    where $\lVert \cdot \rVert_{L_2(X)}$ represents the $L_2$ norm under the marginal distribution of $X$.
    \end{enumerate}
\end{assumption}
A popular nonparametric class that may satisfy the Donsker condition is the smoothness class defined in Theorem 2.7.1 in \cite{VanderVaart1996}.

\begin{proof}[Proof of Theorem \ref{thm:asy_properties_estimator}]
To simplify the notation in the proof, we focus on a single ATT's estimation and drop the superscript $\text{att(g,t)}$ and subscript $\text{stg}$. We make explicit the dependence of $\theta$ on the nuisance parameters by writing it as $\theta(W;p,m;\pi)$. The ATT estimator is now written as $\widehat{ATT} = \mathbb{E}_n[\hat{w}(X)' \theta(W;\hat{p},\hat{m};\hat{\pi})]$. Define the infeasible estimator constructed using the true nuisance parameters (instead of their estimators) as
\begin{align*}
    \widetilde{ATT} \equiv \mathbb{E}_n[w(X)'\theta(W;\hat{p},\hat{m};\hat{\pi})].
\end{align*}
Our goal is to show that $\widehat{ATT}$ and $\widetilde{ATT}$ are first-order equivalent, i.e., $\sqrt{n}(\widehat{ATT} - \widetilde{ATT}) = o_p(1)$. Once this is established, the influence function of $\widehat{ATT}$ will be the same as that of $\widetilde{ATT}$. Since $\widetilde{ATT}$ is a ratio between two sample averages, its influence function is straightforwardly obtained by using the delta method, which is equal to the efficient influence function specified in Theorem \ref{thm:efficiency}. Then the asymptotic distribution is obtained by using the central limit theorem under the assumption that the second moment of the efficient influence functions exists. The efficiency of $\widehat{ES}(e)$ follows from another use of the delta method. In the remaining part of the proof, we focus on establishing the first-order equivalence between $\widehat{ATT}$ and $\widetilde{ATT}$.

The term $\hat{\pi}_g$ in the denominator has no impact on the asymptotic convergence of $\sqrt{n}(\widehat{ATT} - \widetilde{ATT})$ given that $\pi_g>0$. Therefore, we treat $\hat{\pi}_g$ as one and simply write $\theta(W_i;\hat{p},\hat{m})$ instead of $\theta(W_i;\hat{p},\hat{m};1)$. The difference $\widehat{ATT} - \widetilde{ATT}$ can be decomposed as
    \begin{align*}
        \widehat{ATT} - \widetilde{ATT} & = \frac{1}{n} \sum_{i=1}^n \hat{w}(X_i)' \theta(W_i;\hat{p},\hat{m}) - w(X_i)'\theta(W_i;p,m) \\
        & = \underbrace{\frac{1}{n} \sum_{i=1}^n (\hat{w}(X_i) - w(X_i))' \theta(W_i;p,m)}_{\equiv E_1} - \underbrace{\frac{1}{n} \sum_{i=1}^n \hat{w}(X_i)' (\theta(W_i;\hat{p},\hat{m}) - \theta(W_i;p,m))}_{\equiv E_2}.
    \end{align*}
    Denote the generic entries of $w$, $\hat{w}$, and $\theta$ by using the subscript $g',t''$.
    Recall that $\theta$ is written as
    \begin{align*}
        \theta_{g',t''}(W;\hat{p},\hat{m}) 
        & = G_g(Y_t - Y_{1} - m_{g,t,1}(X)) - \frac{p_g(X)}{p_\infty(X)}G_{g}(Y_t - Y_{t'} - m_{\infty,t,t''}(X)) \\
        & - \frac{p_g(X)}{p_{g'}(X)}G_{g'}(Y_{t''} - Y_{1} - m_{g',t'',1}(X)) + G_g\underbrace{(m_{g,t,1}(X) - m_{\infty,t,t''}(X) - m_{g',t'',1}(X))}_{ = CATT(g,t,X)}.
    \end{align*}
    Notice that the first and last terms on the right-hand side, $G_g(Y_t - Y_{1} - m_{g,t,1}(X))$ and $G_g CATT(g,t,X)$, do not depend on $g'$ or $t''$. These two terms will not contribute to $E_1$ since both $\hat{w}$ and $w$ sum to one, and hence
    \begin{align*}
        (\hat{w}(X_i) - w(X_i))' \mathbf{1} (G_{g,i}CATT(g,t,X_i)) & = 0, \\
        (\hat{w}(X_i) - w(X_i))' \mathbf{1} (G_{g,i}(Y_{i,t} - Y_{i,1} - m_{g,t,1}(X_i))) & = 0.
    \end{align*}
    Therefore, the term $E_1$ is equal to the sum of the following two terms
    \begin{align*}
        & \frac{1}{n} \sum_{g',t''} \sum_{i=1}^n (\hat{w}_{g',t''}(X_i) - w_{g',t''}(X_i))'\frac{p_g(X_i)}{p_\infty(X_i)}G_{g,i}(Y_{i,t} - Y_{i,t''} - m_{\infty,t,t''}(X_i)), \\
        & \frac{1}{n} \sum_{g',t''} \sum_{i=1}^n (\hat{w}_{g',t''}(X_i) - w_{g',t''}(X_i))'\frac{p_g(X_i)}{p_{g'}(X_i)}G_{g',i}(Y_{i,t''} - Y_{i,1} - m_{g',t'',1}(X_i)).
    \end{align*}
    Given that the convergence rate of the two terms can be derived using the same method, we choose to illustrate the convergence of the second term. Additionally, since the summation over $g'$ and $t''$ is finite, it suffices to analyze the convergence of a single term within the summation:
    \begin{align*}
        \frac{1}{n} \sum_{i=1}^n (\hat{w}_{g',t''}(X_i) - w_{g',t''}(X_i))'\frac{p_g(X_i)}{p_{g'}(X_i)}G_{g',i}(Y_{i,t''} - Y_{i,1} - m_{g',t'',1}(X_i)).
    \end{align*}
    By the uniform consistency of $\hat{w}$, the above term multiplied by $\sqrt{n}$ is bounded by
    \begin{align*}
        & \left| \frac{1}{\sqrt{n}} \sum_{i=1}^n (\hat{w}_{g',t''}(X_i) - w_{g',t''}(X_i))'\frac{p_g(X_i)}{p_{g'}(X_i)}G_{g',i}(Y_{i,t''} - Y_{i,1} - m_{g',t'',1}(X_i)) \right| \\
        \leq & \sup_{\substack{\tilde{w}_{g',t''} \in \mathcal{C}^{\alpha}(\mathcal{X})_M:\\ \lVert \tilde{w}_{g',t''} - w_{g',t''} \rVert_\infty < \delta_n}}
 \left| \frac{1}{\sqrt{n}} \sum_{i=1}^n (\tilde{w}_{g',t''}(X_i) - w_{g',t''}(X_i))'\frac{p_g(X_i)}{p_{g'}(X_i)}G_{g',i}(Y_{i,t''} - Y_{i,1} - m_{g',t'',1}(X_i)) \right| + o_p(1),
    \end{align*}
    where $\delta_n \downarrow 0$ denotes a sequence that converges to zero slower than the uniform convergence rate of $\hat{w}$.
    The first term on the right-hand side is the standard stochastic equicontinuity term, which is of order $o_p(1)$ because of the Donsker condition, Theorem 2.5.2 in \cite{VanderVaart1996}, and that $\frac{p_g(X)}{p_{g'}(X)}G_{g'}(Y_{t''} - Y_{1} - m_{g',t'',1}(X))$ is a fixed function with finite second moment by assumption. Following the same procedure, we can show that 
\begin{align*}
    \frac{1}{n} \sum_{g',t'} \sum_{i=1}^n (\hat{w}_{g',t''}(X_i) - w_{g',t''}(X_i))'\frac{p_g(X_i)}{p_\infty(X_i)}G_{g,i}(Y_{i,t} - Y_{i,t''} - m_{\infty,t,t''}(X_i)) = o_p(n^{-1/2}),
\end{align*}
Therefore, we have shown that $E_1 = o_p(n^{-1/2})$. For the term $E_2$, define $\theta^1$ and $\theta^2$ as vectors respectively collecting the following term:
\begin{align*}
        \theta_{g',t''}^1(W;p,m) & \equiv \frac{G_g p_\infty(X) - G_\infty p_g(X)}{\pi_g p_\infty(X)} (Y_t - Y_{t'} - m_{\infty,t,t''}(X)), \\
        \theta_{g',t''}^2(W;p,m) & \equiv \frac{G_g p_{g'}(X) - G_{g'} p_g(X)}{\pi_g p_{g'}(X)} (Y_{t'} - Y_{1} -m_{g',t'',1}(X)).
    \end{align*}
We can decompose $E_2$ as 
\begin{align*}
    E_2 = \underbrace{\frac{1}{n} \sum_{i=1}^n \hat{w}(X_i)' (\theta^1(W_i;\hat{p},\hat{m}) - \theta^1(W_i;p,m))}_{\equiv E_{2.1}} + \underbrace{\frac{1}{n} \sum_{i=1}^n \hat{w}(X_i)' (\theta^2(W_i;\hat{p},\hat{m}) - \theta^2(W_i;p,m))}_{\equiv E_{2.2}}.
\end{align*}
The two terms can be analyzed analogously. We examine $E_{2.1}$ first. This term can be decomposed into three terms 
\begin{align*}
    E_{2.1.1} & \equiv \frac{1}{n} \sum_{i=1}^n \sum_{g',t''} \hat{w}_{g',t''}(X_i) \left( r_{g,\infty}(X_i) - \hat{r}_{g,\infty}(X_i) \right) G_{\infty,i}(Y_{i,t} - Y_{i,t''} - m_{\infty,t,t''}(X_i)), \\
    E_{2.1.2} & \equiv \frac{1}{n} \sum_{i=1}^n \sum_{g',t''} \hat{w}_{g',t''}(X_i) \left( m_{\infty,t,t''}(X_i) - \hat{m}_{\infty,t,t''}(X_i) \right)\left(G_{g,i} - \frac{p_g(X_i)}{p_\infty(X_i)}G_{\infty,i}\right), \\
    E_{2.1.3} & \equiv \frac{1}{n} \sum_{i=1}^n \sum_{g',t'} \hat{w}_{g',t'}(X_i) G_{\infty,i}\left( m_{\infty,t,t'}(X_i) - \hat{m}_{\infty,t,t'}(X_i) \right) \left( r_{g,\infty}(X_i) - \hat{r}_{g,\infty}(X_i) \right).
\end{align*}
The convergence rate of $E_{2.1.1}$ and $E_{2.1.2}$ can be derived in the same way as that of $E_{1}$ under the smoothness and uniform consistency conditions. 
For the last term $E_{2.1.3}$, examine a single term in the summation over $j$:
\begin{align*}
    & \left| \frac{1}{n} \sum_{i=1}^n \hat{w}_{g',t''}(X_i) G_{\infty,i}\left( m_{\infty,t,t''}(X_i) - \hat{m}_{\infty,t,t''}(X_i) \right) \left( r_{g,\infty}(X_i) - \hat{r}_{g,\infty}(X_i) \right) \right| \\
    \leq & O_p(1) \times \frac{1}{n} \sum_{i=1}^n | m_{\infty,t,t''}(X_i) - \hat{m}_{\infty,t,t''}(X_i) | \left| r_{g,\infty}(X_i) - \hat{r}_{g,\infty}(X_i) \right|
\end{align*}
because the estimated weights are bounded in probability. The second factor on the right-hand side is bounded as
\begin{align*}
    & \frac{1}{n} \sum_{i=1}^n | m_{\infty,t,t''}(X_i) - \hat{m}_{\infty,t,t''}(X_i) | \left| r_{g,\infty}(X_i) - \hat{r}_{g,\infty}(X_i) \right| \\
    \leq & \sup_{(\tilde{m}_{\infty,t,t''},\tilde{r}_{g,\infty}) \in \mathcal{F}_n}  \frac{1}{n} \sum_{i=1}^n \Bigg( | m_{\infty,t,t''}(X_i) - \tilde{m}_{\infty,t,t''}(X_i) | \left| r_{g,\infty}(X_i) - \tilde{r}_{g,\infty}(X_i) \right| \\
    & \quad - \mathbb{E}|m_{\infty,t,t''}(X) - \tilde{m}_{\infty,t,t''}(X)| \left| r_{g,\infty}(X) - \tilde{r}_{g,\infty}(X) \right| \Bigg) \\
    & + \sup_{(\tilde{m}_{\infty,t,t''},\tilde{r}_{g,\infty}) \in \mathcal{F}_n} \mathbb{E} \left[|m_{\infty,t,t''}(X) - \tilde{m}_{\infty,t,t''}(X)| \left| r_{g,\infty}(X) - \tilde{r}_{g,\infty}(X) \right| \right],
\end{align*}
where $\mathcal{F}_n \coloneqq \big\{ \tilde{m}_{\infty,t,t''},\tilde{r}_{g,\infty} \in \mathcal{C}^{\alpha}(\mathcal{X})_M:\lVert m_{\infty,t,t''} - \tilde{m}_{\infty,t,t''} \rVert_\infty \leq \delta_n, \lVert m_{\infty,t,t''} - \tilde{m}_{\infty,t,t''} \rVert_{L_2(X)} \leq \lVert m_{\infty,t,t''} - \hat{m}_{\infty,t,t''} \rVert_{L_2(X)}, \lVert r_{g,\infty} - \tilde{r}_{g,\infty} \rVert_\infty \leq \delta_n, \lVert r_{g,\infty} - \tilde{r}_{g,\infty} \rVert_{L_2(X)} \leq \lVert r_{g,\infty} - \hat{r}_{g,\infty} \rVert_{L_2(X)} \big\}$, with $\delta_n \downarrow 0$ being a sequence that converges to zero slower than the uniform convergence rates of $\hat{m}_{\infty,t,t''}$ and $\hat{r}_{g,\infty}$.
By the fact that the finiteness of entropy integral is preserved under element-wise multiplication of function classes, we can use the previous stochastic equicontinuity argument to show that the first term on the right-hand side is of order $o_p(n^{-1/2})$. The second term on the right-hand side is also of order $o_p(n^{-1/2})$ by using Cauchy-Schwarz inequality together with the rate requirement on the nuisance estimators:
\begin{align*}
    & \sup_{(\tilde{m}_{\infty,t,t''},\tilde{r}_{g,\infty}) \in \mathcal{F}_n} \mathbb{E}\left[|m_{\infty,t,t''}(X) - \tilde{m}_{\infty,t,t''}(X)| \left| r_{g,\infty(X)} - \tilde{r}_{g,\infty}(X) \right|\right] \\
    \leq & \sup_{(\tilde{m}_{\infty,t,t''},\tilde{r}_{g,\infty}) \in \mathcal{F}_n} \lVert m_{\infty,t,t''} - \tilde{m}_{\infty,t,t''}\rVert_{L_2(X)} \left\lVert r_{g,\infty} - \tilde{r}_{g,\infty} \right\rVert_{L_2(X)} = o_p(n^{-1/2}).
\end{align*}
This proves that $E_{2.1} = o_p(n^{-1/2})$. The term $E_{2.2}$ can also be decomposed in a similar way as $E_{2.1}$:
\begin{align*}
    E_{2.2.1} & \equiv \frac{1}{n} \sum_{i=1}^n \sum_{g',t''} \hat{w}_{g',t''}(X_i) \left( r_{g,g'}(X_i) - \hat{r}_{g,g'}(X_i) \right) G_{{g'},i}(Y_{i,t''} - Y_{i,1} - m_{{g'},t'',1}(X_i)), \\
    E_{2.2.2} & \equiv \frac{1}{n} \sum_{i=1}^n \sum_{g',t''} \hat{w}_{g',t''}(X_i) \left( m_{{g'},t'',1}(X_i) - \hat{m}_{{g'},t'',1}(X_i) \right)\left(G_{g,i} - \frac{p_g(X_i)}{p_{g'}(X_i)}G_{{g'},i}\right), \\
    E_{2.2.3} & \equiv \frac{1}{n} \sum_{i=1}^n \sum_{g',t''} \hat{w}_{g',t''}(X_i) G_{{g'},i}\left( m_{{g'},t'',1}(X_i) - \hat{m}_{{g'},t'',1}(X_i) \right) \left( r_{g,g'}(X_i) - \hat{r}_{g,g'}(X_i) \right).
\end{align*}
The convergence rate of $E_{2.2.1}$ and $E_{2.2.2}$ can be derived in the same way as for $E_{2.1.1}$, $E_{2.1.2}$, and $E_{1}$. The term $E_{2.2.3}$ is similar to $E_{2.1.3}$.
This proves that $E_{2.1} = o_p(n^{-1/2})$. In the end, we have shown that $\widehat{ATT} - \widetilde{ATT}= o_p(n^{-1/2})$, and therefore, $\widehat{ATT}(g,t)$ and $\widehat{ES}(e)$ have the desired asymptotic distributions.
    
\end{proof}

The following lemma shows the consistency of the estimator based on the $\hat{K}$ in (\ref{eqn:series_ratio_pscore}), i.e., asymptotically all basis functions are selected. Let $\ell(W,r) = r(X)^2 G_{g'} - 2r(X)G_g$ and $\Bar{\ell}(r) = \mathbb{E}[\ell(W,r)]$. Denote the sieve space of $r$ as $\mathcal{R}_K$. For example, in (\ref{eqn:series_ratio_pscore}), $\mathcal{R}_K$ is the space spanned by $\Psi^K(X)$.

\begin{lemma} \label{lm:AIC}
    Let the following conditions hold.
    \begin{enumerate}[(1)]
    \item There is a sequence $c(K)$ such that $\inf_{\tilde{r} \in \mathcal{R}_K} [\Bar{\ell}(\tilde{r}) - \Bar{\ell}(r)] > c(K) >0$ for all $K$.
    \item $\sup_{\tilde{r} \in \mathcal{R}_K}|\frac{1}{n} \sum_{i=1}^n (\ell(W_i,\tilde{r})-\Bar{\ell}(\tilde{r}))| = o_p(1)$.
    \item There is a sequence $K_n^* \rightarrow \infty$ such that $\lVert \hat{r}_{K_n^*} - r\rVert_\infty = o_p(1)$.
    \item For any sequence $\eta_n = o(1)$, $\sup_{\lVert \tilde{r}  - r\rVert_\infty \leq \eta_n}|\frac{1}{n} \sum_{i=1}^n (\ell(W_i,\tilde{r})-\Bar{\ell}(r))| = o_p(1)$.
    \item $C_n K_n^*/n = o(1)$.
    \end{enumerate}
    Then for any fixed $K^*$, we have $\hat{K} > K^*$ with probability approaching one.
\end{lemma}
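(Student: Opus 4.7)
The plan is to show that for every fixed $K^*$, each candidate $K \in \{1,\dots,K^*\}$ is dominated by $K_n^*$ in the information criterion with probability approaching one; a finite union bound over $K\le K^*$ then delivers $\hat K>K^*$. Writing $\operatorname{AIC}(K) \equiv 2\mathbb{E}_n[\ell(W,\hat r_K)] + C_n K/n$, the goal reduces to proving
\begin{equation*}
\operatorname{AIC}(K) - \operatorname{AIC}(K_n^*) \;\geq\; 2c(K) - o_p(1)
\end{equation*}
for any fixed $K$, since $c(K)>0$ by condition (1) and the bound is uniform in $K \le K^*$ (a finite index set).

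For the lower bound on $\operatorname{AIC}(K)$ at fixed $K$, I would use that $\hat r_K$ minimizes the empirical loss over $\mathcal{R}_K$, so $\mathbb{E}_n[\ell(W,\hat r_K)] \geq \inf_{\tilde r\in\mathcal{R}_K}\bar\ell(\tilde r) - \sup_{\tilde r\in\mathcal{R}_K}|\mathbb{E}_n[\ell(W,\tilde r)] - \bar\ell(\tilde r)|$; condition (1) bounds the infimum from below by $\bar\ell(r)+c(K)$, and condition (2) shows the supremum is $o_p(1)$. For the upper bound on $\operatorname{AIC}(K_n^*)$, condition (3) provides a sequence $K_n^*\to\infty$ along which $\|\hat r_{K_n^*}-r\|_\infty=o_p(1)$, and feeding any sequence $\eta_n\downarrow 0$ that dominates this rate into the local stochastic equicontinuity in condition (4) yields $\mathbb{E}_n[\ell(W,\hat r_{K_n^*})] = \bar\ell(r) + o_p(1)$. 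The penalty difference $C_n(K-K_n^*)/n$ is controlled by condition (5), because $|C_n(K-K_n^*)|/n \le C_n K/n + C_n K_n^*/n = o(1)$ for any fixed $K$. Subtracting the bounds leaves $2c(K)-o_p(1)-o(1)$, which is positive w.p.a.\ 1.

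The main subtlety is reconciling two different modes of stochastic control on distinct scales: a uniform law of large numbers over the entire fixed-dimensional sieve $\mathcal{R}_K$ (condition (2)) to handle the small models, versus a local stochastic equicontinuity in a shrinking sup-norm neighborhood of $r$ (condition (4)) to handle the diverging sieve $\mathcal{R}_{K_n^*}$ where a uniform law would typically fail. Once both tools are in place the rest is a mechanical bias--variance accounting; the key conceptual point to flag in the proof is that condition (5) is strictly necessary---it pins the AIC/BIC penalty below the irreducible approximation bias $c(K)$ of any under-specified sieve. The argument parallels the sieve model selection consistency results in \citet{Chen_Liao_2014_JoE}.
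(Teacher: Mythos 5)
Your proposal is correct and takes essentially the same route as the paper's proof: bound the criterion at any fixed $K$ from below using conditions (1)--(2), bound it at $K_n^*$ from above using (3)--(4), control the penalty gap with (5), and conclude that every fixed $K$ is dominated by $K_n^*$ with probability approaching one. Your explicit union bound over the finitely many $K \le K^*$ (and the remark that only $\hat r_K \in \mathcal{R}_K$, not its minimizing property, is needed for the lower bound) merely makes explicit steps the paper leaves implicit.
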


\begin{proof}[Proof of Lemma \ref{lm:AIC}]
For simplicity, we suppress the subscripts $g$ and $g'$ and denote the estimand and its estimator as $r$ and $\hat{r}_K$, where the index $K$ signifies the sieve dimension. Define the objective function in (\ref{eqn:series_ratio_pscore}) as $I_{n}(K)$. For any fixed $K^*$, using conditions (2) and (5), we have
    \begin{align*}
        I_n(K^*) & = 2 \Bar{\ell}(\hat{r}_{K^*}) + \frac{2}{n} \sum_{i=1}^n (\ell(W_i,\hat{r}_{K^*})-\Bar{\ell}(\hat{r}_{K^*})) + C_nK^*/n \\
        & = 2 \Bar{\ell}(\hat{r}_{K^*}) + o_p(1).
    \end{align*}
    On the other hand, using conditions (2)-(4) for the sequence $K_n^*$, we have 
    \begin{align*}
        I_n(K_n^*) & = 2 \Bar{\ell}(r) + \frac{2}{n} \sum_{i=1}^n (\ell(W_i,\hat{r}_{K_n^*})-\Bar{\ell}(r)) + C_nK^*/n \\
        & = 2 \Bar{\ell}(r) + o_p(1).
    \end{align*}
    Combining the above two results, we have
    \begin{align*}
        I_n(K^*) - I_n(K_n^*) = 2 (\Bar{\ell}(\hat{r}_{K^*}) - \Bar{\ell}(r)) + o_p(1),
    \end{align*}
    which, together with condition (1), implies that $I_n(K^*) - I_n(K_n^*) >0$ with probability approaching one. This completes the proof.
\end{proof}

\begin{proof}[Proof of Lemma \ref{lm:LATT-id}]
    Define the conditional LATT parameter as
    \begin{align*}
        \textit{CLATT}(g,t,X) \equiv \mathbb{E}[Y_t(1) - Y_t(0)|G^{\text{IV}}=g,D_t(g) > D_t(\infty),X].
    \end{align*}
    We first want to establish that 
    \begin{align*}
        \textit{CLATT}(g,t,X) = \frac{\mathbb{E}[Y_t - Y_{g-1} | G^{\text{IV}}=g, X] - \mathbb{E}[Y_t - Y_{g-1} | G^{\text{IV}}=\infty, X]}{\mathbb{E}[D_t - D_{g-1} | G^{\text{IV}}=g, X] - \mathbb{E}[D_t - D_{g-1} | G^{\text{IV}}=\infty, X]}.
    \end{align*}
    For the numerator, notice that 
    \begin{align*}
        & \mathbb{E}[Y_t - Y_{g-1} | G^{\text{IV}}=g, X] - \mathbb{E}[Y_t - Y_{g-1} | G^{\text{IV}}=\infty, X] \\
        = & \mathbb{E}[Y_t(D_t(g)) - Y_{g-1}(D_t(g)) | G^{\text{IV}}=g, X] - \mathbb{E}[Y_t(D_t(\infty)) - Y_{g-1}(D_t(\infty)) | G^{\text{IV}}=\infty, X] \\
        = & \mathbb{E}[Y_t(D_t(g)) - Y_{t}(D_t(\infty)) | G^{\text{IV}}=g, X] \\
        & + \mathbb{E}[Y_t(D_t(\infty)) - Y_{g-1}(D_t(g))] - \mathbb{E}[Y_t(D_t(\infty)) - Y_{g-1}(D_t(\infty)) | G^{\text{IV}}=\infty, X] \\
        = & \mathbb{E}[Y_t(D_t(g)) - Y_{t}(D_t(\infty)) | G^{\text{IV}}=g, X],
    \end{align*}
    where the last equality follows from no anticipation and parallel trends in the outcome. For the right-hand side, notice that
    \begin{align*}
        \mathbb{E}[Y_t(D_t(g))| G^{\text{IV}}=g, X] & = \mathbb{E}[D_t(g)(Y_t(1) - Y_t(0))| G^{\text{IV}}=g, X] + \mathbb{E}[Y_t(0)| G^{\text{IV}}=g, X], \\
        \mathbb{E}[Y_{t}(D_t(\infty))| G^{\text{IV}}=g, X] & = \mathbb{E}[D_t(\infty)(Y_{t}(1) - Y_{t}(0))| G^{\text{IV}}=g, X] + \mathbb{E}[Y_{t}(0)| G^{\text{IV}}=g, X],
    \end{align*}
    and therefore,
    \begin{align*}
        \mathbb{E}[Y_t(D_t(g)) - Y_{t}(D_t(\infty)) | G^{\text{IV}}=g, X] & = \mathbb{E}[(D_t(g)-D_t(\infty))(Y_{t}(1) - Y_{t}(0))| G^{\text{IV}}=g, X]\\
        & = \mathbb{E}[Y_{t}(1) - Y_{t}(0)| G^{\text{IV}}=g,D_t(g)-D_t(\infty)=1, X] \\
        & \times \mathbb{P}(D_t(g)-D_t(\infty)=1|G^{\text{IV}}=g,X) \\
        & = \textit{CLATT}(g,t,X) \mathbb{P}(D_t(g)-D_t(\infty)=1|G^{\text{IV}}=g,X),
    \end{align*}
    where the second line follows from monotonicity.
    Similarly, we can show that 
    \begin{align*}
        \mathbb{E}[D_t - D_{g-1} | G^{\text{IV}}=g, X] - \mathbb{E}[D_t - D_{g-1} | G^{\text{IV}}=\infty, X] & = \mathbb{E}[D_t(g)-D_t(\infty)| G^{\text{IV}}=g, X] \\
        & = \mathbb{P}(D_t(g)-D_t(\infty)=1|G^{\text{IV}}=g,X).
    \end{align*}
    This gives the identification of CLATT. To identify the unconditional LATT, we need to invoke the Bayes rule.
    \begin{align*}
        &\textit{LATT}(g,t) \\
        = & \mathbb{E}[\textit{CLATT}(g,t,X)|G^{\text{IV}}=g,D_t(g) > D_t(\infty)] \\
        = & \int \frac{\mathbb{E}[Y_t - Y_{g-1} | G^{\text{IV}}=g, x] - \mathbb{E}[Y_t - Y_{g-1} | G^{\text{IV}}=\infty, x]}{\mathbb{E}[D_t - D_{g-1} | G^{\text{IV}}=g, x] - \mathbb{E}[D_t - D_{g-1} | G^{\text{IV}}=\infty, x]} d F_{X|G^{\text{IV}}=g,D_t(g) > D_t(\infty)}(x) \\
        = & \int \frac{\mathbb{E}[Y_t - Y_{g-1} | G^{\text{IV}}=g, x] - \mathbb{E}[Y_t - Y_{g-1} | G^{\text{IV}}=\infty, x]}{\mathbb{E}[D_t - D_{g-1} | G^{\text{IV}}=g, x] - \mathbb{E}[D_t - D_{g-1} | G^{\text{IV}}=\infty, x]} \\
        & \times \frac{\mathbb{P}(D_t(g) > D_t(\infty)|G^{\text{IV}}=g,X=x) \mathbb{P}(G^{\text{IV}}=g|X=x)dF_X(x)}{\mathbb{P}(G^{\text{IV}}=g,D_t(g) > D_t(\infty))} \\
        = & \frac{\mathbb{E}\big[ (\mathbb{E}[Y_t - Y_{g-1} | G^{\text{IV}}=g, X] - \mathbb{E}[Y_t - Y_{g-1} | G^{\text{IV}}=\infty, X]) G^{\text{IV}}_g \big]}{\mathbb{P}(G^{\text{IV}}=g,D_t(g) > D_t(\infty))}.
    \end{align*}
    Lastly, the denominator on the right-hand side can be written as
    \begin{align*}
        \mathbb{P}(G^{\text{IV}}=g,D_t(g) > D_t(\infty)) & = \mathbb{E}[\mathbb{E}[D_t(g) - D_t(\infty)|G^{\text{IV}}=g,X] \mathbb{P}(G^{\text{IV}}=g|X)] \\
        & = \mathbb{E}\big[(\mathbb{E}[D_t - D_{g-1} | G^{\text{IV}}=g, X] - \mathbb{E}[D_t - D_{g-1} | G^{\text{IV}}=\infty, X]) G^{\text{IV}}_g\big].
    \end{align*}
    This completes the proof.
\end{proof}

\begin{proof}[Proof of Lemma \ref{lm:3}]
    Similar to the proof of Lemmas \ref{lm:1} and \ref{lm:2}, we want to construct a joint distribution of the potential outcomes and treatments that is consistent with the observed outcomes and treatments and satisfies the identification assumptions. Recall the notations $Y$ and $\Delta Y$ for the observed outcomes as defined in the proof of Lemmas \ref{lm:1} and \ref{lm:2}. Similarly, define $D$ and $\Delta D$ for the treatment. In the same way, define the vectors of potential variables $Y(g)$, $\Delta Y(g)$, $D(g)$, and $\Delta D(g)$, where the potential outcome $Y_t(g)$ is now shorthand for $Y_t(D_t(g))$. Let $(D,Y,G^{\text{IV}})$ denote the observed variables that already satisfy random sampling, overlap, and monotonicity. For each given $g$, we construct the potential outcomes and treatments as follows:
    \begin{align*}
        & \text{the vectors } (D(g'),Y(g')), g' \in \mathcal{G}^{\text{IV}} \text{ are jointly independent conditional on $G^{\text{IV}}=g$,} \\
        & \text{for $g'=g$: } (D(g),Y(g)) | \{G=g\} \overset{d}{=} (D,Y) | \{G=g\} \text{, and} \\
        & \text{for $g'\ne g$: } (D_1(g'),Y_1(g')) | \{G=g\} \overset{d}{=} (D_1(g),Y_1(g))| \{G=g\}, \\
        & \quad (\Delta D(g'),\Delta Y(g')) | \{G=g\} \overset{d}{=} (\Delta D,\Delta Y) | \{G=\infty\}, (D_1(g'),Y_1(g')) \perp (\Delta D(g'),\Delta Y(g')) | \{G=g\}.
    \end{align*}
    This induces the observed outcomes and treatments through constructions. The no-anticipation and parallel trends conditions can be verified similarly to the proof of Lemma \ref{lm:2}.

\end{proof}

\begin{proof}[Proof of Corollary \ref{cor:iv}]
    It is well-known that shape restrictions such as monotonicity do not affect the calculation of the semiparametric efficiency bound. Thus, we can focus on the moment equalities to derive the bound.
    Similar to the proof of Theorems \ref{thm:efficiency} and \ref{thm:efficiency-fixed-g}, we can separately derive the efficient influence function for $LATT(g,t)_{num}$ and $LATT(g,t)_{den}$ and then combine them to obtain the efficient influence function for $LATT(g,t)_{num}/LATT(g,t)_{den}$. For a single $LATT(g,t)_{num}$, the model becomes the following
    \begin{align*}
        \mathbb{E}[G^{\text{IV}}_g (LATT(g,t)_{num} - h_1(g,t,X))] & = 0, \\
    \mathbb{E}\left[ h_1(g,t,X) - \frac{G^{\text{IV}}_g(Y_{t} - Y_{g-1})}{p^{\text{IV}}_g(X)} + \frac{G^{\text{IV}}_{\infty}(Y_{t} - Y_{g-1})}{p^{\text{IV}}_{\infty}(X)} \Big| X \right] & = 0, \\
    \mathbb{E}\left[\frac{G^{\text{IV}}_{g}(Y_{t'} - Y_1)}{p^{\text{IV}}_{g}(X)} - \frac{G^{\text{IV}}_\infty(Y_{t'} - Y_1)}{p^{\text{IV}}_\infty(X)}\Big| X \right] & = 0, \text{for all } 2 \leq t' \leq g-1, \\
    \mathbb{E}\left[\frac{G^{\text{IV}}_{g}(D_{t'} - D_1)}{p^{\text{IV}}_{g}(X)} - \frac{G^{\text{IV}}_\infty(D_{t'} - D_1)}{p^{\text{IV}}_\infty(X)}\Big| X \right] & = 0, \text{for all } 2 \leq t' \leq g-1, \\
        \mathbb{E}[G^{\text{IV}}_g - p^{\text{IV}}_g(X)|X]&=0.
    \end{align*}
    We can rotate the conditional moment restrictions as in the proof of Theorem \ref{thm:efficiency-fixed-g} and obtain that
    \begin{align*}
        \mathbb{E}\left[ h_1(g,t,X) - \frac{G^{\text{IV}}_g(Y_{t} - Y_{1})}{p^{\text{IV}}_g(X)} + \frac{G^{\text{IV}}_{\infty}(Y_{t} - Y_{1})}{p^{\text{IV}}_{\infty}(X)} \Big| X \right] & = 0, \\
        \cdots \\
        \mathbb{E}\left[ h_1(g,t,X) - \frac{G^{\text{IV}}_g(Y_{t} - Y_{g-1})}{p^{\text{IV}}_g(X)} + \frac{G^{\text{IV}}_{\infty}(Y_{t} - Y_{g-1})}{p^{\text{IV}}_{\infty}(X)} \Big| X \right] & = 0, \\
        \mathbb{E}\left[ h_1(g,t,X) - \frac{G^{\text{IV}}_g(Y_{t} - Y_{1} + D_{2} - D_1)}{p^{\text{IV}}_g(X)} + \frac{G^{\text{IV}}_{\infty}(Y_{t} - Y_{1} + D_{2} - D_1)}{p^{\text{IV}}_{\infty}(X)} \Big| X \right] & = 0, \\
        \cdots \\
        \mathbb{E}\left[ h_1(g,t,X) - \frac{G^{\text{IV}}_g(Y_{t} - Y_{1} + D_{g-1} - D_1)}{p^{\text{IV}}_g(X)} + \frac{G^{\text{IV}}_{\infty}(Y_{t} - Y_{1} + D_{g-1} - D_1)}{p^{\text{IV}}_{\infty}(X)} \Big| X \right] & = 0, \\
        \mathbb{E}[G^{\text{IV}}_g - p^{\text{IV}}_g(X)|X]&=0.
    \end{align*}
    The structure of this set of moment restrictions aligns with that in Theorem \ref{thm:efficiency-fixed-g}. The term \(Y_1 - (D_{t'} - D_1)\) can be treated as an outcome in a baseline period, enabling the derivation of the efficient influence function for \(LATT(g,t)_{num}\) as \(\mathbb{EIF}^{latt(g,t),num}/\pi^{\text{IV}}_g\), where \(\pi^{\text{IV}}_g \equiv \mathbb{P}(G^{\text{IV}}=g)\). Similarly, the efficient influence function for \(LATT(g,t)_{den}\) is obtained as \(\mathbb{EIF}^{latt(g,t),den}/\pi^{\text{IV}}_g\). The efficient influence function for the LATT parameter then follows from the application of the Delta method.
\end{proof}

\begin{proof}[Proof of Theorem \ref{thm:test}]
    This is the well-known Hausman test, in which $\aCov(\widehat{ES}-\widecheck{ES})$ can be consistently estimated by $\widehat{\aCov}(\widecheck{ES}) - \widehat{\aCov}(\widehat{ES})$ because $\widehat{ES}$ achieves the semiparametric efficiency bound. The fact that this Hausman test has nontrivial power against all local alternatives is based on Theorem 3.3 and Remark 3.4 in \citet{Chen_Santos_2018_ECMA}.
\end{proof}

\bibliography{Bibliography.bib}

\end{document}